\documentclass[11pt]{article}

\newif\ifcomments
\commentstrue

\usepackage{geometry}
\geometry{verbose,tmargin=1in,bmargin=1in,lmargin=1in,rmargin=1in}

\usepackage[ruled,vlined,linesnumbered]{algorithm2e}
\usepackage{amsmath}
\usepackage{bm}
\usepackage{enumerate}
\usepackage{float}
\usepackage{graphicx}
\usepackage[utf8]{inputenc}
\usepackage[export]{adjustbox}
\usepackage{mathtools}
\usepackage[english]{babel}
\usepackage{amsthm}
\usepackage{amssymb}
\usepackage{enumitem}
\usepackage{fontawesome5}

\usepackage{thm-restate,color,xspace}
\usepackage{comment}
\usepackage{thmtools}
\usepackage{xcolor}
\usepackage{nameref}
\usepackage[T1]{fontenc}

\definecolor{ForestGreen}{rgb}{0.1333,0.5451,0.1333}
\definecolor{DarkRed}{rgb}{0.65,0,0}
\definecolor{Red}{rgb}{1,0,0}

\usepackage[linktocpage=true,
pagebackref=true,colorlinks,
linkcolor=DarkRed,citecolor=ForestGreen,
bookmarks,bookmarksopen,bookmarksnumbered]
{hyperref}
\usepackage[nameinlink]{cleveref}

\declaretheorem[numberwithin=section]{theorem}
\declaretheorem[numberlike=theorem]{lemma}

\declaretheorem[numberlike=theorem]{corollary}

\declaretheorem[numberlike=theorem]{claim}

\declaretheorem[numberlike=theorem,style=definition]{definition}

\DeclareMathAlphabet{\mathdutchcal}{U}{dutchcal}{m}{n}
\newcommand{\Aextend}{\ensuremath{A^{\mathsf{ext}}}}
\newcommand{\Bextend}{\ensuremath{B^{\mathsf{ext}}}}

\newcommand{\mincut}{\mathrm{mincut}}
\newcommand{\enumcuts}{\textsc{EnumerateCuts}\xspace}
\newcommand{\enumcutshelp}{\textsc{EnumerateCutsHelp}\xspace}

\newcommand{\locmincut}{\textsc{LocalMinCut}\xspace}
\newcommand{\phisparsify}{\textsc{$\phi$-Sparsify}\xspace}
\newcommand{\algslow}{\textsc{SparsifySlow}\xspace}
\newcommand{\algfast}{\textsc{SparsifyFast}\xspace}
\newcommand{\expdec}{\textsc{ExpanderDecompose}\xspace}
\newcommand{\polylog}{\mathrm{polylog}}

\let\oldnl\nl%
\newcommand{\nonl}{\renewcommand{\nl}{\let\nl\oldnl}}%

\SetKwRepeat{Do}{do}{while}%
\SetKwInOut{effect}{Effect}
\SetKw{Continue}{continue}

\graphicspath{ {./images/} }
\newcommand{\T}{\mathcal{T}}
\renewcommand{\S}{\mathcal{S}}
\newcommand{\C}{\mathcal{C}}

\newcommand{\vol}{\mathrm{vol}}
\global\long\def\Otil{\tilde{O}}
\global\long\def\poly{\mathrm{poly}}
\newcommand{\J}{\mathcal{J}}
\newcommand{\Ginc}{G^{\textsf{inc}}}
\newcommand{\Vinc}{V^{\textsf{inc}}}
\newcommand{\Einc}{E^{\textsf{inc}}}
\newcommand{\Vsplit}{V_{\textsf{split}}}
\newcommand{\Asplit}{A_{\textsf{split}}}
\newcommand{\Bsplit}{B_{\textsf{split}}}
\newcommand{\Csplit}{C_{\textsf{split}}}
\newcommand{\Tsplit}{\T_{\textsf{split}}}

\newcommand{\Vaux}{V^{\mathsf{aux}}}
\newcommand{\Eaux}{E^{\mathsf{aux}}}
\newcommand{\Gaux}{G^{\mathsf{aux}}}
\newcommand{\aux}{\mathsf{aux}}

\newcommand{\amcut}{$A$-minimal $(A,B)$-mincut\xspace}
\newcommand{\tmcut}{$A$-minimal $(A,\T\setminus A)$-mincut\xspace}
\newcommand{\tpar}{$(A,\T\setminus A)$}

\setlength{\parskip}{6pt}

\SetCommentSty{mycommfont}

\renewcommand{\subparagraph}{\paragraph}

\title{Vertex Sparsifiers for Hyperedge Connectivity}
\author{Han Jiang \and 
Shang-En Haung\and 
Thatchaphol Saranurak\thanks{University of Michigan} \and 
Tian Zhang}
\date{}

\begin{document}

\maketitle

\begin{abstract}
   
  Recently, Chalermsook et al.~{[}SODA'21{]} introduces a notion of \emph{vertex
  sparsifiers for $c$-edge connectivity, }which has found applications
  in parameterized algorithms for network design and also led to exciting
  dynamic algorithms for $c$-edge st-connectivity {[}Jin and Sun FOCS'21{]}. 
  
  We study a natural extension called \emph{vertex sparsifiers for $c$-hyperedge
  connectivity} and construct a sparsifier whose size matches the state-of-the-art
  for normal graphs. More specifically, we show that, given a hypergraph
  $G=(V,E)$ with $n$ vertices and $m$ hyperedges with $k$ terminal
  vertices and a parameter $c$, there exists a hypergraph $H$ containing
  only $O(kc^{3})$ hyperedges that preserves all minimum cuts (up to
  value $c$) between all subset of terminals. This matches the best
  bound of $O(kc^{3})$ edges for normal graphs by [Liu'20]. Moreover,
  $H$ can be constructed in almost-linear $O(p^{1+o(1)} + n(rc\log n)^{O(rc)}\log m)$ time where $r=\max_{e\in E}|e|$ is
  the rank of $G$ and $p=\sum_{e\in E}|e|$ is the total size of $G$, or in $\poly(m, n)$ time if we slightly relax the
  size to $O(kc^{3}\log^{1.5}(kc))$ hyperedges. 
  \end{abstract} 

\section{Introduction}
Graph sparsification has played a central role in graph algorithm research in the last two decades. 
Prominent examples include
spanners \cite{althofer1993sparse}, cut sparsifiers \cite{benczur2015randomized}, and spectral
sparsifiers \cite{spielman2011spectral}.
Recently, there has been significant effort in generalizing the
graph sparsification results to hypergraphs. For cut sparsifiers,
Kogan and Krauthgamer \cite{kogan2015sketching} generalized the Bencz{\'{u}}r
and Karger's cut sparsifiers~\cite{benczur2015randomized} by showing that, given any hypergraph
$G=(V,E)$ with $n$ vertices, there is a $(1+\epsilon)$-approximate
cut sparsifier $H$ containing $\Otil(nr/\epsilon^{2})$ hyperedges
where $r=\max_{e\in E}|e|$ denotes the \emph{rank} of the hypergraph.
After some follow-up work \cite{chekuri2018minimum,bansal2019new},
Chen, Khanna, and Nagda \cite{chen2020near} finally improved the
sparsifier size to $\Otil(n/\epsilon^{2})$ hyperedges, matching the
optimal bound for normal graphs. Another beautiful line of work generalizes
Speilman and Teng's spectral sparsifiers \cite{spielman2011spectral} to hypergraphs \cite{bansal2019new,soma2019spectral,kapralov2021towards}
and very recently results in spectral sparsifiers with $\Otil(n/\poly(\epsilon))$
hyperedges \cite{kapralov2022spectral}. We also mention that the
classical sparse connectivity certificates by Nagamochi and Ibaraki
\cite{nagamochi1992linear} were also generalized to hypergraphs by
Chekuri and Xu \cite{chekuri2018minimum}.

This paper studies a graph sparsification problem called \emph{vertex
sparsifiers for $c$-edge connectivity} recently introduced
by Chalermsook et al.~\cite{chalermsook2021vertex}. It is closely related to the vertex sparsifiers for edge cuts \cite{krauthgamer2013mimicking,khan2014mimicking} and vertex cuts \cite{KratschW20}. 
In this problem, we are given
an unweighted undirected graph $G=(V,E)$ and a set of terminals $\T\subseteq V$.
For any disjoint subsets $A,B\subseteq\T$, let $\mincut_{G}(A,B)$
denote the size of a minimum (edge-)cut that disconnects $A$ and
$B$. Now, a graph $H=(V_{H},E_{H})$ with $\T\subseteq V_{H}$ is
a \emph{$(\T,c)$-sparsifier} %
of $G$ if for any disjoint subsets $A,B\subseteq\T$, $\min\{c,\mincut_{G}(A,B)\}=\min\{c,\mincut_{H}(A,B)\}$.
Basically, $H$ preserves all minimum cut structures between the terminals
$\T$ up to the value $c$. This notion of graph sparsifiers has found
interesting applications in offline dynamic algorithms and network
design problems \cite{chalermsook2021vertex}. Moreover, the very
recent breakthrough on dynamic $c$-edge st-connectivity by Jin and
Sun \cite{jin2022fully} is also crucially based on dynamic algorithms
for maintaining $(\T,c)$-sparsifiers. 

In the original paper by \cite{chalermsook2021vertex}, they showed
that, for any graph $G=(V,E)$ and terminal set $\T$ of size $k$,
there exists a $(\T,c)$-sparsifier containing $O(kc^{4})$ edges (which can be constructed in $O(m(c\log n)^{O(c)})$ time) and also showed fast algorithms for constructing $(\T,c)$-sparsifiers
of size $k\cdot O(c)^{2c}$ in $mc^{O(c)}\log^{O(1)}{n}$ time. Then, Liu \cite{Liu20} improved
the size bound to $O(kc^{3})$ together with polynomial-time algorithms
(no exponential dependency on $c$) for constructing $(\T,c)$-sparsifiers
with $O(kc^{3}\log^{1.5}n)$ edges. 

A natural question is then whether these results can be extended to hypergraphs. The notion of $(\T,c)$-sparsifiers
itself can be naturally extended to hypergraphs by allowing $G$
and $H$ to be hypergraphs and letting $\mincut_{G}(A,B)$ denote
the value of the minimum hyperedge-cut instead. However, it is conceivable
that there might not exist a $(\T,c)$-sparsifier with $\poly(k, c)$.
This bound might require bad
dependency on the rank $r$, for example. 

In this paper, we show that the state-of-the-art for normal graphs indeed extend to hypergraphs and we can even slightly improve the bounds:
\begin{theorem}\label{thm:main}
Let $G=(V,E)$ be a hypergraph with $n$ vertices, $m$ hyperedges, rank $r$ and total size $p$.
Let $\T\subseteq V$ be the set of $k$ terminals. There are algorithms for computing the following:
\begin{enumerate}
\item a $(\T,c)$-sparsifier $H$ of $G$ with $O(kc^{3})$ hyperedges in $O(p^{1+o(1)} + n(rc\log n)^{O(rc)}\log m)$ time, which is almost-linear in the input size when both $r$ and $c=O(1)$, and
\item a $(\T,c)$-sparsifier $H$ of $G$ with $O(kc^{3}\log^{1.5}(kc))$ hyperedges
in $\poly(m, n)$ time.
\end{enumerate}
\end{theorem}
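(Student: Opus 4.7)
The plan is to generalize the framework of Chalermsook et al.~and Liu from ordinary graphs to hypergraphs. In the graph setting, the sparsifier is obtained by iteratively identifying a region of non-terminal vertices whose internal ``local'' connectivity exceeds $c$, and then contracting that region: since every cut of size at most $c$ between terminal subsets can be uncrossed to avoid cutting through such a region, this contraction preserves all terminal mincuts up to value $c$. The remaining vertex set is then bounded by a charging argument that exploits submodularity of the cut function, yielding $O(kc^3)$ non-terminal vertices and correspondingly few incident hyperedges after a final sparse-connectivity certificate.

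First I would establish the hyperedge versions of the two structural pillars underlying this framework: (i)~submodularity and posimodularity of the hyperedge cut function, which are known but need to be carefully invoked for the ``uncrossing at threshold $c$'' statements, and (ii)~a local contractibility lemma saying that if $S\subseteq V\setminus \T$ has hyperedge-connectivity to the rest of $G$ larger than $c$ and is internally $c$-well-connected, then replacing $S$ by one super-vertex (with the incident hyperedges suitably merged) preserves $\min\{c,\mincut_G(A,B)\}$ for all disjoint $A,B\subseteq\T$. The argument carries over from the graph case once we verify that each cutting hyperedge is counted correctly after uncrossing; the key new observation is that a hyperedge participates in a cut based on its \emph{vertex set} rather than two endpoints, so the uncrossing must distribute hyperedges by their intersection pattern with $S$ rather than by a single crossing edge.

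With these pillars in place, both parts of \Cref{thm:main} follow by instantiating the outer loop. For part~(1), the $p^{1+o(1)}$ term comes from a hypergraph expander-decomposition preprocessing that globally identifies large contractible regions, and the $n(rc\log n)^{O(rc)}$ term comes from running, at each remaining boundary vertex, a local hyperedge-mincut enumeration in the style of \locmincut/\locclosestcut; the $(rc)^{O(rc)}$ factor is essentially the number of cuts of size at most $c$ reachable inside a local ball whose hyperedges have rank up to $r$. For part~(2), we trade the local exploration for a polynomial-time global enumeration of approximate terminal mincuts, adapting Liu's random-sampling argument; the extra $\log^{1.5}(kc)$ factor in the size is the standard overhead of a one-shot sampling-based certificate as opposed to an iterative refinement.

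The main obstacle, I expect, is the uncrossing argument for hyperedges: unlike an edge, a single hyperedge can interact with two overlapping partitions in more than binary ways, so we must show that after uncrossing each hyperedge's contribution does not exceed its original contribution to either cut. A secondary obstacle is eliminating the rank $r$ from the \emph{size} bound $O(kc^3)$, since intermediate stages naturally incur an $r$ factor (each hyperedge incident to a super-vertex contributes rank-many endpoints); we will need a final resparsification of the super-vertex boundaries, plausibly via the Chekuri--Xu hypergraph $c$-connectivity certificate, to drive the hyperedge count back down to $O(kc^3)$ independent of $r$.
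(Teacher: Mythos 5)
Your proposal gestures at the right high-level architecture (divide-and-conquer, expander decomposition for the fast algorithm, local cut enumeration with an $(rc)^{O(rc)}$ factor), but the two load-bearing steps are missing or replaced by arguments that do not work. First, the size bound. You propose contracting non-terminal regions of internal connectivity $>c$ and bounding what remains by ``a charging argument that exploits submodularity.'' That is not how $O(kc^3)$ is obtained, and I do not see how it could be: submodularity/uncrossing alone does not bound the number of hyperedges that are forced to survive. The paper first reduces to degree-$1$ terminals (costing a factor $c^2$, whence $kc^3$ from a $|\T|c^2$ bound), recurses on any cut of value $\le c$ with $\ge 5c$ terminals on each side --- crucially adding only \emph{two anchor vertices} per separated half-hyperedge, which is exactly what keeps the rank $r$ out of the terminal count and hence out of the final size --- and in the $(5c,c)$-edge-unbreakable base case invokes a Kratsch--Wahlstr\"om-style representative-set bound for a direct sum of a uniform matroid and two gammoids built on a split incidence graph, giving $O(|\T|c^2)$ essential hyperedges. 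Your alternative fix for the rank dependence (a final Chekuri--Xu certificate) attacks the wrong place: the $r$ blow-up occurs in the \emph{terminal count of the recursive subproblems}, not in a final resparsification, so it compounds multiplicatively through the recursion before any certificate could be applied.

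Second, correctness of the fast algorithm. Identifying which hyperedges must be kept inside an expander cannot be done against the locally enumerable cuts alone: local/DFS-style enumeration only finds \emph{connected} cuts, and a hyperedge can lie in every connected $(A,\T\setminus A)$-mincut while some disconnected mincut avoids it, so it would be wrongly declared essential (this is precisely the bug in Chalermsook et al.\ that the paper fixes). The repair requires restricting attention to \emph{useful} terminal partitions --- those all of whose mincuts are connected --- proving that every essential hyperedge is witnessed by some useful partition, and certifying usefulness efficiently via the $A$-minimal mincut computed by a local mincut routine. Your proposal contains none of this, so as written the algorithmic part would output a sparsifier far larger than $O(kc^3)$. (A minor further discrepancy: Part~(2) is obtained via an $O(\sqrt{\log(kc)})$-approximate sparsest \emph{vertex} cut solver on the incidence graph, which either certifies $(c\log^{1.5}(kc),c)$-edge-unbreakability or yields a cut to recurse on; it is not a random-sampling certificate.)
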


The first result matches the best known
bound of $O(kc^{3})$ edges for normal graphs \cite{Liu20}. When $r=O(1)$, the first time bound slightly improves the $O(m(c\log n)^{O(c)})$ bound of \cite{chalermsook2021vertex} for normal graphs. 
The second result removes the exponential dependency on $r$ and $c$ after relaxing the size by a $\log^{1.5}(kc)$ factor. The number of hyperedges in our sparsifier is completely independent from $n$, while the polynomial time algorithm by Liu \cite{Liu20} gives the size of $O(kc^{3}\log^{1.5}n)$. 
So this implies the first polynomial time construction of sparsifiers of size near-linear in $k$ and independent of $n$, even for normal graphs.

\subparagraph{Open Problems.}
Can we construct vertex sparsifiers for $c$-hyperedge connectivity
of $k\cdot\poly(c)$ size in near-linear time even when the rank is
unbounded? This is a prerequisite to near-linear
time algorithms for computing vertex sparsifiers for $c$-\emph{vertex
connectivity} of $k\cdot\poly(c)$ size. Such a result might
lead to dynamic $c$-vertex st-connectivity algorithm similar to the
previous development where a near-linear time construction of vertex
sparsifiers for $c$-edge connectivity leads to a dynamic algorithm
for $c$-edge st-connectivity \cite{jin2022fully}. As dynamic $c$-vertex
st-connectivity is one of the major open problems in dynamic graph
algorithms (known solutions only works for very small $c\le3$ \cite{eppstein1997sparsification,holm2001poly,peng2019optimal}), we view
this work as a stepping stone towards this goal.

\subsection{Technical Challenges}

There are two main obstacles that prevent us extending the results of \cite{Liu20,chalermsook2021vertex} directly from normal graphs to hypergraphs.
First, if we follow the divide and conquer framework of Chalermsook et al.~\cite{chalermsook2021vertex} in a straightforward way,
then we would end up with a much larger $(\T, c)$-sparsifier with $O(|\T|(rc)^3)$ hyperedges. This is because, in \cite{chalermsook2021vertex}, all vertices incident to the boundary edges are declared as new terminal vertices in the recursion.
However in our case, each hyperedge may contain $r$ vertices and this yields the dependency of $r$. 
To handle this issue, we instead introduce only two \emph{anchor vertices} for each boundary hyperedge. Our divide and conquer framework requires slightly more careful analysis, but this naturally gives a $(\T, c)$-sparsifier with $O(|\T|c^3)$ hyperedges.

The second obstacle is the near-linear time algorithm, Part (1) of \Cref{thm:main}. 
Chalermsook et al.~\cite{chalermsook2021vertex}  introduced \emph{auxiliary graphs} and apply the \phisparsify procedure on it to identify all \emph{essential} hyperedges, which roughly are hyperedges that will be kept in the sparsifier. 
However, there is a subtle small gap in \cite{chalermsook2021vertex}: their \phisparsify procedure could erroneously identify non-essential hyperedges as essential hyperedges.
This is explained in more detail in \Cref{fig:aux-graph-counterexample}. %
This bug results in a much larger $(\T, c)$-sparsifier. %
In this paper we fix the bug by (1) introducing a notion of \emph{useful} partitions of the terminal set and (2) providing an efficient algorithm that discards all non-useful partitions from the auxiliary graph.
Then, we show that, after our modification, this approach indeed gives a small $(\T, c)$-sparsifier as desired.

\subsection{Organization}

In \Cref{sec:preliminary} we review some basic definitions of hypergraphs. In \Cref{sec:structural-properties} we define contraction based $(\T, c)$-sparsifiers and introduce the divide and conquer framework.
In \Cref{subsection:Existence} we show the existence of a $(\T, c)$-sparsifier with $O(|\T|c^3)$ hyperedges.
In \Cref{subsection:Algorithm} we give a near-linear-time algorithm that computes a $(\T, c)$-sparsifier with  $O(|\T|c^3)$ hyperedges, proving Part (1) of \Cref{thm:main}.
At the end we prove the Part (2) of \Cref{thm:main} in \Cref{sec:part-2-of-main}.

\section{Preliminary}\label{sec:preliminary}

Let $G=(V, E)$ be a hypergraph. $V$ is the set of vertices and $E$ is a multiset of hyperedges with each hyperedge $e$ being a subset of $V$.
The \emph{rank} $r \coloneqq \max_{e\in E}|e|$ of a hypergraph is the size of the largest hyperedge, and the \emph{total size} $p\coloneqq \sum_{e\in E}|e|$ is the sum of all edge sizes.

For any two disjoint sets of vertices $A, B\subseteq V$,
let $E_G(A, B)$ denote the set of hyperedges with at least one endpoint in $A$ and at least one endpoint in $B$.
For any set of vertices $X\subseteq V$, we denote the \emph{boundary} of $X$ of the graph $G$ by 
$\partial_G X \coloneqq E_G(X, V\setminus X)$.
If the context is clear then we will omit the graph $G$ and write $\partial X$ instead.

\subparagraph{Restrictions and Induced Sub-Hypergraphs.}
Let  $\T\subseteq V\cup E$ be a mixed multiset of vertices and hyperedges,
for any set of vertices $X\subseteq V$,
we define the \emph{restriction} of the multiset $\T$ on $X$ to be
$\T|_X = (\T\cap X) \cup \{ e\cap X\ |\ e\in (\T\cap E) \text{ and } e\cap X\neq\emptyset \}$.
The \emph{induced sub-hypergraph} $G[X]$ is then defined over the vertex set $X$ with the restriction of all hyperedges $E|_X$, that is, $G[X] \coloneqq (X, E|_X)$.

\subparagraph{Incident Edges and Vertices.}
For any set of vertices $X\subseteq V$, define $E(X)$ to be the set of all hyperedges that incident to at least one vertex in $X$.
For any set of hyperedges $Y\subseteq E$, define $V(Y) = \bigcup_{y\in Y} y$ to be the set of vertices incident to hyperedges in $Y$. Similarly, for any mixed set of vertices and hyperedges $\T\subseteq V\cup E$ we define $V(\T)=(\T\cap V)\cup V(\T\cap E)$ to be the set of all vertices that are in the set or incident to any hyperedge in the set.

\section{Structural Properties on Hypergraphs} \label{sec:structural-properties}

In this section, we explore more structural properties on hypergraphs.
In particular,
we introduce \emph{anchored separated hyperedges}, 
and describe  useful properties in a divide and conquer framework that leads to a construction of $(\T, c)$-sparsifiers.

\subsection{Cuts in Hypergraphs}

Let $u$ and $v$ be two elements in $V$.
We say that $u$ and $v$ are \emph{connected} in a hypergraph $G$, if there is a path connecting $u$ and $v$.
Let $A, B\subseteq V$ be two disjoint sets of vertices.
$A$ and $B$ are \emph{disconnected} if for any $a\in A$ and $b\in B$, $a$ and $b$ are not connected.

\begin{definition}[Cuts and Minimum Cuts]\label{def:cuts-and-minimum-cuts}
A \emph{cut} is a bipartition $(X, V\setminus X)$ of vertices. The \emph{value} of the cut is $|\partial X| = |E_G(X, V\setminus X)|$.
For any disjoint subsets $A, B \subseteq V$,
if $A\subseteq X$ and $B\subseteq (V\setminus X)$ then we say that $(X, V\setminus X)$ is an $(A, B)$-cut.
A \emph{minimum $(A,B)$-cut} or \emph{$(A, B)$-mincut} is any $(A,B)$-cut with minimum value. Its value is denoted as $\mincut_G(A,B)$. Given a parameter $c$, a \emph{$c$-thresholded $(A, B)$-mincut cut value}
is defined as 
$$\mincut^c_G(A,B) \coloneqq \min(\mincut_G(A,B), c).$$
We usually write a \emph{$c$-thresholded $(A,B)$-mincut} to emphasize that the $(A,B)$-mincut has value at most $c$.
We say that a hyperedge $e$ is \emph{involved} in a cut $(X, V\setminus X)$ if $e\in E(X, V\setminus X)$.
\end{definition}

\subsection{$(\T, c)$-Equivalency and $(\T, c)$-Sparsifiers}

Our vertex sparsifier algorithms are based on identifying a set of hyperedges and contract them.
Given a hypergraph $G=(V, E)$ and a hyperedge $e\in E$, the \emph{contracted hypergraph} $G/e$ is defined by identifying all incident vertices $V(e)$ as one vertex, and then remove $e$ itself from the graph.
For any set of terminals $T\subseteq V$, the effect of contracting an hyperedge $e$ is denoted as $T/e$.
Similarly, for any set $\hat{E}\subseteq E$, we denote $G/\hat{E}$ the hypergraph obtained from $G$ by contracting all hyperedges in $\hat{E}$ (notice that all hyperedges in $\hat{E}$ are removed after the contraction.)

\begin{definition}[$(\T, c)$-Sparsifiers]
Let $G=(V_G, E_G)$ and $H=(V_H, E_H)$ be two hypergraphs.
Let $\T\subseteq V_G$ be the set of terminals.
We say $H$ is a \emph{contraction based $(\T, c)$-sparsifier} of $G$, 
if there exists a surjective (onto) projection $\pi: V_G\to V_H$, such that
for any $e\in E_H$ there is an edge $f\in E_G$ such that $\pi(f)= \cup_{v\in f}\{\pi(v)\} =e$,
and 
for any two subsets $T_1, T_2\subseteq \T$,
\[
\mincut_G^c(T_1, T_2) = \mincut_H^c(\pi(T_1), \pi(T_2)).
\]
Furthermore, if the terminals are not affected by the projection, i.e., $\pi(\T)=\T$, then we say that $G$ and $H$ are \emph{$(\T, c)$-equivalent}.
\end{definition}

\subparagraph{Remark.}
A more general $(\T, c)$-sparsifier would allow an arbitrary mapping $\pi$ on both vertices and edges. However, we note that all $(\T, c)$-sparsifiers constructed in this paper are always contraction based.
Therefore, for the ease of the presentation
we will omit the term ``contraction based''
when we mention $(\T, c)$-sparsifiers.

For the ease of the reading, we define the following set operations that allow us to add/remove hyperedges of $G$ into a sparsifier $H$.

\begin{definition}
Let $G=(V, E)$ be a hypergraph.
For any multiset $X$ of hyperedges over the vertices $V$,
and any contraction based $(\T, c)$-sparsifier $H$ with the projection $\pi$, define
\begin{itemize}[nosep]
    \item \emph{(Adding contracted hyperedges)} $H\cup X \coloneqq H\cup \pi(X)$, and
    \item \emph{(Removing contracted hyperedges)} $H-X \coloneqq H - \pi(X)$.
\end{itemize}
\end{definition}

\subsection{$(\T, c)$-Sparsifiers from a Divide and Conquer Framework}

Another important concept to our contraction based $(\T, c)$-sparsifier construction is that we apply a divide and conquer framework to $G$.
Let $G=(V, E)$ be a hypergraph and let $(V_1, V_2)$ be a bipartition of vertices.
We note that our divide and conquer framework is slightly different than just recurse on the induced sub-hypergraphs $G[V_1]$ and $G[V_2]$. In particular, for each \emph{separated hyperedge} $e$ we add two new \emph{anchor vertices} to  $e$, ended up slightly increasing the size of the vertex set  in the next-level recursion.

\newcommand{\varv}{\ensuremath\mathdutchcal{v}}

\newcommand{\Sep}{\mathit{Sep}}
\newcommand{\anchor}{\ensuremath\mathdutchcal{v}}
\newcommand{\GSep}{G^\mathsf{sep}}
\newcommand{\CSep}{C^\mathsf{sep}}
\subparagraph{Separated Hyperedges and Anchor Vertices.} 
Let $e\in E(V_1, V_2)$ be a hyperedge across the bipartition.
The \emph{separated hyperedges} of $e$ with respect to this bipartition $(V_1, V_2)$ is the set composed of hyperedges of $e$ restricted on both $V_1$ and $V_2$.
The \emph{anchored separated hyperedges} are separated hyperedges with additional \emph{anchor vertices}:
let $e_1 = e|_{V_1}$ and $e_2=e|_{V_2}$ be the separated hyperedges of $e$, then we introduce four new anchored vertices $\anchor_{e, 1}, \anchor_{e, 2}, \anchor_{e, 3}$, and $\anchor_{e, 4}$ and define $\hat{e}_1 := e_1\cup \{\anchor_{e, 1}, \anchor_{e, 2}\}$ and $\hat{e}_2 := e_2\cup \{\anchor_{e, 3}, \anchor_{e, 4}\}$.
Let $S=E(V_1, V_2)$ be the set of crossing hyperedges, in this paper the set of anchored separated hyperedges respect to bipartition $(V_1, V_2)$ are denoted by $\Sep(S, V_1, V_2) := \{\hat{e}_1, \hat{e}_2 \ |\ e\in S\}$.

Let $A_1=\{\anchor_{e,1}, \anchor_{e,2}\ |\ e\in E(V_1, V_2)\}$ and let $A_2 = \{\anchor_{e,3}, \anchor_{e,4}\ |\ e\in E(V_1, V_2)\}$ be the set of newly introduced anchor vertices.
These anchor vertices 
will be added to the terminal set in order to correctly preserve the mincut values.
That is, the terminal sets defined for the subproblems are $\T_1:=\T|_{V_1}\cup A_1$ and $\T_2:=\T|_{V_2}\cup A_2$.
Now, we define the anchored induced sub-hypergraphs, which are useful when applying the divide and conquer framework.

\begin{definition}[Anchored Induced Sub-Hypergraphs]
Let $G=(V, E)$ be a hypergraph and $V_1\subseteq V$ be a subset of vertices.
Define $V_2=V\setminus V_1$, $\GSep = G \cup \Sep(E(V_1, V_2), V_1, V_2) - E(V_1, V_2)$, and the set of anchored vertices to be $A_1 \cup A_2$.
Then,
the \emph{anchored induced sub-hyperegraph} for $V_1$ is  defined as $\hat{G}[V_1] := \GSep|_{V_1\cup A_1}$.
\end{definition}

\paragraph{The Divide and Conquer Framework.}
The most generic divide and conquer method works as the follows.
First, a bipartition $(V_1, V_2)$ are determined.
Then, the algorithm performs recursion on the anchored induced sub-hypergraphs $\hat{G}[V_1]$ and $\hat{G}[V_2]$ with terminal sets $\T_1$ and $\T_2$ respectively.
After obtaining the $(\T_1, c)$-sparsifier and $(\T_2, c)$-sparsifier from the subproblems, the algorithm combines them by replacing the anchored separated hyperedges with the original hyperedges.

\begin{algorithm}[H]
\caption{A Divide and Conquer Framework}\label{alg:divide-and-conquer}
\DontPrintSemicolon
\SetAlgoLined
\KwIn{Hypergraph $G$, terminal set $\T$, bipartition $(V_1, V_2)$ of vertices, parameter $c$.}
\KwOut {A $(\T, c)$-sparsifier $H$ for $G$.}
(\textbf{Divide}) Construct subproblems $(\hat{G}[V_1], \T_1)$ and $(\hat{G}[V_2], \T_2)$.\\
(\textbf{Conquer}) For $i\in \{1, 2\}$, obtain $H_i$, a $(\T_i, c)$-sparsifier of $\hat{G}[V_i]$.\label{line:conquer-step}\\
(\textbf{Combine}) Return $H := H_1\cup H_2\cup E(V_1, V_2) - \Sep(E(V_1, V_2), V_1, V_2)$.\\
\end{algorithm}

We summarize the divide and conquer framework in \Cref{alg:divide-and-conquer}.
The following lemma states the correctness of the framework.

\begin{restatable}[]{lemma}{Lemmacombinesparsifiers}
\label{lem:combine-sparsifiers}
$H$ returned from \Cref{alg:divide-and-conquer} is a $(\T, c)$-sparsifier.
\end{restatable}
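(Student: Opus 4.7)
The plan is to verify, for every pair of disjoint $T_1', T_2' \subseteq \T$, that $\mincut_G^c(T_1', T_2') = \mincut_H^c(\pi(T_1'), \pi(T_2'))$, where $\pi$ is the projection inherited from the two recursive sparsifiers. I would factor the argument through the intermediate hypergraph
$$\tilde{G} := \hat{G}[V_1] \cup \hat{G}[V_2] \cup E(V_1,V_2) - \Sep(E(V_1,V_2), V_1, V_2).$$
The first, easy observation is that $\tilde{G}$ is obtained from $G$ by adjoining the anchor vertices $A_1 \cup A_2$ as isolated vertices: the two $\hat{G}[V_i]$'s together contribute every non-crossing hyperedge of $G$ plus all anchored separated hyperedges, and subtracting $\Sep(\cdot)$ while adding $E(V_1,V_2)$ swaps the separated copies back for the originals. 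Since isolated vertices do not affect mincut values, $\mincut_G^c(T_1', T_2') = \mincut_{\tilde{G}}^c(T_1', T_2')$.

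The crux is then showing $\mincut_{\tilde{G}}^c(T_1', T_2') = \mincut_H^c(\pi(T_1'), \pi(T_2'))$, where $\pi|_{V_i \cup A_i} = \pi_i$ and $\pi_i$ is the projection of $H_i$. I would prove this by establishing a correspondence between cuts in $\tilde{G}$ and pairs of cuts in $\hat{G}[V_1], \hat{G}[V_2]$ (with appropriate anchor placements) together with the crossing hyperedges $E(V_1,V_2)$. For the direction $(\geq)$, given a $\tilde{G}$-mincut $(X,\cdot)$ of value $k \le c$, set $X_i = X \cap V_i$ and for each $e \in E(V_1,V_2)$ assign the four anchors of $e$ to sides by a case rule: if $e$ lies entirely on one side of the $\tilde{G}$-cut, place all four anchors on that side; if $e$ is involved in the cut but $e|_{V_i}$ lies entirely on one side, split $\{\anchor_{e,2i-1}, \anchor_{e,2i}\}$ across the cut so that $\hat{e}_i$ becomes cut in $\hat{G}[V_i]$; and if both $e|_{V_1}$ and $e|_{V_2}$ already cross, place all four anchors on one fixed side. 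The resulting cuts in $\hat{G}[V_1]$ and $\hat{G}[V_2]$, together with the crossing hyperedges $E(V_1,V_2)$, tally to at most $k$ by a case analysis on how $e$ meets the cut. Invoking the $(\T_i,c)$-sparsifier property of $H_i$ transfers these to cuts in $H_i$ of the same value, and the combine step then produces a cut in $H$ of value at most $k$.

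The direction $(\leq)$ is symmetric. Given a cut in $H$ of value at most $c$, pull it back through $\pi$ to get cuts in the $H_i$'s plus crossing hyperedges in $E(V_1,V_2)$; the sparsifier property of $H_i$ in the reverse direction produces cuts of matching value in $\hat{G}[V_i]$, and reassembling them in $\tilde{G}$ (where the anchored separated copies are removed via $-\Sep(\cdot)$ and replaced by their originals via $+E(V_1,V_2)$) yields a cut of at most the same value.

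The main technical subtlety lies in the case where both halves $e|_{V_1}$ and $e|_{V_2}$ of a crossing hyperedge contribute to the cut in $\tilde{G}$: a naive anchor placement forces both $\hat{e}_1$ and $\hat{e}_2$ to be cut in the subproblems, giving an apparent over-count of $+1$. The construction of $H$ compensates this exactly, by subtracting $\Sep(\cdot)$ (removing the duplicated anchored copies) and adding $E(V_1,V_2)$ (reinserting $e$ exactly once). Carefully tracking this arithmetic across all intersection patterns of each $e \in E(V_1,V_2)$ with the cut is the most delicate step, and it is what makes the anchor construction work.
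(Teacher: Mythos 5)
Your overall strategy is the same as the paper's: construct cuts in the subproblems $\hat{G}[V_1],\hat{G}[V_2]$ from a mincut of $G$ by assigning anchors to sides, pass through the sparsifier property of $H_1,H_2$, and recombine with the ``subtract $\Sep(\cdot)$, add $E(V_1,V_2)$'' arithmetic. The observation that $\tilde G$ is just $G$ with isolated anchors is correct, and you correctly identify that the whole proof lives in the accounting for crossing hyperedges.

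There is, however, a genuine gap in your third anchor-placement case (``if both $e|_{V_1}$ and $e|_{V_2}$ already cross, place all four anchors on one fixed side''). The point of the anchors is not to make your \emph{constructed} cut in $\hat{G}[V_i]$ cut $\hat e_i$ --- it already does in that case --- but to force the cut that the \emph{sparsifier hands back} to cut $\hat e_i$. The sparsifier property only preserves mincut values between terminal subsets, so the only way to control which hyperedges the $H_i$-mincut crosses is through the induced terminal bipartition, and the anchors are the only terminals sitting on $\hat e_i$. With all four anchors on one side, the $(\T_i,c)$-sparsifier may return a mincut $Y_i$ of $H_i$ that does not cut $\hat e_i$ at all (e.g.\ $\pi_1(e|_{V_1})\subseteq Y_1$ entirely while $\pi_2(e|_{V_2})\cap Y_2=\emptyset$). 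After the combine step the original hyperedge $e$ then lies in $\partial_H Y$ while contributing nothing to $|\partial_{H_1}Y_1|+|\partial_{H_2}Y_2|$, so the compensation you describe as ``exact'' over-counts by one for each such $e$, and the bound $\mincut_H^c(\pi(T_1'),\pi(T_2'))\le k$ can fail. The paper's proof closes exactly this hole: for \emph{every} crossing hyperedge in the boundary of the original mincut it splits both anchor pairs across the cut \emph{and} adds the anchors to the extended terminal sets $\Aextend,\Bextend$, so that every $(\Aextend|_{V_i\cup A_i},\Bextend|_{V_i\cup A_i})$-mincut of $H_i$ is forced to cross $\hat e_i$, making the saving of one per cut crossing hyperedge guaranteed rather than merely possible. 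The same issue recurs in your ``symmetric'' reverse direction, which likewise needs the anchors inserted into the terminal pairs before invoking the sparsifier property (cf.\ the paper's Part 2 in \Cref{appendix:part-2-proof}). A smaller imprecision: the sparsifier property transfers your constructed cuts to cuts of \emph{at most} the same value, not ``the same value,'' since your constructed cut need not be a mincut for the induced terminal bipartition; this does not harm the argument but should be stated as an inequality.
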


\begin{proof}
Let $\pi_1: V_1\cup A_1\to V_{H_1}$ and $\pi_2: V_2\cup A_2\to V_{H_2}$ be the projection maps on $H_1$ and $H_2$ respectively.
Since $V_1\cup A_1$ and $V_2\cup A_2$ are disjoint, it is natural to define $\pi: V\to V_H$
by simply combining both maps where $\pi(v)=\pi_1(v)$ if $v\in V_1$, and $\pi(v)=\pi_2(v)$ if $v\in V_2$.

Now, it suffices to show that for any two disjoint subsets $A, B\subseteq\T$, we have $\mincut_G^c(A, B) = \mincut_H^c(\pi(A), \pi(B))$.

\noindent\textbf{Part 1.} We first show that $\mincut_G^c(A, B)\ge \mincut_H^c(\pi(A), \pi(B))$.
Let $(X, V\setminus X)$ be a minimum $(A, B)$-cut on $G$ with size $|\partial X|\le c$.
Intuitively, we will construct the cuts in the subproblems $\hat{G}[V_1]$ and $\hat{G}[V_2]$ using $(X, V\setminus X)$. Then we will argue that the preserved mincuts in $\hat{G}[V_1]$ and $\hat{G}[V_2]$ can be merged back, proving that there is a $(\pi(A), \pi(B))$-mincut in $H$ with size no larger than $|\partial X|$.

Let $S=E_G(V_1, V_2)$ be the set of hyperedges across the bipartition in the divide and conquer framework, and let $\hat{V}=V\cup A_1\cup A_2$ be the vertex set in $\GSep$.
We define the set of vertices $X^\mathsf{sep}$ that contains $X$ and all newly created anchor vertices that belongs to the $X$ side: for any $e\in S$, we add $\{\anchor_{e,1},\anchor_{e,2},\anchor_{e,3}, \anchor_{e,4}\}$ to $X^\mathsf{sep}$ if $e\subseteq X$ (the hyperedge is fully in the $X$ side).
We add $\{\anchor_{e,1},\anchor_{e, 3}\}$ to $X^\mathsf{sep}$ if $e\in S$. We add nothing if $e\subseteq V\setminus X$.

Now, we have $\partial X^\mathsf{sep} = (\partial X) \cup \Sep((\partial X)\cap S, V_1, V_2) - (\partial X)\cap S$. Moreover, $(X^\mathsf{sep}, \hat{V}\setminus X^\mathsf{sep})$ is an $(\Aextend, \Bextend)$-cut in $G^\mathsf{sep}$ of size $|\partial X|+|(\partial X)\cap S|$, where
\[
\begin{cases}
\Aextend := A\cup \{\anchor_{e, 1}, \anchor_{e, 3}\ |\ e\in ((\partial X)\cap S)\}, \text{ and }\\
\Bextend := B\cup \{\anchor_{e, 2}, \anchor_{e, 4}\ |\ e\in ((\partial X)\cap S)\}.
\end{cases}
\]
Intuitively, by carefully extend the pair $(A, B)$ to a larger pair $(\Aextend, \Bextend)$
we ensure that \emph{all} separated hyperedges $\Sep((\partial X)\cap S, V_1, V_2)$ appear in every $(\Aextend, \Bextend)$-mincut on $\GSep$. See \Cref{fig:divide-and-conquer-illustration}.

\begin{figure}[htbp]
\centering
\includegraphics[width=0.48\linewidth]{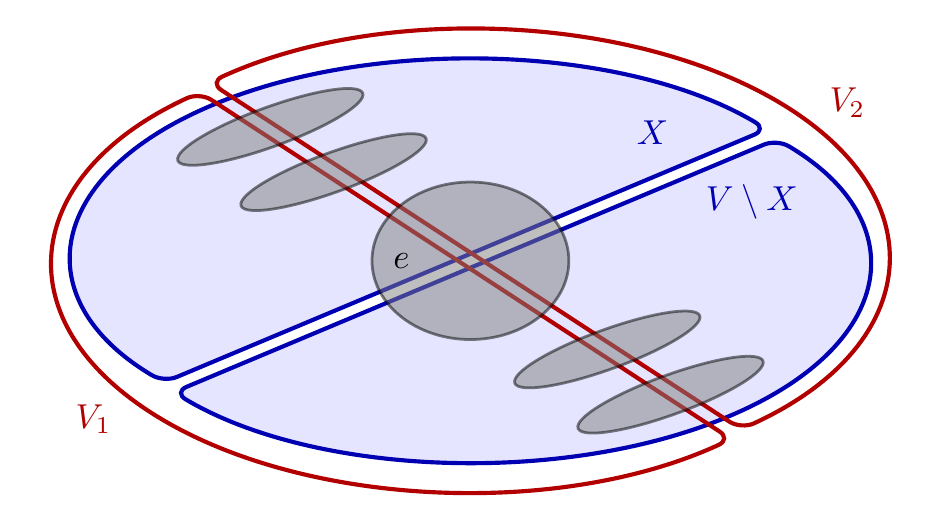}
~
\includegraphics[width=0.48\linewidth]{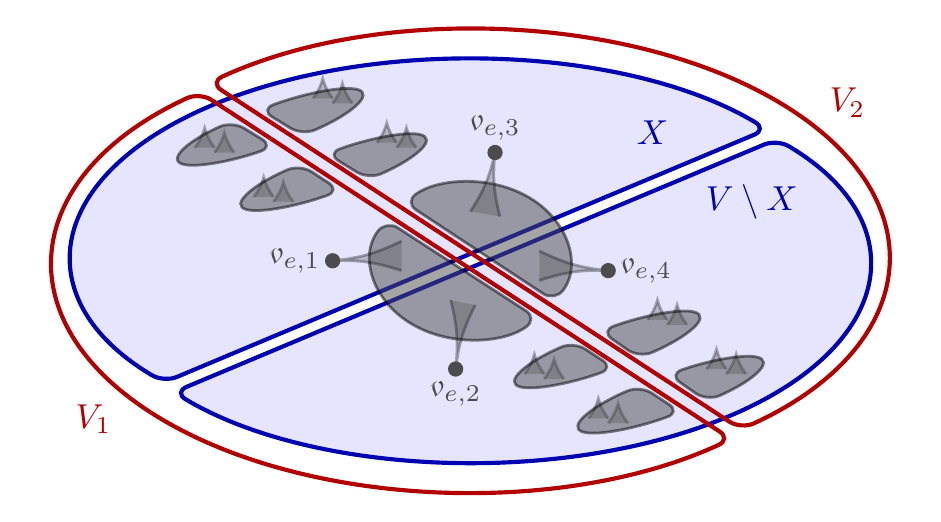}
\caption{An illustration to the proof of \Cref{lem:combine-sparsifiers}.
The gray circles represent hyperedges that cross the bipartition $(V_1, V_2)$ in the divide and conquer framework. When these hyperedges are separated, new anchor vertices are introduced and added to the terminal sets.
The newly created terminal vertices are forced to join different sides of the cut, if and only if the separated hyperedge crosses the $(A, B)$-mincut $(X, V\setminus X)$.}
\label{fig:divide-and-conquer-illustration}
\end{figure}

Suppose $H_1$ is a $(\T_1, c)$-sparsifier of $\hat{G}[V_1]$ and $H_2$ is a $(\T_2, c)$-sparsifier of $\hat{G}[V_2]$ obtained from the conquer step (\Cref{line:conquer-step}).
Let $(Y_1, \pi(V_1\cup A_1)\setminus Y_1)$ and $(Y_2, \pi(V_2\cup A_2)\setminus Y_2)$ be a
$(\pi(\Aextend|_{V_1\cup A_1}), \pi(\Bextend|_{V_1\cup A_1}))$-mincut on $H_1$ and a $(\pi(\Aextend|_{V_2\cup A_2}), \pi(\Bextend|_{V_2\cup A_2}))$-mincut on $H_2$ respectively.
Notice that every hyperedge in $\Sep((\partial X)\cap S, V_1, V_2)$ are in $\partial (Y_1\cup Y_2)$.
Let $Y_0 := Y_1\cup Y_2$ and after removing all anchor vertices we get $Y = Y_0\cap V$.
Now $(Y, \pi(V)\setminus Y)$ is a $(\pi(A), \pi(B))$-cut.
Since for each hyperedge $e\in (\partial_GX)\cap S$, $e$ is separated into two hyperedges and both of them are in $\partial_{H_1\cup H_2}Y_0$, we have
\begin{equation}\label{eqn:partial-y}
|\partial_H Y| \le |\partial_{H_1\cup H_2} Y_0| - |(\partial_G X)\cap S|.
\end{equation}
Notice that the inequality in \Cref{eqn:partial-y} comes from the fact that 
 $\partial_{H_1\cup H_2} Y_0$ may or may not contain more separated hyperedges from $\Sep(S\setminus \partial X, V_1, V_2)$.

Finally we obtain
\begin{align*}
\mincut_{H}^c(\pi(A), \pi(B)) &\le |\partial_H Y| \tag{$Y$ is some $(\pi(A),\pi(B))$-cut on $H$}\\
&\le |\partial_{H_1\cup H_2} Y_0|-|(\partial X)\cap S| \tag{by~\Cref{eqn:partial-y}}\\
&= |\partial_{H_1} Y_1| + |\partial_{H_2} Y_2| - |(\partial X)\cap S| \tag{$Y_0$ is the disjoint union  $Y_1\cup Y_2$}\\
&\le |\partial_{\GSep} X^\mathsf{sep}| - |(\partial X)\cap S| \tag{$X^\mathsf{sep}$ is some $(\Aextend, \Bextend)$-cut}  \\
&= |\partial X|  \tag{exactly $|(\partial X)\cap S|$ hyperedges were separated}\\
&= \mincut_G^c(A, B) \tag{$X$ is an $(A, B)$-mincut}
\end{align*}
as desired.

\noindent\textbf{Part 2.} The proof of $\mincut_H^c(\pi(A), \pi(B))\ge \mincut_G^c(A, B)$ is very similar to Part 1, so we defer the proof (for completeness) in \Cref{appendix:part-2-proof}.
\end{proof}

\subsection{$(5c, c)$-Edge-Unbreakable Terminals}

Let $G=(V, E)$ be a hypergraph and let $\T\subseteq V$ be the set of terminals.
By adopting the notations from \cite{LSS20},
we say that a terminal set $\T$ is \emph{$(5c, c)$-edge-unbreakable} on $G$ if for any bipartition $(V_1, V_2)$ of $V$
with no more than $c$ crossing edges $|E_G(V_1, V_2)|\le c$, either $|\T|_{V_1}| < 5c$ or $|\T|_{V_2}| < 5c$.
That is, if there is a cut of size at most $c$, then at least one of the sides has less than $5c$ induced terminals.

Liu~\cite{Liu20} obtained an $(\T, c)$-sparsifier of size $O(|\T|c^2)$
with a $(5c, c)$-edge-unbreakable terminal set $\T$ 
where each
terminal vertex $v\in \T$ has degree $1$. It turns out that Liu's techniques naturally extend to hypergraphs. We prove the following in \Cref{appendix:proof-of-gammoid}.

\begin{restatable}[]{lemma}{Lemmagammoid}
\label{lem:gammoid}
Let $G=(V, E)$ be a hypergraph and let $\T\subseteq V$ be a set of degree-1 terminals.
If $\T$ is $(5c, c)$-edge-unbreakable on $G$, then there exists a subset $E'\subseteq E$ with $O(|\T | c^2)$ hyperedges,
such that $G/(E-E')$ is a $(\T, c)$-sparsifier of $G$.
\end{restatable}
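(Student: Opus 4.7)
The plan is to adapt Liu's~\cite{Liu20} argument for normal graphs to hyperedges, exploiting that the degree-1 assumption and $(5c,c)$-edge-unbreakability together force every $c$-thresholded mincut between two terminal subsets to have one side containing fewer than $5c$ terminals. Hence it suffices to capture mincut structures coming from ``small'' terminal subsets.

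For the construction, I would let $E'$ be the union, over every $A\subseteq\T$ with $|A|<5c$, of the hyperedges in a canonical $A$-minimal $(A,\T\setminus A)$-mincut of value at most $c$ (skipping $A$ if no such cut exists). To verify that $G/(E-E')$ is a $(\T,c)$-sparsifier, I would observe that contracting hyperedges never decreases mincut values, while $E'$ already witnesses every $c$-thresholded terminal mincut (after swapping sides to make the small side be the ``$A$''-side guaranteed by unbreakability). This gives both $\mincut^c_{G/(E-E')}(T_1,T_2)\ge\mincut^c_G(T_1,T_2)$ (by contraction monotonicity) and the reverse inequality (since the canonical cut survives in $G/(E-E')$ with the same value).

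For the size bound $|E'|=O(|\T|c^2)$, I would port Liu's gammoid argument. Fix a terminal $t\in\T$; the family of $A$-minimal $c$-thresholded mincuts with $t\in A$ is nested/laminar thanks to submodularity and posimodularity of hypergraph cut functions, both of which hold verbatim for hypergraphs. A standard uncrossing then realizes the hyperedges appearing in these mincuts as a gammoid of rank at most $c$ in an auxiliary directed flow network; coupled with the fact that each individual mincut has at most $c$ hyperedges, this charges $O(c^2)$ hyperedges to each terminal, summing to the desired $O(|\T|c^2)$ bound.

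The main obstacle is lifting the gammoid machinery, which is stated in terms of ordinary edges and directed paths, to the hyperedge setting. The natural fix is to work with the bipartite incidence representation of $G$: each hyperedge $e$ is replaced by a node connected to its vertices by unit-capacity arcs, so that hypergraph mincuts translate to vertex cuts in this gadget and the standard gammoid on vertex-disjoint paths applies. The delicate step is ensuring that the rank-to-cardinality translation counts each original hyperedge once rather than once per incidence; this follows because every hyperedge contributes a single incidence node, which prevents a spurious factor of $r$ from appearing in the final bound.
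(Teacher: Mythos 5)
Your overall architecture (reduce to vertex cuts via the bipartite incidence graph, then invoke gammoid/representative-set machinery \`a la Liu and Kratsch--Wahlstr\"om) is the same as the paper's, and your correctness argument for the contraction is fine: since every $c$-thresholded terminal mincut is witnessed by a bipartition with one side of fewer than $5c$ terminals, keeping one canonical mincut per small bipartition and contracting the rest preserves all thresholded mincut values. The incidence-graph gadget you describe also matches the paper's $\Gsplit$ construction (the paper additionally blows each original vertex up into a $(c{+}1)$-clique so that only hyperedge-nodes can appear in a vertex cut of size $\le c$, which is the detail that makes ``count each hyperedge once'' work).

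The genuine gap is in the size bound. Your claim that, for a fixed terminal $t$, the family of $A$-minimal mincuts over all $A\ni t$ is nested/laminar is false: take a path $a - u - t - v - b$ with terminals $a,t,b$; the $\{t,a\}$-minimal and $\{t,b\}$-minimal mincut sides both contain $t$ but neither contains the other. Submodularity does not rescue this, because the union of an $(A_1,\T\setminus A_1)$-mincut side and an $(A_2,\T\setminus A_2)$-mincut side is in general not a valid cut for either partition. Consequently the ``charge $O(c^2)$ hyperedges to each terminal via a rank-$c$ gammoid'' step has no justification as stated, and it is also unclear where $(5c,c)$-unbreakability enters your counting (you use it only to restrict to small $A$, which still leaves exponentially many candidate sets $A$). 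The paper instead applies the representative-set theorem to the direct sum of three matroids --- a uniform matroid of rank $c$, a gammoid of rank $\le k$ over sink-copies, and a gammoid whose rank is truncated to $6c$ over source-copies --- and shows every essential hyperedge must survive in the representative set; the bound $O(kc^2)$ is exactly the product of ranks $c\cdot k\cdot 6c$, and the truncation to $6c$ is precisely where unbreakability (one side of size $<5c$ plus a cut of size $\le c$) is used. You would need to either reproduce that three-matroid argument or supply a correct substitute for the laminarity step.
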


\section{Existence of $(\T, c)$-Sparsifiers with $O(kc^3)$ Hyperedges}\label{subsection:Existence}

With all the tools equipped in the previous section, 
we are able to prove the existence of a $(\T, c)$-sparsifier with $O(|\T|c^3)$ hyperedges.

\begin{restatable}[]{theorem}{existencesparsifier}
\label{thm:existence-of-sparsifier}
Let $G=(V, E)$ be a hypergraph and $\T\subseteq V$ be the set of terminals.
Then there is a subset $E'\subseteq E$ such that $|E'|=O(|\T|c^3)$ and the contracted hypergraph $G/(E - E')$ is $(\T, c)$-equivalent to $G$.
\end{restatable}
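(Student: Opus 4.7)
The plan is to apply the divide-and-conquer framework of \Cref{alg:divide-and-conquer} recursively, using $(5c,c)$-edge-unbreakability of the current terminal set as the stopping criterion so that \Cref{lem:gammoid} applies at every leaf. Concretely, at a subproblem $(G',\T')$: if $\T'$ is $(5c,c)$-edge-unbreakable on $G'$, I first apply a standard binding gadget that turns every terminal into a degree-$1$ vertex without altering any $c$-thresholded mincut, and then invoke \Cref{lem:gammoid} to obtain a leaf sparsifier. Otherwise I pick a bipartition $(V_1,V_2)$ witnessing that $\T'$ is $(5c,c)$-breakable (that is, $|E_{G'}(V_1,V_2)|\le c$ and both $|\T'|_{V_1}|,|\T'|_{V_2}|\ge 5c$), recurse on the two anchored sub-instances $(\hat{G}'[V_i],\T'_i)$, and combine the outputs via \Cref{lem:combine-sparsifiers}.

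Correctness is inductive and follows from \Cref{lem:combine-sparsifiers} together with the base-case guarantee of \Cref{lem:gammoid}. For the size analysis I would introduce the potential $\phi(\T')\coloneqq|\T'|-5c$ and track its sum over all currently-live subproblems. At each split, $|\T'_1|+|\T'_2|\le|\T'|+4c$ (two anchor vertices per side per crossing hyperedge, with at most $c$ such hyperedges) while both $|\T'_i|\ge 5c$, so a direct calculation gives $\phi(\T'_1)+\phi(\T'_2)\le\phi(\T')-c$: the total potential drops by at least $c$ per split. Coupled with the observation that every leaf produced by a split carries at least $5c$ terminals, this simultaneously yields $L=O(|\T|/c)$ leaves and $\sum_i|\T_i|=O(|\T|)$.

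Assembling the final bound, the leaves contribute $\sum_i O(|\T_i|c^3)=O(|\T|c^3)$ hyperedges, where one factor of $c$ beyond the $O(\cdot c^2)$ guarantee of \Cref{lem:gammoid} is spent on the degree-$1$ gadget. The $O(c)$ crossing hyperedges preserved verbatim per split amount to only $O(cL)=O(|\T|)$ across the recursion and are absorbed. Defining $E'\subseteq E$ to be the union of the leaf sparsifier edges and the preserved crossing hyperedges at all splits then yields $|E'|=O(|\T|c^3)$ with $G/(E-E')$ being $(\T,c)$-equivalent to $G$, proving \Cref{thm:existence-of-sparsifier}.

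The step I expect to be the main obstacle is realising the degree-$1$ hypothesis of \Cref{lem:gammoid} cheaply at each leaf: a naive gadget that attaches a separate degree-$1$ copy per incident hyperedge would inflate the effective terminal count by each terminal's (potentially unbounded) degree in $\hat{G}'[V_i]$. The careful step is to exploit that no $c$-cut can cross more than $c$ hyperedges incident to a single terminal, so only $O(c)$ degree-$1$ copies per terminal are needed to preserve all $c$-thresholded mincuts; this caps the post-gadget terminal count at $O(|\T'|c)$, so \Cref{lem:gammoid} returns only $O(|\T'|c^3)$ hyperedges per leaf and the global bound of $O(|\T|c^3)$ goes through.
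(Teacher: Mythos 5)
Your overall architecture --- recurse until the terminal set is $(5c,c)$-edge-unbreakable, apply \Cref{lem:gammoid} at the leaves, combine with \Cref{lem:combine-sparsifiers}, and charge the recursion to the potential $|\T'|-5c$ --- matches the paper, and your potential calculation is correct. The genuine gap is in the \emph{order} in which you apply the degree-$1$ gadget. You test $(5c,c)$-edge-unbreakability of the original terminal set $\T'$ at a leaf and only afterwards replace each terminal by $\Theta(c)$ degree-$1$ copies. But \Cref{lem:gammoid} requires the degree-$1$ terminal set itself to be $(5c,c)$-edge-unbreakable on the gadgeted graph, and this can fail: the gadget multiplies the number of terminals on each side of any cut by up to a factor of $c$ while leaving the cut value unchanged. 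Concretely, let $G'$ be two large cliques joined by $c$ edges, with $5$ terminals in each clique. Every cut of value at most $c$ leaves only $5<5c$ terminals on some side, so $\T'$ is $(5c,c)$-edge-unbreakable and your recursion stops here; yet after the gadget the clique-separating cut still has value $c$ but now has $5c$ degree-$1$ terminals on \emph{both} sides, so the gadgeted terminal set is no longer $(5c,c)$-edge-unbreakable. In general it is only $(O(c^2),c)$-edge-unbreakable, and invoking \Cref{cor:dcedgeunbreakable} with $d=O(c^2)$ yields $O(|\T''|\cdot c\cdot c^2)=O(|\T'|c^4)$ hyperedges per leaf, hence $O(|\T|c^4)$ overall --- the Chalermsook et al.\ bound, not the claimed $O(|\T|c^3)$.

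The fix is the one the paper uses: perform the degree-$1$ reduction \emph{once, globally, before any recursion}, producing a degree-$1$ terminal set of size $O(|\T|c)$, and run the entire divide-and-conquer (including the unbreakability test in \Cref{alg:sparsify-slow}) on that set. Then every leaf's terminal set is genuinely a $(5c,c)$-edge-unbreakable set of degree-$1$ terminals, \Cref{lem:gammoid} gives $O(|\T_i|c^2)$ per leaf, and the potential argument on the inflated set gives $\sum_i|\T_i|=O(|\T|c)$, hence $O(|\T|c^3)$ in total; this is exactly the reduction to \Cref{lem:exist-of-sparsifier-reduced}. Your closing observation that $O(c)$ degree-$1$ copies per terminal suffice to preserve $c$-thresholded mincuts is correct and is the same gadget the paper uses; the error lies solely in deferring it to the leaves, after the unbreakability check rather than before it.
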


To prove \Cref{thm:existence-of-sparsifier}, it suffices to prove the following \Cref{lem:exist-of-sparsifier-reduced} where every terminal vertex has degree 1:

\begin{restatable}[]{lemma}{existencesparsifierdegone}
\label{lem:exist-of-sparsifier-reduced}
Let $G=(V, E)$ be a hypergraph and $\T\subseteq V$ be the set of degree 1 terminals.
Then there is a subset $E'\subseteq E$ such that $|E'|=O(|\T|c^2)$ and the contracted hypergraph $G/(E - E')$ is $(\T, c)$-equivalent to $G$.
\end{restatable}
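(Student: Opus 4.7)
The plan is to apply the divide-and-conquer framework of \Cref{alg:divide-and-conquer} recursively, using \Cref{lem:gammoid} as the base case whenever the terminal set becomes $(5c,c)$-edge-unbreakable. Concretely, if $\T$ is already $(5c,c)$-edge-unbreakable on $G$, then \Cref{lem:gammoid} directly gives a subset $E'\subseteq E$ with $|E'|=O(|\T|c^2)$ such that $G/(E-E')$ is $(\T,c)$-equivalent to $G$; otherwise there exists a bipartition $(V_1,V_2)$ of $V$ with $|E_G(V_1,V_2)|\le c$ and $|\T|_{V_i}|\ge 5c$ for both $i$, on which we run \Cref{alg:divide-and-conquer} and recurse on the two anchored subproblems $(\hat{G}[V_i],\T_i)$.

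For the recursion to go through, the degree-$1$ property of terminals must be preserved. Each original terminal $v\in\T\cap V_i$ is incident to exactly one hyperedge $e$ in $G$, and in $\hat{G}[V_i]$ this hyperedge is either retained unchanged (if $e\subseteq V_i$) or replaced by its anchored separated copy $\hat{e}_i$, which still contains $v$; either way $v$ has degree exactly one in $\hat{G}[V_i]$. Every anchor vertex is, by construction, incident to exactly one separated hyperedge. Hence $\T_i$ remains a set of degree-$1$ terminals and the inductive hypothesis applies to $(\hat{G}[V_i],\T_i)$. Correctness of the combined output follows immediately from \Cref{lem:combine-sparsifiers}, and because the combine step restores the original crossing hyperedges and discards every separated hyperedge together with all anchor vertices, the resulting sparsifier is a genuine contraction $G/(E-E')$ for some $E'\subseteq E$.

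The heart of the proof is the size analysis. Let the recursion tree have $L$ leaves and let $t_j$ be the number of terminals at the $j$-th leaf. Each internal node inflates the total terminal count by at most $4c$, since the at most $c$ crossing hyperedges contribute two anchors per side, so $\sum_j t_j\le|\T|+4c(L-1)$. Because we split only when both children have at least $5c$ terminals, every leaf whose parent is internal satisfies $t_j\ge 5c$. Assuming $L\ge 2$ (the case $L=1$ is immediate from \Cref{lem:gammoid}), this yields $5cL\le |\T|+4c(L-1)$, and hence $L=O(|\T|/c)$ and $\sum_j t_j=O(|\T|)$. Summing \Cref{lem:gammoid}'s per-leaf bound of $O(t_jc^2)$ along with the at most $c$ hyperedges added per internal node in the combine step gives a total of $O(c^2\sum_j t_j)+O(cL)=O(|\T|c^2)$, as desired. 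The main obstacle is precisely this anchor-growth accounting: naively each split inflates the terminal count by $\Theta(c)$, so without the lower bound $t_j\ge 5c$ on leaf sizes the number of leaves, and hence the sparsifier size, would blow up; the $5c$ threshold in the definition of $(5c,c)$-edge-unbreakable is exactly what makes this amortization close.
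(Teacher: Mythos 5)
Your proposal is correct and follows essentially the same route as the paper: recurse with the divide-and-conquer framework (\Cref{alg:divide-and-conquer}) until the terminal set is $(5c,c)$-edge-unbreakable, invoke \Cref{lem:gammoid} at the leaves, and amortize the $4c$ anchor terminals created per split against the $5c$ lower bound on each child's terminal count. The paper packages this last step as a potential function $\Phi(G',\T')=|\T'|-5c$ decreasing by $c$ per split, which is arithmetically identical to your direct leaf-counting inequality $5cL\le|\T|+4c(L-1)$.
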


\begin{proof}[Proof of \Cref{thm:existence-of-sparsifier}]
Without loss of generality, we may assume that each vertex in $\T$ has degree at most $c$, by duplicating each terminal vertex and add $c$ parallel edges between the duplicated vertex and the original vertex.
Let $\T$ be the terminal set of an input instance.
Now, assuming each terminal has degree at most $c$, we can further duplicate each of these terminals $c$ times so we have a set $\T'$ of at most $|\T|c$ degree-1 terminal vertices.
By \Cref{lem:exist-of-sparsifier-reduced}, there exists a subset $E'\subseteq E$ such that $|E'|=O(|\T'|c^2)=O(|\T|c^3)$ and the contracted hypergraph $G/(E-E')$ is $(\T, c)$-equivalent to $G$.
\end{proof}

To prove \Cref{lem:exist-of-sparsifier-reduced}, we first present an algorithm \algslow (See \Cref{alg:sparsify-slow}).
The algorithm recursively apply divide and conquer framework until the terminal set  is $(5c, c)$-edge-unbreakable as the base case.
After applying \Cref{lem:gammoid} on each base case,
the algorithm combines the sparsifiers from the subproblems using \Cref{lem:combine-sparsifiers}.

\begin{algorithm}[h]
\caption{\algslow$\algslow(G, \T, c)$}\label{alg:sparsify-slow}
\DontPrintSemicolon
\SetAlgoLined
\KwIn{An undirected unweighted multi-hypergraph $G$, a set of degree-1 vertex terminals $\T\subseteq V$, and a constant $c$.}
\KwOut {A $(\T, c)$-sparsifier $H$ for $G$.}
\eIf{$\T$ is $(5c, c)$-edge-unbreakable}{
Construct $H$, a $(\T, c)$-sparsifier of $G$ using \Cref{lem:gammoid}.\label{line:construct-sparsifier-in-subgraph}\\
\Return $H$.
}{
Let $(V_1, V_2)$ be a bipartition of $V(G)$ that refutes the $(5c, c)$-edge-unbreakable property. That is,
$|E_{G}(V_1, V_2)| \le c$ but $|\T\cap V_1|\ge 5c$ and $|\T\cap V_2|\ge 5c$.\label{line:split-condition}\\ 
Obtain $\begin{cases} H_1 \gets \algslow(\hat{G}[V_1], \T_1, c), \text{ and }\\ 
H_2\gets \algslow (\hat{G}[V_2], \T_2, c).\end{cases}$. \label{line:split-by-violating-cut}\\
\Return $H\gets H_1\cup H_2 \cup E(V_1, V_2)-\Sep(S, V_1, V_2)$.\label{line:exists-merge-all-parts}
}
\end{algorithm}

\Cref{lem:algslow-correctness} and \Cref{lem:algslow-size} give the correctness proof and the size to the returned $(\T, c)$-sparsifier from \Cref{alg:sparsify-slow}.

\begin{lemma}\label{lem:algslow-correctness}
\Cref{alg:sparsify-slow} returns a $(\T, c)$-sparsifier of $G$.
\end{lemma}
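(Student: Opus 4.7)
The plan is to prove the lemma by strong induction on the number of vertices $|V(G)|$. In the base case the algorithm takes the first branch, where $\T$ is $(5c, c)$-edge-unbreakable; \Cref{lem:gammoid} then directly provides a subset $E' \subseteq E$ such that $G/(E - E')$ is a $(\T, c)$-sparsifier of $G$, which is exactly what the algorithm returns on line~\ref{line:construct-sparsifier-in-subgraph}. In the inductive case the algorithm recurses on $\hat{G}[V_1]$ and $\hat{G}[V_2]$ and assembles the output via the combine step of \Cref{alg:divide-and-conquer}, so the whole proof reduces to (a) justifying that the inductive hypothesis applies to each recursive call, and (b) invoking \Cref{lem:combine-sparsifiers}.

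For (a), I would first confirm termination by showing that each subproblem has strictly fewer vertices than $G$. Since the bipartition $(V_1, V_2)$ chosen on line~\ref{line:split-condition} refutes $(5c, c)$-edge-unbreakability, both $|\T \cap V_1|, |\T \cap V_2| \ge 5c$, which forces $|V_i| \le |V| - 5c$ for $i \in \{1,2\}$. Because $|E_G(V_1, V_2)| \le c$, at most $2c$ anchor vertices are added on each side, giving $|V_i \cup A_i| \le |V| - 3c < |V|$. I would then verify that the degree-$1$ precondition on terminals (required by \Cref{lem:gammoid} in the base case, and implicit in the algorithm's input specification) propagates through the recursion. Any original terminal $v \in \T \cap V_i$ was incident in $G$ to a unique hyperedge $e$; in $\hat{G}[V_i]$ it is incident either to $e$ itself (if $e \notin E_G(V_1, V_2)$) or to the anchored separated hyperedge $\hat{e}_i$ (if $e$ crosses the bipartition), so in either case exactly one hyperedge. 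Each new anchor vertex appears only inside the single anchored separated hyperedge for which it was created, hence is degree $1$.

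For (b), once the induction hypothesis certifies that $H_i$ is a $(\T_i, c)$-sparsifier of $\hat{G}[V_i]$ for $i \in \{1, 2\}$, the assignment on line~\ref{line:exists-merge-all-parts} is literally the combine step $H := H_1 \cup H_2 \cup E(V_1, V_2) - \Sep(E(V_1, V_2), V_1, V_2)$ of \Cref{alg:divide-and-conquer}, and \Cref{lem:combine-sparsifiers} immediately yields that $H$ is a $(\T, c)$-sparsifier of $G$.

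The main obstacle, to the extent there is one, is simply the degree-$1$ bookkeeping: \Cref{lem:gammoid} strictly needs degree-$1$ terminals in the base case, and one must be careful that both the original terminals restricted to $V_i$ and the newly introduced anchor vertices continue to satisfy this condition inside $\hat{G}[V_i]$. Every other ingredient (termination, cut preservation under the combine step) has already been isolated into prior results, so no genuinely new cut-theoretic argument is required here.
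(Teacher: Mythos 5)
Your proof is correct and follows essentially the same route as the paper's: the base case is handled by \Cref{lem:gammoid}, the combine step by \Cref{lem:combine-sparsifiers}, and the only bookkeeping is that original terminals and anchor vertices all have degree $1$ in the subproblems. The paper's version is terser (it omits the explicit termination argument), but the substance is identical.
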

\begin{proof}
First we notice that all vertices in $\T_1$ and $\T_2$ have degree 1 in $\hat{G}[V_1]$ and $\hat{G}[V_2]$ respectively: the anchor vertices have degree 1 and so the recursive calls in \Cref{line:split-by-violating-cut}  are valid.
The correctness is then recursively guaranteed by \Cref{lem:combine-sparsifiers} (divide-and-conquer step) and \Cref{lem:gammoid} (base case). 
\end{proof}

\begin{restatable}[]{lemma}{Lemmaalgorithmslowsize}
\label{lem:algslow-size}
Let $G$ be a hypergraph,
$\T\subseteq V$ is the set of degree-1 terminal vertices, and let $c$ be a constant.
Let $H=\algslow(G, \T, c)$ be the output of \Cref{alg:sparsify-slow}. Then $H$ has at most $O(|\T|c^2)$ hyperedges.
\end{restatable}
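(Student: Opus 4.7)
The plan is to view the execution of \algslow as a binary recursion tree in which internal nodes correspond to the splits on \Cref{line:split-by-violating-cut} and leaves correspond to base-case invocations of \Cref{lem:gammoid} on \Cref{line:construct-sparsifier-in-subgraph}. I will then amortize the growth in terminal count per split against the minimum terminal count guaranteed at each child of a split, which bounds both the number of splits and the total terminal count summed over leaves.

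For each node $v$ of the recursion tree, let $t_v$ denote the size of the terminal set at $v$'s subproblem, and for an internal node $v$ with split $(V_1, V_2)$ let $s_v = |E_G(V_1, V_2)| \le c$. The first key observation is a growth identity: if $v$ has children $v_1, v_2$, then
\[
t_{v_1} + t_{v_2} = t_v + 4 s_v,
\]
since each of the $s_v$ separated hyperedges contributes two anchor vertices to each side. The second is a child lower bound: the split condition in \Cref{line:split-condition} forces $|\T \cap V_i| \ge 5c$, so $t_{v_i} = |\T \cap V_i| + 2 s_v \ge 5c$. Telescoping the growth identity from the root to the leaves gives $\sum_{\text{leaf } \ell} t_\ell = |\T| + 4 \sum_v s_v \le |\T| + 4 c N$, where $N$ is the number of internal nodes. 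Provided $N \ge 1$, every leaf is a non-root child of some split, so the child lower bound gives $\sum_\ell t_\ell \ge 5 c (N + 1)$. Combining the two estimates yields $N \le |\T|/c$, and hence $\sum_\ell t_\ell \le 5 |\T|$.

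The final sparsifier size is then the sum of the base-case contributions and the boundary hyperedges reinstated in the combine step. Each leaf contributes a sparsifier of $O(t_\ell c^2)$ hyperedges via \Cref{lem:gammoid}, summing to $O(c^2 \cdot \sum_\ell t_\ell) = O(|\T| c^2)$. Each internal node contributes at most $s_v \le c$ original crossing hyperedges in the combine step on \Cref{line:exists-merge-all-parts}, for a further $O(N c) = O(|\T|)$, which is absorbed into the main term. I expect the main subtlety to be the bookkeeping of the amortization and handling the degenerate case $N = 0$ (in which case the root is the only leaf and \Cref{lem:gammoid} directly gives $O(|\T| c^2)$); since anchor vertices introduced in any split remain degree $1$ throughout the recursion, as already noted in \Cref{lem:algslow-correctness}, \Cref{lem:gammoid} applies at every leaf without further issue.
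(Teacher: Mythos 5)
Your proof is correct and is essentially the paper's argument: your telescoping of the growth identity $t_{v_1}+t_{v_2}=t_v+4s_v$ against the $5c$ lower bound at each child is exactly the paper's potential-function amortization with $\Phi=|\T'|-5c$, yielding the same bounds of $O(|\T|/c)$ splits and $O(|\T|)$ total leaf terminals. The accounting of the base cases via \Cref{lem:gammoid} and of the reinstated crossing hyperedges also matches the paper.
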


The proof to \Cref{lem:algslow-size} is via  a potential function similarly defined in Liu~\cite{Liu20}.

\begin{proof}
The execution to \Cref{alg:sparsify-slow} defines a recursion tree.
If $|\T|<5c$, then the recursion terminates immediately because $\T$ is trivially $(5c, c)$-edge-unbreakable by definition 
and a $(\T, c)$-sparsifier of $O(|\T|c^2)$
hyperedges is returned by \Cref{lem:gammoid}.
Assume that $|\T| \ge 5c$, then
each recursive call on the subproblem $(G', \T')$ guarantees that $|\T'|\ge 5c$.

Now, it suffices to use the following potential function to prove that the total number of terminal vertices in all recursion tree leaves can be bounded by $O(|\T|)$.
Define a potential function for each subproblem $(G', \T')$ to be $\Phi(G', \T') := |\T'|-5c$.
Then, according to \Cref{line:split-condition}, whenever $(G', \T')$ splits into two subproblems $(\hat{G'}[V_1], \T'_1)$ and $(\hat{G'}[V_2], \T'_2)$ we have 
\begin{align*}
\Phi(\hat{G'}[V_1], \T'_1) + \Phi(\hat{G'}[V_2], \T'_2) &\le 
|\T'\cap V_1| + |\T'\cap V_2| + 4|E_{G'}(V_1, V_2)| - 10c\\
&\le \Phi(G', \T') - c.
\end{align*}

Since every subproblem has a non-negative potential, and the sum of potential decreases by $c$ at each divide-and-conquer step, the total number of leaf cases do not exceed $\Phi(G, \T)/c \le |\T|/c$.
Hence, the total size from the base case is at most $\sum_{(G', \T')\text{: base case}} |\T'|\le \Phi(G, \T) + (5c)(\# \text{ of leaf cases}) = O(|\T|)$.

By \Cref{lem:gammoid}, the total number of hyperedges returned from \Cref{line:construct-sparsifier-in-subgraph} is at most $O(|\T|c^2)$.
The total number of hyperedges added back at \Cref{line:exists-merge-all-parts} is at most the number of divide-and-conquer steps times the cut size, which is at most $|\T|$.
Therefore,
the output $(\T, c)$-sparsifier $H$ has at most $O(|\T|c^2)$ hyperedges as desired.
\end{proof}

\begin{proof}[Proof of \Cref{lem:exist-of-sparsifier-reduced}]
\Cref{lem:exist-of-sparsifier-reduced} follows immediately after the correctness proof (\Cref{lem:algslow-correctness}) and upper bounding the number of hyperedges (\Cref{lem:algslow-size}).
\end{proof}

\section{An Almost-linear-time Algorithm Constructing a Sparsifier} \label{subsection:Algorithm}

This section is devoted to proving part (1) in \Cref{thm:main}.
That is, we give a almost-linear-time (assuming a constant rank) algorithm that constructs a contraction based $(\T, c)$-sparsifier of $O(|\T|c^3)$ hyperedges
which
matches with \Cref{thm:existence-of-sparsifier} up to a constant factor.
We summarize the result in \Cref{conclusion}.

\begin{restatable}[]{theorem}{conclusion}
\label{conclusion}
Let $G=(V, E)$ be a hypergraph with $n$ vertices, $m$ hyperedges, and rank $r=\max_{e\in E}|e|$.
Let $\T\subseteq V$ be a terminal set $\T\subseteq V$.
Then there exists a randomized algorithm which constructs a $(\T, c)$-sparsifier with $O(|\T|c^3)$ hyperedges in $O(p + n(rc\log n)^{O(rc)}\log m)$ time.
\end{restatable}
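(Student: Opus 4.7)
The plan is to convert the existential construction from \Cref{thm:existence-of-sparsifier} into a fast algorithm by replacing the recursive divide-and-conquer of \algslow (which searches for a violating cut at each level) with a one-shot clustering step followed by an efficient per-cluster sparsifier, together with a careful local enumeration of small cuts. As in the reduction proving \Cref{thm:existence-of-sparsifier}, I first reduce to degree-$1$ terminals by duplicating each terminal $c$ times along parallel hyperedges, which blows up the terminal count by a factor of $c$ (hence the extra $c$ in $O(|\T|c^3)$) and adds at most $O(p)$ preprocessing cost.

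Next, instead of recursively splitting along arbitrary violating cuts, I apply an expander decomposition \expdec to partition $V$ into clusters $V_1,\dots,V_t$ with the property that the induced anchored sub-hypergraphs $\hat G[V_i]$ have terminal sets $\T_i$ that are $(5c,c)$-edge-unbreakable, while the total number of crossing hyperedges $\sum_i |E_G(V_i,V\setminus V_i)|$ remains $O(|\T|)$. The potential-function accounting in \Cref{lem:algslow-size} already bounds the total boundary size and total leaf-level terminal count by $O(|\T|)$ whenever the recursion only splits along cuts certifying non-unbreakability, so the same bound applies to any decomposition certified by \expdec. Then I repeatedly combine the per-cluster sparsifiers via \Cref{alg:divide-and-conquer}, whose correctness is given by \Cref{lem:combine-sparsifiers} and whose total output size is $O(|\T|c^2)$ per cluster (via \Cref{lem:gammoid}) plus $O(|\T|)$ boundary hyperedges, giving $O(|\T|c^3)$ after accounting for the degree-$c$ duplication.

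The main obstacle is the per-cluster step: on a $(5c,c)$-edge-unbreakable cluster I need, in time $\tilde O(\text{size}) + n(rc\log n)^{O(rc)}$, to identify a set $E'\subseteq E$ of $O(|\T_i|c^2)$ \emph{essential} hyperedges such that contracting $E\setminus E'$ yields a $(\T_i,c)$-sparsifier. For this I plan to run, from each terminal, a local mincut enumeration procedure \locmincut / \locclosestcut that, because the cluster is unbreakable, is guaranteed to explore only an $O(c)$-sized region before certifying each cut of value $\le c$ separating that terminal. The rank-$r$ hypergraph structure inflates the branching: every cut hyperedge of size up to $r$ may be split across the cut in up to $2^r$ ways, so enumerating all cuts of value $\le c$ contributes the $(rc\log n)^{O(rc)}$ factor per vertex, totalling $n(rc\log n)^{O(rc)}\log m$ across the graph (the $\log m$ coming from parallel-edge handling and standard data-structure overhead).

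The subtlest point — and the one where the prior work \cite{chalermsook2021vertex} had a bug, as noted in the technical overview — is in turning the enumerated local cuts into essential hyperedges via an auxiliary-graph / \phisparsify step without erroneously marking non-essential hyperedges. I would resolve this by using the \emph{useful partition} notion: for each anchor hyperedge in $\Gaux$, retain only those bipartitions of its endpoints that can be extended to a global $c$-thresholded mincut between some terminal subsets, discarding the rest before running \phisparsify. Proving that this pruning preserves every hyperedge that is essential (so contracting the rest does not destroy any thresholded mincut), and that the pruning itself can be executed within the stated time bound, is the heart of the argument. Once this is established, the size bound follows from \Cref{lem:gammoid}, the correctness from \Cref{lem:combine-sparsifiers}, and the total running time is the sum of $O(p)$ for \expdec and duplication plus $n(rc\log n)^{O(rc)}\log m$ for local enumeration and pruning, matching the claim of \Cref{conclusion}.
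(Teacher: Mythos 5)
There is a genuine gap at the center of your plan: the ``one-shot clustering'' step. You assert that a single application of \expdec produces clusters whose terminal sets are $(5c,c)$-edge-unbreakable and whose total boundary is $O(|\T|)$. Neither claim holds. \Cref{lem:expander-decompose-for-hypergraphs} only guarantees that each cluster is a $\phi$-expander and that the number of crossing hyperedges is at most $\phi m\,\polylog(n)$ --- a quantity that scales with $m$, not with $|\T|$. A $\phi$-expander with $\phi^{-1}=\poly(r,c,\log n)$ can easily contain a cut of value $\le c$ with far more than $5c$ degree-one terminals on each side (up to $rc\phi^{-1}$ of them), so you cannot invoke \Cref{lem:gammoid} on the clusters, and the potential-function accounting of \Cref{lem:algslow-size} does not transfer, since the expander-decomposition cuts are not cuts that refute unbreakability. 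The paper's fix for both problems is structural: \algfast runs $O(\log m)$ \emph{iterations} of (decompose into $\phi$-expanders, sparsify each with \phisparsify, recombine via \Cref{alg:divide-and-conquer}), choosing $\phi^{-1}=\Theta(rc^4\log^3 n)$ so that the recurrence $m_{i+1}\le |\T|c^3 + m_i/2$ holds; the geometric decay, not a single decomposition, is what drives the hyperedge count down to $O(|\T|c^3)$. Relatedly, the per-expander size bound comes not from the gammoid/representative-set machinery but from \Cref{thm:existence-of-sparsifier}: \phisparsify contracts every non-essential hyperedge, and the existence theorem bounds the number of essential hyperedges by $O(|\T|c^3)$.

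Your treatment of the per-cluster step is closer to the mark --- you correctly identify the enumeration of small connected cuts, the auxiliary graph, and the useful-partition pruning as the key ingredients, and you correctly locate the subtlety inherited from \cite{chalermsook2021vertex}. But note that usefulness is a property of a \emph{terminal partition} $(A,\T\setminus A)$ (all of its mincuts have connected smaller side), tested via the $A$-minimal mincut and \locmincut (\Cref{lem:use-amcut-to-identify-useful-partition}), not a property of ``bipartitions of the endpoints of an anchor hyperedge'' as you describe. Also, the $\log m$ factor in the running time is the number of outer iterations, not parallel-edge bookkeeping. As written, your argument does not yield the claimed size bound, because after your single decomposition the $\Theta(\phi m\,\polylog n)$ boundary hyperedges reinserted in the combine step are unaccounted for.
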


\subparagraph{Overview of the algorithm.} Although \Cref{alg:sparsify-slow} can construct a $(\T,c)$-sparsifier with $O(|\T|c^3)$ hyperedges, it is slow because we do not have an efficient algorithm searching for a bipartition that violates the $(5c, c)$-edge-unbreakable property. 

To construct our contraction-based $(\T,c)$-sparsifier, all we need to do is identifying \emph{essential} hyperedges and contract non-essential ones. Essential hyperedges are indispensable to maintaining $\mincut$ between terminals (See \Cref{def:essential-edges}).
It seems to be challenging to identify essential hyperedges on an arbitrary graph without a $(5c, c)$-edge-unbreakable guarantee.
Fortunately, we notice there is an efficient way to identify essential hyperedges in an expander.

Naturally, we can utilize \expdec (where the version for hypergraphs is explicitly stated in \cite{LongS22}) which splits a hypergraph into expanders.
Expander decomposition not only guarantees expander sub-hypergraphs, but also fits in the divide-and-conquer framework indicated by \Cref{lem:combine-sparsifiers} with a favorable almost-linear time. Then, we can focus on identifying essential hyperedges in an expander.

To identify essential hyperedges in an expander, we first
enumerate all \emph{connected cuts}\footnote{In Chalermsook et al.~\cite{chalermsook2021vertex}, the concept of connected cuts is not explicitly defined. We give a formal definition in \Cref{def:connected-cuts} and hope it clarifies some ambiguity in their paper.} with value at most $c$ ---
the sub-hypergraph induced  by the smaller side of a connected cut is connected.
Then, we build a \emph{pruned auxiliary graph} based on the cuts we have enumerated. The pruned auxiliary graph leads to an efficient way identifying essential hyperedges. Finally, we contract all detected non-essential hyperedges. We call the above procedure that sparsifies an expander \phisparsify.

With \expdec and \phisparsify procedures introduced above, we are able to construct the $(\T, c)$-sparsifier on general hypergraphs of size $O(|\T|c^3)$ efficiently. 
Our algorithm (\Cref{alg:sparsify-fast}) is based on Chalermsook et al.~\cite{chalermsook2021vertex} and consists of iterations of \expdec and \phisparsify.
Each iteration implements the divide-and-conquer framework shown by \Cref{alg:divide-and-conquer}: we first apply \expdec and decompose the hypergraph into $\phi$-expanders. Then we apply  \phisparsify to sparsify the $\phi$-expanders. Finally, we glue all sparsifiers of the $\phi$-expanders by recovering the inter-cluster hyperedges between the $\phi$-expanders.  
Similar to \cite{chalermsook2021vertex}, we prove that  $O(\log m)$ iterations suffice to obtain a $(\T, c)$-sparsifier of $O(|\T|c^3)$ hyperedges.

\paragraph{Overview of this section.} In \Cref{subsection:Enumeration-of-cuts}, we first describe the settings of the expander decomposition, and then we present an algorithm for enumerating all connected cuts with value at most $c$ in an expander. In \Cref{sec:auxiliary-graph}, we build a pruned auxiliary graph using connected cuts where the algorithm can identify non-essential hyperedges easily.
Then we give an algorithm $\phisparsify$ that produces a sparsifier of an expander with the help of the auxiliary graph.
Finally, \Cref{conclusion} can be directly proved by combining expander decomposition and the $\phisparsify$ algorithm, which is \Cref{alg:sparsify-fast} and it is presented in \Cref{sec:Sparsification-algorithm}.

\subsection{Enumeration of Small Cuts in an Expander} \label{subsection:Enumeration-of-cuts}

To illustrate what an expander is, we first define \emph{conductance}.

\begin{definition}[Conductance of hypergraphs]
\label{def:conductance}

Let $G = (V,E)$ be a hypergraph and a proper subset $S \subsetneq V$. We define the $conductance$ of $S$ to be 
$$\Phi_G(S) = \frac{|\partial{S}|}{\min{(|E(S)|,|E(V\setminus S)|)}}$$
and the define conductance of $G$ to be the minimum conductance over all proper subsets of vertices:
$$\Phi(G) = \min_{S: \emptyset \subsetneq S\subsetneq V}{\Phi_G(S)}$$

\end{definition}

We call a hypergraph with conductance $\phi$ an \emph{$\phi$-expander}.
Long and Saranurak~\cite{LongS22} give an almost-linear-time algorithm \expdec that partitions a hypergraph into $\phi$-expanders. 

\begin{lemma}[\cite{LongS22}]
\label{lem:expander-decompose-for-hypergraphs}
There exists a randomized algorithm \expdec that, given any unweighted hypergraph $G=(V,E)$ with $n$ vertices, $m$ hyperedges, and total size $p= \sum_{e\in E}|e|$, and any parameter $\phi>0$, with high probability computes in $p^{1+o(1)} $ time a partition $\{V_1,\dots,V_k\}$ of $V$ such that
\begin{itemize}
    \item for all $i$, $\Phi(G[V_i]) \ge \phi$, and
    \item the number of crossing hyperedges is at most $\phi m\, \polylog(n)$ (we say that an edge $e$ is crossing if $e$ contains vertices from at least two parts $V_i$ and $V_j$ for some $i\neq j$).
\end{itemize}
\end{lemma}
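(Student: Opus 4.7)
This is a hypergraph analogue of the classical expander decomposition result for graphs. My plan is to follow the standard recursive framework, adapted to hypergraphs. The top-level algorithm would call a \textbf{cut-or-expander} subroutine that, given a hypergraph $G'$ and target conductance $\phi$, either certifies $\Phi(G') \ge \Omega(\phi / \polylog(n))$, or returns a sparse cut $(S, V\setminus S)$ with $|\partial S| \le \phi \cdot |E(S)|$ and $|E(S)| \le |E(V\setminus S)|$. If an expander is certified, we output $\{V'\}$; otherwise we recurse on $G'[S]$ and $G'[V\setminus S]$ and concatenate the partitions. Correctness and the stated guarantees then reduce to proving the two properties of the cut-or-expander primitive and charging the crossing hyperedges across the recursion tree.

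The key technical step is implementing cut-or-expander in almost-linear time on hypergraphs. The natural approach is to combine a cut-matching-game outer loop with a hypergraph max-flow/min-cut subroutine. One clean route for the hypergraph flow primitive is a star-expansion reduction: replace each hyperedge $e$ by a star centered at a fresh vertex $v_e$ adjacent to every vertex in $e$. With the $v_e$'s placed optimally, every hyperedge cut of value $k$ in $G'$ corresponds to a cut of value $k$ in the expansion, so approximate hypergraph sparsest cuts can be extracted from approximate sparsest cuts in the star expansion. The total size of the expansion is $O(p)$, so the almost-linear-time graph flow algorithms yield the $p^{1+o(1)}$ bound per call.

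The analysis proceeds via a standard charging argument on the recursion tree. At each node where we cut along $(S, V\setminus S)$ with $|E(S)| \le |E(V\setminus S)|$, the $|\partial S| \le \phi \cdot |E(S)|$ crossing hyperedges are charged to the hyperedges incident to $S$. Since any hyperedge can be on the smaller side at most $O(\log m)$ times along any root-to-leaf path, each hyperedge is charged $O(\log m)$ times in total, giving $O(\phi m \polylog(n))$ crossing hyperedges overall. The running time bound follows by summing the almost-linear cost of the flow primitive across the $O(\polylog(n))$ levels of recursion, with total work $p^{1+o(1)}$ per level.

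The main obstacle I expect is reconciling the specific conductance definition in \Cref{def:conductance}, which normalizes by $\min(|E(S)|, |E(V\setminus S)|)$ rather than by a volume, with the graph expander-decomposition machinery. In particular, the star-expansion reduction changes vertex degrees in a nontrivial way, so one must weight the auxiliary star edges (or use a customized hypergraph flow) so that cut certificates pulled back to $G'$ satisfy the correct notion of conductance, and one must check that rank-$r$ hyperedges only inflate the relevant sizes by factors absorbed into the $p^{1+o(1)}$ and $\polylog(n)$ terms. Once this normalization is handled, the remainder of the argument is a direct adaptation of the graph case.
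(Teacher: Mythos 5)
First, a point of comparison: the paper does not prove this lemma at all --- it is imported verbatim from Long and Saranurak [LongS22] and used as a black box, so there is no in-paper proof to match your argument against. Your sketch follows the standard recursive template one would use to establish such a statement from scratch, but as a standalone proof it has gaps at exactly the points where the real work of [LongS22] lies.

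Two concrete issues. First, the star expansion does not preserve cut values as you claim: if a hyperedge $e$ crosses a cut, then wherever the center $v_e$ is placed, the number of cut star-edges equals the number of endpoints of $e$ on the far side, which can be anywhere from $1$ to $|e|-1$; cut values are therefore only preserved up to a factor of the rank $r$, not exactly. The standard repair is Lawler's gadget (split each hyperedge into $e_{\mathrm{in}}$ and $e_{\mathrm{out}}$ joined by a unit-capacity edge, with high-capacity edges to and from the endpoints), and one must then re-derive the correspondence for the paper's conductance, which normalizes by $\min(|E(S)|,|E(V\setminus S)|)$ rather than by volume. Second, and more seriously, your running-time analysis assumes the recursion tree has $O(\polylog(n))$ levels with almost-linear work per level. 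A cut-or-expander primitive that returns \emph{some} sparse cut can return a wildly unbalanced one (peeling off $O(1)$ hyperedges per call), giving recursion depth $\Omega(m)$ and total time $\Omega(mp)$. Avoiding this is precisely the hard part of almost-linear expander decomposition: one needs either a most-balanced sparse cut guarantee from the cut-matching game or an expander trimming/pruning step so that an unbalanced cut terminates the recursion on the large side. Your charging argument for the $\phi m\,\polylog(n)$ bound on crossing hyperedges is fine in outline, but without the balance or trimming machinery the algorithm as described does not achieve $p^{1+o(1)}$ time, so the sketch does not yet substitute for the cited result.
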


We also note that any cut $(X, V \setminus X)$ of value at most $c$ in a $\phi$-expander, according to \Cref{def:conductance}, satisfies 
\begin{equation}\label{eq:hyperedge-upper-bound}
 \min\{|E(X)|, |E(V \setminus X)|\} \leq c \phi^{-1}.  
\end{equation}

For now, we will focus on constructing $(\T, c)$-sparsifiers on $\phi$-expanders. 
This is because \expdec can be easily incorporated into the divide-and-conquer framework (\Cref{alg:divide-and-conquer}) as will be formally shown in \Cref{alg:sparsify-fast} near the end of the section.

Given a $\phi$-expander $G$ with terminal set $\T$, our goal is to identify an \emph{essential}  hyperege $e$, that is, to check whether $e$ is in \emph{every} $(A,B)$-$\mincut$s with value at most $c$ for some disjoint terminal sets $A$ and $B$.
In a general hypergraph there could be as many as $O(2^n)$ mincuts with value at most $c$ to check but in a $\phi$-expander there is much less. 
It turns out that finding all \emph{connected cuts} with value at most $c$ suffices to identify essential hyperedges:

\begin{definition}[Connected Cuts]
\label{def:connected-cuts}
For a hypergraph $G = (V,E)$, let $(X,V\setminus X)$ be a cut where $X \subseteq V$ and $|E(X)| \leq |E(V\setminus X)|$. We say $(X,V\setminus X)$ is a \emph{connected cut} if and only if $G[X]$ is connected. 
\end{definition}

Suppose there is a connected cut $(X, V\setminus X)$ with value at most $c$,
then by the property of the $\phi$-expander we know that $|E(X)|\le c\phi^{-1}$ (\Cref{eq:hyperedge-upper-bound}).
Using the assumption that $G[X]$ is connected, 
we can reach any boundary hyperedge in $\partial X$ by invoking a DFS traversal from a vertex $v_{\text{seed}}\in X$ within $c\phi^{-1}$ steps.
Since there are at most $c$ boundary hyperedges, their sizes add up to at most $rc$.
Then, we recursively ``guess and trim'' these boundary hyperedges for at most $rc$ times on $G$, obtaining $G[X]$ at the end.
\enumcuts (\Cref{alg:enumerate-cuts}) tries every possible $v_{\text{seed}}$ and invokes the helper function \enumcutshelp (\Cref{alg:enumerate-cuts-help}) that performs the ``guess and trim'' procedure.
We summarize the guarantee of \enumcuts in \Cref{lem:enumerate-cuts}:

\begin{algorithm}[h]
\caption{\enumcuts$(G,\phi,r,c)$}\label{alg:enumerate-cuts}
\SetAlgoLined
\KwIn{A $\phi$-expander hypergraph $G = (V,E)$ with $\mathrm{rank}$ $r$, and a threshold parameter $c$.}  
\KwOut{All connected cuts with value at most $c$.} 
$\C \gets \emptyset$.  \tcp{Stores all found connected cuts.}
\For{each $v_{\text{seed}} \in V$ \label{line:loop-over-vertices}}
{
    \tcc{Invokes a helper function to find all connected cuts involving $v_{\text{seed}}$.}
    $\C \gets \C \cup \enumcutshelp(0, G, G, \phi, r, c, v_{\text{seed}})$. \tcp*{See \Cref{alg:enumerate-cuts-help}.}
}%
\Return $\C$.\label{line:alg2-return}%
\end{algorithm}

\newcommand{\Vmark}{V_{\sf{marked}}}

\begin{algorithm}[h]
\caption{\enumcutshelp$(\mathit{depth}, H, G, \phi, r, c, v_{\rm seed})$}\label{alg:enumerate-cuts-help}
\SetAlgoLined

\KwIn{The current recursion depth $\mathit{depth}$. A hypergraph $H=(V, E)$ with rank $r$. The original hypergraph $G$. Parameters $c$ and $\phi$. A seed vertex $v_{\rm seed}\in V$.}
\KwOut{All connected cut with value at most $c$ so that $v_{seed}$ is in the smaller side.} 
\eIf{$\mathit{depth}\le rc$}
    {Run DFS from $v_{\rm seed}$ on $H$ and {\textbf{stop}} as soon as visiting $c\phi^{-1}+1$ hyperedges.\label{line:run-dfs}\\
    Let $\hat{E}$ be the set of visited hyperedges and $X$ be the set of visited vertices.\\
    \eIf{DFS gets stuck before visiting $c\phi^{-1}+1$ hyperedges\label{line:DFS-get-stuck}}
    {
        \eIf{$|\partial_G X| \leq c $ \label{line:size-threshold}}
        {\Return $\{(X, V\setminus X)\}$. \tcc{Some connected cut with value at most $c$ is found.}}
        {\Return $\emptyset$.}
    }{
    $\S\gets \emptyset$.\\
    \For{each $e\in \hat{E}$ and for each $v\in e$, $v\neq v_{\rm seed}$}
    {
    \label{line:loop-over-visited-vertices}%
    Let $e'\gets e\setminus v$.\tcc{modify the boundary hyperedge into a smaller one.}
    \tcp{A recursive call with $v$ being removed from $e$. \label{line:recursive-call}}
    $\S \gets \S \cup \enumcutshelp(\mathit{depth}+1, H-e+e', G, \phi, r, c, v_{\rm seed})$
    }
    \Return $\S$.
    }
}
{ %
    \Return $\emptyset$.
}

\end{algorithm}

\begin{restatable}[]{lemma}{Lemmaenumeratecuts}
\label{lem:enumerate-cuts}
Given a $\phi$-expander hypergraph $G=(V, E)$, there are at most $|V|(rc\phi^{-1})^{rc}$ connected cuts with value at most $c$. Moreover, \Cref{alg:enumerate-cuts} enumerates all of them in $O(|V|(rc \phi^{-1})^{rc+1})$ time.
\end{restatable}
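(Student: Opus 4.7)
The plan has two parts: a size-and-time analysis of the recursion tree generated by \enumcutshelp, and a completeness argument showing every connected cut of value at most $c$ is enumerated. For the size and time bound I would first observe that a single invocation of \enumcutshelp has branching factor at most $r(c\phi^{-1}+1)$ (one child per pair $(e,v)$ with $e$ among the at most $c\phi^{-1}+1$ visited hyperedges and $v\in e\setminus\{v_{\text{seed}}\}$) and depth hard-capped at $rc$, so its recursion tree has at most $(rc\phi^{-1})^{rc}$ leaves per seed. Summing over the $|V|$ seeds iterated in \Cref{line:loop-over-vertices} gives the claimed $|V|(rc\phi^{-1})^{rc}$ bound on output cuts; assigning $O(rc\phi^{-1})$ work per node (DFS plus the branching loop) and multiplying by the tree size yields the $O(|V|(rc\phi^{-1})^{rc+1})$ time bound.

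The main step, and what I expect to be the bulk of the argument, is completeness. Fix a target connected cut $(X,V\setminus X)$ with $|E(X)|\le|E(V\setminus X)|$ and pick $v^\star\in X$; the outer loop will eventually try $v_{\text{seed}}=v^\star$. Since $G$ is a $\phi$-expander and $|\partial_G X|\le c$, we have $|E(X)|\le c\phi^{-1}$. Define the idealized state $H^\star$ obtained from $G$ by stripping every vertex of $e\cap(V\setminus X)$ from each boundary hyperedge $e\in\partial X$. In $H^\star$ the DFS from $v^\star$ visits exactly $X$: it reaches all of $X$ because $G[X]$ is connected (and no such stripping removes vertices of $X$ from a hyperedge), and it cannot leave $X$ because every boundary hyperedge has been cleared of $V\setminus X$ vertices. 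So DFS gets stuck after visiting $|E(X)|\le c\phi^{-1}$ hyperedges, the check $|\partial_G X|\le c$ passes, and $(X,V\setminus X)$ is reported.

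It then suffices to show the recursion reaches $H^\star$ along some branch starting from $H=G$. The Hamming distance $\sum_{e\in\partial X}|e\cap(V\setminus X)|$ between $G$ and $H^\star$ is at most $r\cdot c=rc$, matching the depth cap. I would argue by induction on this distance that whenever $H\neq H^\star$ the DFS in $H$ visits some boundary hyperedge whose intersection with $V\setminus X$ has not been fully stripped, so one of the branches removes such a vertex and decreases the distance by one. The case split is: if the DFS gets stuck then its reachable set either equals $X$ (forcing $H=H^\star$, contradiction) or contains some vertex of $V\setminus X$, in which case DFS crossed some unstripped boundary hyperedge, which lies in $\hat{E}$; if the DFS does not get stuck, then since $|E(X)|\le c\phi^{-1}$ it has already visited some hyperedge outside $E(X)$, and any such visited hyperedge must be a boundary hyperedge still containing an unstripped $V\setminus X$ vertex.

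The main obstacle is precisely this case analysis: one must use the $\phi$-expander bound $|E(X)|\le c\phi^{-1}$ together with the connectivity of $G[X]$ to guarantee that a \emph{useful} boundary hyperedge is always among the visited set $\hat{E}$, rather than merely somewhere in $H$. Once that inductive step is established, at most $rc$ applications drive $H$ to $H^\star$ and trigger the report, completing the completeness proof and hence the lemma.
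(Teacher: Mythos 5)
Your proposal is correct and follows essentially the same route as the paper: the same recursion-tree count (branching $O(rc\phi^{-1})$, depth $rc$, $O(rc\phi^{-1})$ work per node) for the size and time bounds, and the same completeness argument that repeatedly strips vertices of $V\setminus X$ from boundary hyperedges of $\partial X$, using $|E(X)|\le c\phi^{-1}$ and the connectivity of $G[X]$, with the total number of strippings bounded by $r|\partial X|\le rc$. One sentence in your Case B is misstated: a visited hyperedge outside $E(X)$ lies entirely in $V\setminus X$ and so is \emph{not} itself a boundary hyperedge; the correct conclusion (which the paper uses) is that to reach it the DFS must have crossed some $e\in\partial X$ still containing an unstripped vertex of $V\setminus X$, and that $e$ is among the visited set $\hat{E}$, so the useful branch exists. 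You also omit the soundness direction (every returned cut is a connected cut of value at most $c$), which the paper proves as its Part (2); it is not needed for the literal statement of the lemma but is used downstream when the output is fed into the auxiliary graph.
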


\begin{proof}
We will first show that (1) all connected cuts with value at most $c$ can be found by \Cref{alg:enumerate-cuts} and that (2) all the returned cuts from \Cref{alg:enumerate-cuts} are connected cuts with value at most $c$.
Then, we show that the running time of \Cref{alg:enumerate-cuts} is $O(n(rc\phi^{-1})^{rc+1})$ in part (3).

\subparagraph{Part (1).}
We first show that all connected cuts with value at most $c$ can be found by \Cref{alg:enumerate-cuts}.
For an arbitrary connected cut $(X,V\setminus X)$ with value at most $c$ containing some vertex $v_{seed}$ in $G = (V,E)$ where $|E(X)| \leq |E(V \setminus X)|$, we have $|\partial X| \leq c$ and $G[X]$ is connected.
According to \Cref{eq:hyperedge-upper-bound}, we have $|E(X)| \leq c\phi^{-1}$.
So, if a DFS starting from $v_{seed} \in X$ yields after visiting $c\phi^{-1}+1$ hyperedges, then there must exist a hyperedge $e_{cut}$ among the visited hyperedges such that $e_{cut} \in \partial X$, which implies $e_{cut} \setminus X \not = \emptyset$.
Thus, \Cref{alg:enumerate-cuts-help} removes some $v \in e_{cut} \setminus X$ from $e_{cut}$ in \Cref{line:recursive-call}. %
The algorithm recursively removes vertices in $V \setminus X$ from hyperedges in $\partial X$. As a result, all vertices in $V\setminus X$ and the boundary hyperedges
will be removed from all boundary hyperedges at some point in the recursion. That means the boundary hyperedges only contains vertices in $X$ and the DFS will get stuck before visiting $c\phi^{-1} + 1$ hyperedges.
Notice that the total number of vertices in boundary edges that gets removed is less than $r|\partial X|\le rc$.
In addition, since $G[X]$ is connected, the DFS gets stuck
after visiting all vertices in $X$ and then 
\Cref{alg:enumerate-cuts-help} returns the cut $(X,V\setminus X)$.
\Cref{alg:enumerate-cuts} iterates over all vertices in $V$ as a seed vertex, therefore, 
it returns all connected cuts with value at most $c$ in  $G$. 

\subparagraph{Part (2).} We show that all the cuts that \Cref{alg:enumerate-cuts} finds are connected cuts with value at most $c$.
The value of all the cuts the algorithm finds is less than $c$ according to \Cref{line:size-threshold} in \Cref{alg:enumerate-cuts-help}.
Now, we observe that
DFS only probes adjacent vertices. When DFS gets stuck, the cut $(X, V\setminus X)$ that DFS gives is a connected cut in the modified hypergraph.
Since the modified hypergraph is obtained only via removing vertices from hyperedges,
it is not hard to see that $G[X]$ is connected.

\subparagraph{Part (3).}
\Cref{alg:enumerate-cuts-help} is invoked $|V|$ times.
The runtime per execution of \Cref{alg:enumerate-cuts-help} can be upper bounded by the size of the recursion tree multiplied with the worst case time needed for DFS (\Cref{line:run-dfs}).
From the for-loop in \Cref{line:loop-over-visited-vertices}, each node in the recursion tree has at most $rc\phi^{-1}$ children and the depth of the recursion tree is bounded by $rc$.
So, the size of the recursion tree is at most $(rc\phi^{-1})^{rc}$.
Moreover, each DFS (\Cref{line:run-dfs}) takes $O(rc\phi^{-1})$ time since at most $c\phi^{-1}+1$ hyperedges with rank $r$ are visited.
Thus, the runtime can be bounded by $O(|V|(rc\phi^{-1})^{rc+1})$.
\end{proof}

\subsection{Sparsification via an Auxiliary Graph} \label{sec:auxiliary-graph}

We now introduce the algorithm \phisparsify which constructs a $(\T,c)$-sparsifier with  $O(|\T|c^{3})$ hyperedges from a $\phi$-expander. 
Our algorithm \phisparsify detects the hyperedges that are not \emph{essential} to some $(\T, c)$-sparsifier and contracts them. The formal definition of essential hyperedges is as follows.

\begin{definition}[Essential and Non-essential Hyperedges]
\label{def:essential-edges}
A hyperedge $e$ is said to be \emph{essential} if there exists a partition of terminals $(A, \T\setminus A)$ such that all $(A, \T\setminus A)$-mincuts with value at most $c$ contain $e$. Otherwise, the hyperedge is non-essential.
\end{definition}

Notice that all essential hyperedges cannot be contracted or removed as they will affect $(A,\T\setminus A)$-mincut value for some $A \subset \T$.
Hence, any $(\T, c)$-sparsifier of $G$ must include all essential hyperedges in $G$. That said, by the existence theorem (\Cref{thm:existence-of-sparsifier}), the number of essential hyperedges on $G$ is at most $O(|\T|c^3)$.

\subsubsection{Auxiliary Graph (and its Subtle Issue)}

According to \Cref{def:essential-edges}, a hyperedge $e$ is non-essential if
for each partition of terminals $(A, \T\setminus A)$, there exists a $(A, \T\setminus A)$-mincut with value at most $c$ that does not contain $e$ in the boundary. 
A straightforward way to check whether a hyperedge is non-essential,
is to check whether the value of a $(A, \T\setminus A)$-mincut for some $A \subset \T$ would be affected after the hyperedge is removed.

Chalermsook et al.~\cite{chalermsook2021vertex} 
utilize \enumcuts (the normal graph version) to check whether an edge is non-essential.
In particular, they
introduced the notion of \emph{auxiliary graph} on a $\phi$-expander that helps identifying non-essential edges.
Unfortunately, their definition (and the construction) to the auxiliary graph does not have the desired property that leads to a sufficient criterion recognizing an essential edge.
To illustrate this, we first formally state 
the definition of an auxiliary graph from Chalermsook et al., which naturally extends to hypergraphs.

\begin{definition}[Auxiliary Graph]
\label{def:auxiliary-graph}
Let hypergraph $G=(V, E)$ be a $\phi$-expander.
An \emph{auxiliary graph} $\Gaux=(\Vaux, \Eaux)$ for $G$ has its vertex set $\Vaux$ consisting of three disjoint parts $\Vaux= P_0 \cup C_0 \cup E_0$.
Each edge in $\Eaux$ either connects  the elements between $P_0$ and $C_0$, or connects the elements between $C_0$ and $E_0$. The elements in the three parts and the edges between them are defined as follows:
\begin{itemize}[itemsep=0pt]
\item $C_0$ contains the cuts returned by \Cref{alg:enumerate-cuts} with value at most $c$ on $G$.
\item Each cut $c = (X, V\setminus X)\in C_0$ induces a terminal partition $p=(\T\cap X, \T\setminus X)$. Let $P_0$ be the collection of all such terminal partitions (excluding the trivial partitions where $\T\cap X=\emptyset$ or $\T\subseteq X$). Moreover, we add $(p, c)$ to $\Eaux$ if $c$ is a $p$-mincut.
\item $E_0=E$. For each $c=(X, V\setminus X)\in C_0$, we add $(c, e)$ to $\Eaux$ for all $e\in \partial X$.
\end{itemize}
\end{definition}

\paragraph{A Subtle Issue in \cite{chalermsook2021vertex}.}
Chalermsook et al.~\cite{chalermsook2021vertex} proposed the algorithm that recognizes an edge $e\in E_0$ to be essential iff there is a partition $p\in P_0$ such that $N(p)\subseteq N(e)$.
That is, all $p$-mincuts in $C_0$ neighboring to $p$ contains $e$. Unfortunately, this algorithm has a subtle issue.
Indeed, if $e$ is an essential edge then there exists such a partition. However, it is not the case conversely.
The statement would have been true if $C_0$ is the set of \emph{all} cuts with value at most $c$.
However, the cuts returned by \Cref{alg:enumerate-cuts} are connected cuts only. Only considering connected cuts will result in recognizing non-essential edges to be essential (even in normal graphs). We give an example below.

\paragraph{An Example.}
Consider the example shown in \Cref{fig:aux-graph-counterexample}.
The graph has $7$ vertices $v_1, v_2, \ldots, v_7$ and $3$ of them are terminal vertices $\T=\{v_1, v_2, v_7\}$. 
There are $3$ proper partitions of $\T$,
which separates each of $v_1$, $v_2$, and $v_7$ from the rest of terminals respectively.
For the terminal partitions $p_1:=(\{v_1\}, \{v_2, v_7\})$ and $p_2:=(\{v_2\}, \{v_1, v_7\})$, there are a unique $p_1$-mincut $(\{v_1\}, V\setminus \{v_1\})$ and a unique $p_2$-mincut $(\{v_2\}, V\setminus \{v_2\})$. Therefore, the cutting edges $a$ and $b$ respectively are essential by \Cref{def:essential-edges}.
For the terminal partition $p_7:=(\{v_7\}, \{v_1, v_2\})$, there are two $p_7$-mincuts: $(\{v_1, v_2\},\{v_3, v_4, v_5, v_6, v_7\})$ and $(\{v_1, v_2, v_3, v_4\},\{v_5, v_6, v_7\})$. The edges crossing these mincuts are $\{a, b\}$ and $\{d, e\}$ respectively.
So, by \Cref{def:essential-edges}, edges $d$ and $e$ are non-essential since none of the terminal partition has $d$ or $e$ appearing in every mincut.

However, \Cref{alg:enumerate-cuts} does not report one of the $p_7$-mincut $(\{v_1, v_2\}, \{v_3, v_4, v_5, v_6, v_7\})$ since this cut is not a connected cut.
Therefore, \Cref{alg:enumerate-cuts} returns only one $p_7$-mincut and 
$d$ and $e$ will be erroneously recognized as essential edges in the auxiliary graph.

\begin{figure}[h]
    \centering
    \includegraphics[width=5.5cm]{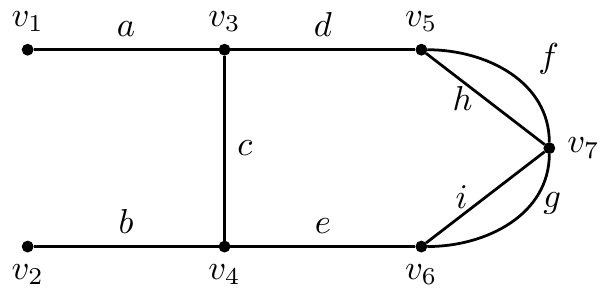}\hspace*{1.5cm}
    \includegraphics[width=4.5cm]{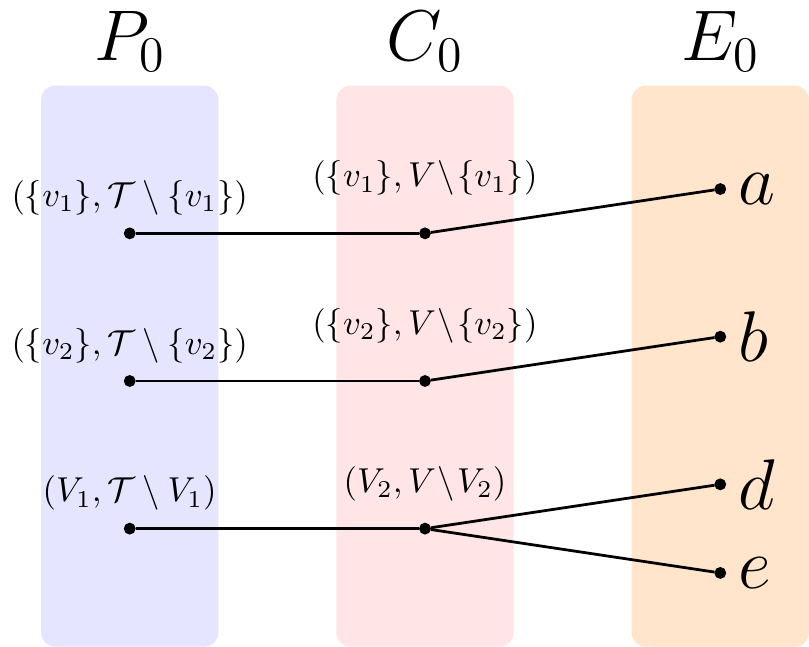}
    \caption{A graph $G$ and its auxiliary graph ($\T = \{v_1, v_2, v_7\}, V_1 = \{v_1, v_2\}, V_2 = \{v_1, v_2, v_3, v_4\}$). Notice that in the auxiliary graph, edges $d$ and $e$ are incorrectly considered as essential.}
    \label{fig:aux-graph-counterexample}
\end{figure}

To resolve the problem of incorrectly recognizing unessential hyperedges to be essential, we \emph{prune} the auxiliary graph.
The intuition is to prune some partitions as well as their incident nodes and edges in the auxiliary graph so that the unessential edges are removed.
For instance, the terminal partition $(\{v_1, v_2\}, \{v_7\})$ is one of the partitions that should be pruned with its incident nodes and hyperedges from the auxiliary graph. The partitions that should be preserved are called \emph{useful} terminal partitions (For instance, $(\{v_1\}, \{v_2,v_7 \})$ and $(\{v_2\}, \{v_1,v_7 \})$), which will be elaborated on in the next section.

\subsubsection{Pruned Auxiliary Graph}

Let us now formally define \emph{useful} terminal partitions and describe the \emph{pruned auxiliary graph} that helps us to correctly identify all essential hyperedges.
This subsection contains one of the main technical contributions of this paper.

\begin{definition}[Useful Terminal Partition]
\label{def:useful-partition}
A terminal partition of terminals $(A, B)$ is \emph{useful} if $\mincut_G(A, B)\le c$ and for all $(A, B)$-mincut $(X, V \setminus X)$, $G[X]$ is connected.
\end{definition}

With \Cref{def:useful-partition}, we can prune unnecessary partitions from the auxiliary graph $\Gaux$: we only keep the partitions in $P_0$ that are useful, and discard all the related mincuts from $C_0$ and the isolated elements in $E_0$.

Henceforth, we refer $\Gaux$ to be the \emph{pruned auxiliary graph} and $P_0, C_0, E_0$ are pruned as well.
The following lemma shows that, by working on the pruned auxiliary graph, the characterization of all essential hyperedges on $G$ proposed in \cite{chalermsook2021vertex} is now correct.

\begin{restatable}[]{lemma}{Lemmaessentialauxiliary}
\label{lem:essential-auxiliary}
A hyperedge $e\in E$ is essential if and only if $e\in E_0$ and there exists useful $p\in P_0$ such that $N(p)\subseteq N(e)$.
\end{restatable}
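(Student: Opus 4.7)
\textbf{Backward direction ($\Leftarrow$).} The argument is immediate from the definitions. If $p \in P_0$ is useful and $N(p) \subseteq N(e)$, then by \Cref{def:useful-partition} every $p$-mincut has a connected smaller side, hence is a connected cut of value at most $c$. Every such cut is enumerated by \Cref{alg:enumerate-cuts}, and because $p$ survives the pruning step all of them remain in $C_0$. Consequently $N(p)$ coincides with the set of all $p$-mincuts, and the hypothesis $N(p) \subseteq N(e)$ says each of them contains $e$. Thus $e$ is essential by \Cref{def:essential-edges}.

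\textbf{Forward direction ($\Rightarrow$).} The plan is to refine an arbitrary witness partition of $e$ into a useful one. Let $p_0 = (A_0, B_0)$ be a witness, so $\mincut_G(A_0, B_0) \le c$ and every $(A_0, B_0)$-mincut contains $e$. If $p_0$ is not already useful, I would pick a $p_0$-mincut $(X, V\setminus X)$ with $|E(X)| \le |E(V\setminus X)|$ (so $|E(X)| \le c\phi^{-1}$ by \eqref{eq:hyperedge-upper-bound}) such that $G[X]$ is disconnected, and examine the connected components $X_1, \dots, X_k$ of $G[X]$. A key structural observation to verify first is that any hyperedge with endpoints in two distinct components must also reach $V\setminus X$ (else it would reconnect the components inside $G[X]$); this yields $\partial X_i \subseteq \partial X$ and in particular $|\partial X_i| \le c$. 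Because $e \in \partial X$, some component $X^*$ contains an endpoint of $e$, and I would replace $p_0$ with the refinement $p' := (A_0 \cap X^*, \T \setminus X^*)$. The crux is then to show that $p'$ still witnesses $e$: by submodularity of the hypergraph cut function, any $p'$-mincut $(Y, V\setminus Y)$ can be patched with $X \setminus X^*$ into an $(A_0, B_0)$-cut $(\hat Y, V \setminus \hat Y)$ of value at most $\mincut_G(A_0, B_0)$, which therefore is a $p_0$-mincut and must contain $e$; the construction should preserve $e$ on the $Y$-side boundary. Iterating the refinement terminates because the smaller-side hyperedge count $|E(\cdot)|$ strictly decreases, producing the desired useful witness $p^*$.

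Once $p^*$ is in hand the conclusion is quick: every $p^*$-mincut is a connected cut of value at most $c$ and so belongs to $C_0$; because $p^*$ is useful it survives pruning, so $p^* \in P_0$; the witness property gives $N(p^*) \subseteq N(e)$ in the pruned $\Gaux$; and $e$ lies in at least one cut of $C_0$, so $e \in E_0$. The main obstacle will be the submodularity step: unlike in normal graphs, a single hyperedge can straddle several components of $G[X]$, so the patching argument and the resulting value estimate must be set up carefully, and one may need to iterate over choices of $X^*$ (or use an extremal choice of the $p_0$-mincut) to ensure both the value bound $|\partial \hat Y| \le \mincut_G(A_0, B_0)$ and the retention of $e$ on $\partial \hat Y$.
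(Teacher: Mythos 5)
Your overall strategy for the forward direction is the same as the paper's: take a witness partition for $e$, pick a disconnected mincut $(X,V\setminus X)$, observe that distinct components of $G[X]$ share no hyperedge (so $\partial X_i\subseteq\partial X$ and the boundaries partition $\partial X$), restrict the $A$-side to the component $X^*$ meeting $e$, and show by a patching argument that the refined partition is still a witness. Your backward direction is also fine and in fact more explicit than the paper's. The patching step works exactly as you sketch (subadditivity of $|\partial(\cdot)|$ suffices; the straddling issue you worry about does not arise because a hyperedge meeting two components of $G[X]$ would connect them in $G[X]$).

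There are two gaps. First, your termination measure is unjustified: after refining to $p'=(A_0\cap X^*,\T\setminus X^*)$, a $p'$-mincut $(Y,V\setminus Y)$ need not satisfy $Y\subseteq X$, and neither $|E(Y)|$ nor $\min\{|E(Y)|,|E(V\setminus Y)|\}$ is forced to be smaller than the corresponding quantity for $X$; the disconnected mincut chosen at the next iteration can be an entirely different, larger cut. The quantity that provably decreases is the number of terminals on the refined side: $|A_0\cap X^*|<|A_0|$. Second, for that strict decrease (and for $p'$ to be a proper, non-trivial partition at all) you must check that $A_0$ meets both $X^*$ and $X\setminus X^*$; this follows because otherwise $(X\setminus X^*,\,V\setminus(X\setminus X^*))$ (respectively $(X^*,V\setminus X^*)$) would itself be an $(A_0,B_0)$-mincut avoiding $e$, contradicting that $p_0$ witnesses $e$. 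The paper packages both points by choosing, among all witnesses of $e$, one minimizing $|A|$, so that the refined witness with strictly fewer $A$-terminals must be useful; with your iterative phrasing, substituting the terminal count for $|E(\cdot)|$ and adding the non-triviality check closes the argument.
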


\begin{proof}
The backward direction ($\Leftarrow$) follows from \Cref{def:essential-edges}, now we show the forward direction.

\paragraph{($\Rightarrow$):}
Suppose $e \in E$ is an essential hyperedge. Then by \Cref{def:essential-edges}, we know that there are at least one partition $p=(A, B)$ such that all $c$-thresholded $(A, B)$-mincuts involve $e$. If $(A, B)$ is useful then we are done. 
Now, assume that $p$ is not useful, 
then there exists at least one $(A, B)$-mincut $(X, V\setminus X)$ such that $G[X]$ is not connected.
Without loss of generality we may further assume that $|A|$ is the smallest among all non-useful partitions $(A', B')$ with $\mincut_G(A', B')\le c$ such that $e$ is involved in every $(A', B')$-mincut.

Since $G[X]$ is not connected, $G[X]$ has at least two connected components.
Therefore, there exists a partition on $X=X_1\cup X_2$ such that $E(X_1)\cap E(X_2)=\emptyset$ and, without loss of generality, assume $e\in E(X_1, V\setminus X_1)$.
Notice that $A\cap X_1\neq \emptyset$ (otherwise $(X_2, V\setminus X_2)$ is also a $(A, B)$-mincut but does not involve $e$,) and for the same reason $A\cap X_2\neq \emptyset$.
Now, the new terminal partition $p_{new}=(A\cap X_1, B\cup (A\setminus X_1))$ has the property that every $p_{new}$-mincut $(X_{new}, V\setminus X_{new})$ involves $e$ --- otherwise $(X_{new}\cup X_2, V\setminus X_{new}\setminus X_2)$  would be a $(A, B)$-mincut that does not involve $e$.
Finally, using $|A\cap X_1| < |A|$ and the assumption to minimality of $|A|$, we know that $p_{new}$ is useful so the statement holds.
\end{proof}

\subparagraph{Pruning the Auxiliary Graph.}

Here, we show how to efficiently test whether a terminal partition $(A,\T\setminus A)$ is useful. All we need to do is to find the \emph{\tmcut} $(X^*,V\setminus X^*)$ and check whether $G[X^*]$ is connected. To illustrate the idea, we first define the \amcut. Then we will show that we can check whether $(A,\T\setminus A)$ is useful by checking whether $G[X^*]$ is connected (\Cref{lem:use-amcut-to-identify-useful-partition}). Lastly, to algorithmically find the \tmcut in the hypergrpah, we resort to an algorithm called \locmincut in \cite{forster2020computing}.

\begin{definition}[$A$-minimal $(A,B)$-mincut]
\label{def:closest-mincut}
Let $G = (V,E)$ be a hypergraph and $A,B\subseteq V$ be disjoint subsets. We say an $(A,B)$-mincut $(X^*,V\setminus X^*)$ is the $A$-minimal $(A,B)$-mincut, if, for any other $(A,B)$-mincut $(X,V\setminus X)$, we have $X^* \subseteq X$. 
\end{definition}

For any $A,B$, an $A$-minimal $(A,B)$-mincut is unique. This standard fact follows from submodularity of cut functions on hypergraphs. For completeness, we give the proof in \cref{sec:uniqueness-of-A-minimal-cut}.

\begin{restatable}[]{proposition}{Lemmauniqueminimalcut}
\label{lem:uniqueness-of-A-minimal-mincut}
Let $G = (V,E)$ be a hypergraph. For two disjoint subsets $A,B \subseteq V$, the $A$-minimal $(A,B)$-mincut always exists and is unique.
\end{restatable}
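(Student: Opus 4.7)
The plan is to prove existence and uniqueness simultaneously via the standard submodularity argument, adapted to hypergraphs. Define the hypergraph cut function $f : 2^V \to \mathbb{Z}_{\ge 0}$ by $f(X) = |\partial_G X|$, i.e., the number of hyperedges that have vertices both in $X$ and in $V \setminus X$. First I would establish that $f$ is submodular: for any $X, Y \subseteq V$,
\[
f(X) + f(Y) \;\ge\; f(X \cup Y) + f(X \cap Y).
\]
This reduces to a per-hyperedge check. For each $e \in E$ split $e$ into the four (possibly empty) pieces $V_1 = e \cap X \cap Y$, $V_2 = e \cap X \setminus Y$, $V_3 = e \cap Y \setminus X$, $V_4 = e \setminus (X \cup Y)$, and do a short case analysis on which of the $V_i$ are nonempty to verify that the contribution of $e$ to the left-hand side is at least its contribution to the right-hand side. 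This is routine and exhausts only a handful of cases.

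Next I would use submodularity to show the family of $(A,B)$-mincuts is closed under intersection. Let $(X_1, V \setminus X_1)$ and $(X_2, V \setminus X_2)$ be two $(A,B)$-mincuts. Since $A \subseteq X_i$ and $B \subseteq V \setminus X_i$ for $i = 1,2$, both $X_1 \cap X_2$ and $X_1 \cup X_2$ are $(A,B)$-cuts, hence
\[
f(X_1 \cap X_2) \ge \mincut_G(A,B) \quad \text{and} \quad f(X_1 \cup X_2) \ge \mincut_G(A,B).
\]
Combined with submodularity and $f(X_1) = f(X_2) = \mincut_G(A,B)$, this forces equality throughout, so $X_1 \cap X_2$ is itself an $(A,B)$-mincut. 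Iterating over the (finitely many) subsets of $V$ that are $(A,B)$-mincuts, the intersection $X^*$ of all of them is still an $(A,B)$-mincut and is by construction contained in every $(A,B)$-mincut; this gives existence. Uniqueness is immediate: if $X^*$ and $X^{**}$ both have the $A$-minimality property, then $X^* \subseteq X^{**}$ and $X^{**} \subseteq X^*$, so $X^* = X^{**}$.

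The only conceptual step requiring care is the submodularity verification, since, unlike the ordinary-graph case where each edge has exactly two endpoints, a hyperedge can distribute its vertices over all four cells $V_1,\dots,V_4$. The argument nevertheless goes through because the indicator of ``$e$ is cut by $X \cup Y$'' requires $V_4 \ne \emptyset$ and the indicator of ``$e$ is cut by $X \cap Y$'' requires $V_1 \ne \emptyset$, and whenever both of these hold all four pieces participate, forcing $e$ to also be cut by both $X$ and $Y$. I expect this case analysis to be the main (minor) obstacle; everything after submodularity is a direct copy of the classical argument for graphs.
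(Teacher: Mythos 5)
Your proposal is correct and follows essentially the same route as the paper: establish submodularity of the cut function $f(X)=|\partial_G X|$, deduce that $(A,B)$-mincuts are closed under intersection, and take the intersection of all of them to get the unique $A$-minimal mincut. The only difference is that you verify submodularity by a per-hyperedge case analysis, whereas the paper computes $f(X)+f(Y)-f(X\cup Y)-f(X\cap Y)=2|E(X\setminus Y, Y\setminus X)|$ via a decomposition into pairwise crossing-edge counts; your per-hyperedge check is, if anything, the more careful version for hypergraphs, where a single hyperedge may meet three or all four of the cells.
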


Since the \amcut always exists and is unique, we can check the \tmcut to see whether the terminal partition $(A,\T\setminus A)$ is useful via the following lemma.

\begin{restatable}[]{lemma}{Lemmaidentifyusefulpartition}
\label{lem:use-amcut-to-identify-useful-partition}
Let $H = (V,E)$ be a connected hypergraph and $\T \subset V$ be a terminal set, a terminal partition $(A, \T\setminus A)$ is useful, if and only if the \tmcut $(X^*,V \setminus X^*)$ is connected and $|\partial X^*| \leq c$.
\end{restatable}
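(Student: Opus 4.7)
The plan is to prove both directions by unpacking \Cref{def:useful-partition} and leveraging the existence of the $A$-minimal mincut from \Cref{lem:uniqueness-of-A-minimal-mincut}. The forward direction ($\Rightarrow$) is essentially a direct consequence of definitions: if $(A, \T\setminus A)$ is useful, then $\mincut_H(A, \T\setminus A) \le c$, so in particular $|\partial X^*| \le c$; and since $(X^*, V\setminus X^*)$ is itself an $(A, \T\setminus A)$-mincut (being the $A$-minimal one), usefulness gives directly that $H[X^*]$ is connected.

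For the backward direction ($\Leftarrow$), I will argue by contradiction. Suppose $H[X^*]$ is connected with $|\partial X^*| \le c$, so $\mincut_H(A, \T\setminus A) \le c$, and let $(X, V\setminus X)$ be an arbitrary $(A, \T\setminus A)$-mincut. By $A$-minimality (\Cref{def:closest-mincut}), $X^* \subseteq X$. Assume for contradiction that $H[X]$ is disconnected. Since $H[X^*]$ is connected, $X^*$ lies entirely inside one connected component of $H[X]$; let $X_1$ be the vertex set of this component and $X_2 := X\setminus X_1$, so both are nonempty, $A \subseteq X^* \subseteq X_1$, and no hyperedge $e\in E$ has $e \cap X$ intersecting both $X_1$ and $X_2$.

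The heart of the argument is to show $(X_1, V\setminus X_1)$ is a strictly cheaper $(A, \T\setminus A)$-cut than $X$, giving the contradiction. For any $e \in \partial X_1$, either $e$ has a vertex in $V\setminus X$ (so $e \in \partial X$), or $e \subseteq X$, in which case $e$ would witness a forbidden crossing of $X_1$ and $X_2$; hence $\partial X_1 \subseteq \partial X$. Symmetrically $\partial X_2 \subseteq \partial X$, and the disconnection condition makes $\partial X_1$ and $\partial X_2$ disjoint and together they cover $\partial X$, so $|\partial X| = |\partial X_1| + |\partial X_2|$. Finally, since $H$ is connected and $X_2$ is a nonempty proper subset of $V$, we get $|\partial X_2| \ge 1$ and therefore $|\partial X_1| < |\partial X|$, contradicting that $X$ was a mincut.

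The main obstacle is the hypergraph-specific accounting in the last step: in a graph, boundary edges decompose trivially, but in a hypergraph one needs the observation that the absence of a hyperedge of $H[X]$ straddling the two components is exactly what forces $\partial X$ to split as a disjoint union $\partial X_1 \cup \partial X_2$. Connectivity of the ambient $H$ then rules out the degenerate case $|\partial X_2|=0$ that would otherwise make $(X_1, V\setminus X_1)$ merely tied with $(X, V\setminus X)$ instead of strictly cheaper.
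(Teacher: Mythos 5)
Your proof is correct and follows essentially the same route as the paper's: both directions hinge on the $A$-minimality giving $X^* \subseteq X$, the disjoint decomposition $|\partial X| = |\partial X_1| + |\partial X_2|$ forced by the absence of hyperedges straddling the components, and connectivity of $H$ making the discarded component contribute at least one boundary hyperedge. The only cosmetic difference is that you place $X^*$ inside a single component of $H[X]$ and drop the rest, whereas the paper first argues $X^*$ must meet both components and then contradicts $E(X_1,X_2)=\emptyset$ via connectivity of $H[X^*]$; the underlying facts used are identical.
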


\begin{proof}
One direction is trivial: The \tmcut is one of \tpar-mincuts, so it is connected and its value is at most $c$. It remains to prove another direction.

Suppose \tmcut $(X^*,V \setminus X^*)$ is connected and $|\partial X^*| \leq c$. As clearly $\mincut_G(A,\T \setminus A) \le c$, we need to show that all \tpar-mincuts are connected.
Assume for the sake of contradiction that there exists a disconnected \tpar-mincut, which consists of $(X_1, V\setminus (X_1 \cup X_2))$ and $(X_2, V\setminus (X_1 \cup X_2))$ where $E(X_1, X_2) = \emptyset$. We claim that that both $X^* \cap X_1$ and $X^* \cap X_2$ are nonempty.
If the claim holds, we are done because  $G[X^*]$ is connected, so $E(X_1, X_2) \not = \emptyset$, contradicting with $E(X_1, X_2) = \emptyset$. Thus, all \tpar-mincuts are connected.

Finally, to see the claim, assume for contradiction without loss of generality that $X^* \cap X_1 = \emptyset$. Notice that by \Cref{def:closest-mincut}, $A \subseteq X^* \subseteq X_1 \cup X_2$ and thus $A \subseteq X^* \subseteq X_2$. 
So $(X_2,V\setminus X_2)$ is a $(A,\T \setminus A)$-cut of value $|\partial X_2|$. But the $(A,\T \setminus A)$-mincut $(X^* , V\setminus X^*)$ has greater value of $|\partial X_1| + |\partial X_2| > |\partial X_2|$ (where $|\partial X_1|>0$ because the hypergraph $H$ is connected). This is a contradiction.
\end{proof}

\Cref{lem:use-amcut-to-identify-useful-partition} reduces the problem of testing whether a terminal partition $(A,\T\setminus T)$ is useful to checking whether the $A$-minimal \tpar-cut is connected. A randomized algorithm called \locmincut can efficiently find the \tmcut. More details about \Cref{thm:local-min-cut} can be found in \Cref{sec:local-min-cut}.

\begin{restatable}[]{theorem}{ThmLocalMinCut}
\label{thm:local-min-cut}
Suppose that the $A$-minimal $(A,B)$-mincut $(S,V\setminus S)$ has size at most $c$ and volume $\vol(S)=\sum_{v \in S}\deg(v)\le \nu$. 
Then, there is an algorithm $\locmincut(A,\nu,c)$ that finds $S$ in $O(\nu c^3 \log n)$ time with high probability. 
\end{restatable}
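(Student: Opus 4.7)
The plan is to adapt to hypergraphs the randomized local minimum cut algorithm of Forster et al.~\cite{forster2020computing}, which is originally stated for ordinary graphs. The overall scheme is a Monte Carlo routine running $T=\Theta(c^{2}\log n)$ independent trials of a randomized local exploration rooted at $A$, and returning the smallest cut of value at most $c$ encountered across the trials.

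In one trial we maintain an explored set $S$ (initialized to $A$) and a ``cut list'' $C$ (initially empty), and simulate a DFS starting at the super-node $A$. Whenever the DFS inspects a hyperedge $e$ with $V(e)\setminus S\neq\emptyset$, we flip an independent biased coin: with probability $p=\Theta(1/c)$ we append $e$ to $C$ and refuse to traverse it, otherwise we traverse $e$ and add $V(e)\setminus S$ to $S$. The trial aborts as soon as $|C|>c$ or $\vol(S)>\nu$; otherwise, once no unprocessed boundary hyperedge of $S$ remains, it outputs the pair $(S,|C|)$ as a candidate.

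For correctness, \Cref{lem:uniqueness-of-A-minimal-mincut} guarantees that $S^{*}$ is unique, so it suffices to lower bound the probability that a single trial outputs exactly $S^{*}$. Following the random-permutation analysis of \cite{forster2020computing}, one conditions on the order in which the DFS inspects the $O(\nu)$ hyperedges incident to $V(S^{*})$, and observes that the trial returns $S^{*}$ precisely when every boundary hyperedge of $S^{*}$ is placed in $C$ and every internal hyperedge of $G[S^{*}]$ is traversed. With $p=\Theta(1/c)$ this event occurs with probability $\Omega(1/c^{2})$. Importing the argument to hypergraphs is essentially syntactic once each hyperedge is treated as a single atomic event in the DFS rather than expanded into pairwise edges. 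Running $T=\Theta(c^{2}\log n)$ independent trials then boosts the overall failure probability to $n^{-\Omega(1)}$.

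For the running time, each trial inspects at most $\vol(S)+|C|=O(\nu)$ hyperedges, and with a standard incidence-list representation each inspection is $O(c)$ amortized work (charged either to an increase in $\vol(S)$ or to a new entry in $C$), giving a per-trial cost of $O(\nu c)$; multiplying by $T=\Theta(c^{2}\log n)$ yields the claimed $O(\nu c^{3}\log n)$ bound. The main obstacle is the single-trial analysis: the naive product bound $p^{c}(1-p)^{\nu}$ is negligible, and one must invoke the random-ordering argument of \cite{forster2020computing} to push the per-trial success probability up to $\Omega(1/c^{2})$. Once hyperedges are handled as atomic DFS events this argument transfers to the hypergraph setting without any substantive modification.
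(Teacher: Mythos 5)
Your route is genuinely different from the paper's: the paper does not re-derive a local cut algorithm for hypergraphs at all, but instead reduces to the known normal-graph result. Concretely, it builds the incidence (bipartite) graph of $H$, splits every node into an in/out pair joined by $c+1$ parallel arcs for original vertices and a single arc for hyperedge-nodes, and then invokes \Cref{lem:local-min-cut} from \cite{forster2020computing} as a black box; the extra factor of $c$ in the stated $O(\nu c^{3}\log n)$ bound is exactly the volume blow-up $\nu \mapsto O(\nu c)$ incurred by this reduction. Re-implementing the local search natively on hypergraphs would be a legitimate alternative, but as written your version has a real gap.

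The gap is the single-trial success probability. In your algorithm each boundary hyperedge of $S^{*}$ is placed in $C$ by an \emph{independent} coin of bias $p=\Theta(1/c)$, so the event ``all (up to $c$) boundary hyperedges of $S^{*}$ are cut and all internal ones are traversed'' has probability at most $p^{|\partial S^{*}|}=c^{-\Theta(c)}$, not $\Omega(1/c^{2})$. Conditioning on the order in which the DFS inspects hyperedges cannot improve this: the coins are independent of the inspection order, so the product bound is exact, and boosting by $\Theta(c^{2}\log n)$ trials leaves an exponentially small overall success probability. The random-ordering analysis of \cite{forster2020computing} belongs to a structurally different procedure (a budgeted DFS that, upon exhausting its edge budget, reverses the path to a uniformly random traversed edge, iterated $k$ times), and it does not transfer to your coin-flip scheme; ``treating hyperedges as atomic DFS events'' does not repair this. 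A secondary issue: your routine never consults $B$ and returns the smallest cut of value at most $c$ seen across trials, so even on success it is not established that the output is the \emph{$A$-minimal $(A,B)$-mincut} rather than some other low-volume cut around $A$; $A$-minimality is exactly what the downstream usefulness test (\Cref{lem:use-amcut-to-identify-useful-partition}) relies on, so this cannot be waved away.
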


To analyze the runtime of constructing a pruned auxiliary graph for an expander, we first discuss about the runtime of pruning a given auxiliary graph.

\subparagraph{Runtime of pruning auxiliary graph.}
To prune an auxiliary graph, for all terminal partitions $(A,\T\setminus A)$, we use \locmincut to find the \tmcut and check whether it is connected. There are at most $(rc \phi^{-1})^{rc}$ terminal partitions in an auxiliary graph.
Let $(S, V\setminus S)$ be the \tmcut.
Because $\vol(S) = \sum_{v\in S} \deg(v) \leq |E(S)|\max_{e\in E}{|e|} = rc\phi^{-1}$,
by setting $\nu = rc\phi^{-1}$ and applying \Cref{thm:local-min-cut},
each terminal partition
needs $O(rc\phi^{-1}c^3\log{n})$ time to check.
Therefore,  the total time of pruning the auxiliary graph is $O((rc\phi^{-1})^{rc+1}c^3\log{n})$.

We now analyze the runtime of constructing a pruned auxiliary graph. 

\begin{restatable}[]{lemma}{LemmaConstructPrunedAuxiliaryGraph}
\label{lem:construct-pruned-auxiliary-graph}
Given a $\phi$-expander $G$, there is an algorithm that constructs a pruned auxiliary graph in $n(rc\phi^{-1})^{O(rc)}$ time.
\end{restatable}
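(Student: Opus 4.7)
The plan is to construct the auxiliary graph $\Gaux$ of \Cref{def:auxiliary-graph} directly from the output of \enumcuts, and then use \locmincut together with \Cref{lem:use-amcut-to-identify-useful-partition} to prune non-useful partitions. The main budget will be the number of connected cuts returned by \Cref{alg:enumerate-cuts}, which by \Cref{lem:enumerate-cuts} is at most $n(rc\phi^{-1})^{rc}$, and this same quantity will dominate the running time up to $\poly(rc\phi^{-1}\log n)$ factors.

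First I would run \enumcuts$(G,\phi,r,c)$ to obtain the set $C_0$ of all connected cuts of value at most $c$; by \Cref{lem:enumerate-cuts} this costs $O(n(rc\phi^{-1})^{rc+1})$ time. Next I would initialize $E_0 := E$ and walk through $C_0$ once: for each $(X,V\setminus X)\in C_0$ I form the induced terminal partition $p=(\T\cap X,\T\setminus X)$ (skipping the trivial cases where one side contains no terminal), add $p$ to $P_0$, and insert the cut-to-edge incidences $(c,e)$ for every $e\in\partial X$. To insert the partition-to-cut edges $(p,c)$ correctly I would first scan $C_0$ grouped by $p$, compute the value $\mincut_G^c(p) := \min_{c \in C_0 \text{ inducing } p}|\partial X|$, and then add $(p,c)$ exactly when the cut $c$ achieves this value. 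Since each cut satisfies $|E(X)|\le c\phi^{-1}$ and each hyperedge has at most $r$ endpoints in $X$, the list of incidences produced per cut is $O(rc\phi^{-1})$, so assembling $\Gaux$ takes $|C_0|\cdot O(rc\phi^{-1}) = n(rc\phi^{-1})^{O(rc)}$ total time.

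Then comes the pruning step, which is the delicate part. For each $p=(A,\T\setminus A)\in P_0$, by \Cref{lem:use-amcut-to-identify-useful-partition} I can decide whether $p$ is useful by finding the $A$-minimal \tpar-mincut $(X^*,V\setminus X^*)$ and checking (i) $|\partial X^*|\le c$ and (ii) that $G[X^*]$ is connected. The key observation for efficiency is that since $G$ is a $\phi$-expander, any such mincut of value at most $c$ has $|E(X^*)|\le c\phi^{-1}$ on the smaller side, and thus
\[
\vol(X^*)=\sum_{v\in X^*}\deg(v)=\sum_{e:\,e\cap X^*\neq\emptyset}|e\cap X^*|\;\le\;r\cdot|E(X^*)|\;\le\;rc\phi^{-1}.
\]
Hence I can invoke $\locmincut(A,\nu,c)$ with $\nu=rc\phi^{-1}$ from \Cref{thm:local-min-cut}, which finds $X^*$ with high probability in $O(rc\phi^{-1}\cdot c^3\log n)$ time; checking connectivity of $G[X^*]$ then takes $O(\vol(X^*))=O(rc\phi^{-1})$ extra time. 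If $p$ fails either test, I remove $p$ from $P_0$, strip its incident edges, and delete any cut in $C_0$ that becomes isolated (together with its incident hyperedges in $E_0$).

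The total running time of the pruning pass is bounded by $|P_0|\le|C_0|\le n(rc\phi^{-1})^{rc}$ many calls to \locmincut, each costing $O(rc\phi^{-1}c^3\log n)$, giving $n(rc\phi^{-1})^{O(rc)}$ overall; this dominates steps~1 and~2, yielding the claimed bound. The main obstacle I anticipate is purely bookkeeping: ensuring that the partition-to-cut edges $(p,c)$ are added only for cuts achieving $\mincut_G(p)$ (not merely value at most $c$), which I handle by the grouping-by-$p$ preprocessing above, and correctly propagating deletions so that no stale node in $C_0$ or $E_0$ remains once its unique supporting partition is pruned.
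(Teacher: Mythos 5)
Your proposal is correct and follows essentially the same route as the paper: run \enumcuts to obtain the at most $n(rc\phi^{-1})^{rc}$ connected cuts, assemble the auxiliary graph from them, and then prune each terminal partition by running \locmincut with $\nu=rc\phi^{-1}$ (using the same volume bound $\vol(X^*)\le r\,|E(X^*)|\le rc\phi^{-1}$) and testing connectivity via \Cref{lem:use-amcut-to-identify-useful-partition}. Your extra bookkeeping for inserting the $(p,c)$ edges only at cuts achieving the minimum over the group is a detail the paper's proof leaves implicit, and it is handled correctly.
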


\begin{proof}
\enumcuts runs in $O(n(rc\phi^{-1})^{rc+1})$ time.
It produces at most $n(rc\phi^{-1})^{rc}$ of connected cuts, 
so the construction of (unpruned) auxiliary graph takes $O(nc(rc\phi^{-1})^{rc}) = O(n(rc\phi^{-1})^{rc+1})$ time.
With the discussion above, we know that pruning the auxiliary graph takes additional $O((rc\phi^{-1})^{rc+1}c^3\log n)$ time.
Hence, the total runtime is $O((n+c^3\log n)(rc\phi^{-1})^{rc+1}) = n(rc\phi^{-1})^{O(rc)}$ time.
\end{proof}

An interesting property about usefulness of a partition is that this definition is robust against contraction of any non-essential hyperedge, which is useful in the next section.
\begin{restatable}[]{lemma}{Lemmausefulnesscontraction}
\label{lem:usefulness-contraction}
Let $G=(V, E)$ be a hypergraph and let $\T\subseteq V\cup E$ be the set of terminals.
Let $(A, B)$ be a terminal partition of $\T$
so that there exists a $(A, B)$-mincut of size at most $c$.
Consider any non-essential hyperedge $e\in E$.
Then, $(A, B)$ is useful on $G$ if and only if $(A/e, B/e)$ is useful on $G/e$.
\end{restatable}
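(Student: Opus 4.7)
The plan is to apply \Cref{lem:use-amcut-to-identify-useful-partition} and reduce the biconditional to a question about connectivity of the $A$-minimal $(A,B)$-mincut in $G$ versus the $(A/e)$-minimal $(A/e,B/e)$-mincut in $G/e$. Non-essentiality of $e$ produces an $(A,B)$-mincut $X^\circ$ in $G$ with $e\notin\partial X^\circ$; $X^\circ$ descends to an $(A/e,B/e)$-cut in $G/e$ of the same value, and conversely every $(A/e,B/e)$-cut in $G/e$ lifts back to an $(A,B)$-cut in $G$ with $e\notin\partial$ of equal value. Hence $\mincut_G(A,B)=\mincut_{G/e}(A/e,B/e)\le c$, matching the cut-value part of usefulness on both sides, and what remains is to compare the connectivity of the two $A$-minimal mincuts.

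For the forward direction, I would take any $(A/e,B/e)$-mincut $Y$ in $G/e$, lift it to an $(A,B)$-mincut $X_Y$ in $G$ with $e\notin\partial X_Y$ and $|\partial X_Y|=|\partial Y|$, and use usefulness of $(A,B)$ on $G$ to obtain that $G[X_Y]$ is connected; then $(G/e)[Y]=G[X_Y]/e$ is connected, because contracting a hyperedge preserves connectivity of any induced subhypergraph containing it. This is the routine direction.

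For the backward direction, let $X^*$ denote the $A$-minimal $(A,B)$-mincut in $G$; by \Cref{lem:use-amcut-to-identify-useful-partition} it suffices to show $G[X^*]$ is connected. When $e\notin\partial X^*$, the set $X^*$ projects to an $(A/e,B/e)$-mincut in $G/e$ that is in fact $(A/e)$-minimal there (the bijection between $(A,B)$-mincuts with $e\notin\partial$ and $(A/e,B/e)$-mincuts is order-preserving, so the minima correspond); the hypothesis then delivers connectivity of $(G/e)[\pi(X^*)]$, which transfers back to $G[X^*]$ via un-contraction.

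The main obstacle is the case $e\in\partial X^*$. Here my plan is to pair $X^*$ with $X^\circ$ and exploit submodularity of the hypergraph cut function: both $X^*\cap X^\circ$ and $X^*\cup X^\circ$ are $(A,B)$-mincuts, and exactly one of them has $e$ outside its boundary depending on whether $V(e)\cap X^\circ=\emptyset$ or $V(e)\subseteq X^\circ$. Letting $X^0$ denote the smallest $(A,B)$-mincut with $e\notin\partial X^0$, we have $X^*\subseteq X^0$, $X^0$ corresponds under the projection to the $(A/e)$-minimal mincut of $G/e$, and $G[X^0]$ is connected by hypothesis. The delicate final step is to conclude that $G[X^*]$ is also connected, which I plan to do by mirroring the contradiction argument used in the proof of \Cref{lem:use-amcut-to-identify-useful-partition}: if $G[X^*]$ were disconnected as $X^*_1\sqcup X^*_2$, then $A$-minimality of $X^*$ forces both $A\cap X^*_i$ to be non-empty, and since no hyperedge spans $X^*_1$ and $X^*_2$ the hyperedge $e$ lies on at most one side, say $V(e)\cap X^*\subseteq X^*_1$; this yields a strictly smaller cut avoiding $e$ for the refined partition $(A\cap X^*_2,\T\setminus(A\cap X^*_2))$, contradicting the $A$-minimality of $X^*$ or, after applying non-essentiality to this refined partition, the definition of $X^0$ as the smallest mincut in its sub-lattice.
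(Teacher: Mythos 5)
Your proposal takes a genuinely different, and more careful, route than the paper's own proof. The paper disposes of the lemma in a few lines: it asserts that contraction and un-contraction preserve connectedness and then only verifies the cut-value condition, never confronting the fact that $(A,B)$-mincuts of $G$ whose boundary contains $e$ have \emph{no} counterpart in $G/e$ — which is precisely the case you isolate as the main obstacle. Your forward direction, the identity $\mincut_G(A,B)=\mincut_{G/e}(A/e,B/e)$, and the observation that any disconnected mincut must cross $e$ are all correct and are exactly the missing bookkeeping.

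However, the final step of your backward direction has a genuine gap: producing a small cut for the refined partition $(A\cap X_2^*,\T\setminus(A\cap X_2^*))$ contradicts neither the $A$-minimality of $X^*$ (a statement about $(A,B)$-mincuts, not about cuts for a different terminal partition) nor the minimality of $X^0$ in the sublattice of $e$-avoiding mincuts; both proposed targets are untouched by such a cut, and the hedged ``or'' signals that the contradiction has not actually been located. The contradiction you should aim for is with the hypothesis itself, i.e.\ with the connectivity of every $e$-avoiding $(A,B)$-mincut. Concretely, write the disconnected mincut as $X^*=X_1^*\sqcup X_2^*$ with $V(e)\cap X_2^*=\emptyset$, and apply non-essentiality of $e$ (\Cref{def:essential-edges}) to the refined partition on the side \emph{containing} $e$, namely $(A\cap X_1^*,\T\setminus(A\cap X_1^*))$, whose mincut value is at most $|\partial X_1^*|\le c$. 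This yields a mincut $Y_1$ for that partition with $e\notin\partial Y_1$. Then $Z:=Y_1\cup X_2^*$ is an $(A,B)$-cut of value at most $|\partial Y_1|+|\partial X_2^*|\le|\partial X_1^*|+|\partial X_2^*|=|\partial X^*|$, hence an $(A,B)$-mincut; the forced equality implies that no hyperedge meets both $Y_1\setminus X_2^*$ and $X_2^*$, so $G[Z]$ is disconnected; and $Z$ avoids $e$ because $e\notin\partial Y_1$ and $V(e)\cap X_2^*=\emptyset$. This disconnected $e$-avoiding mincut contradicts the usefulness of $(A/e,B/e)$ on $G/e$. Note that this argument applies to \emph{any} disconnected $(A,B)$-mincut, so your detour through \Cref{lem:use-amcut-to-identify-useful-partition}, the lattice of $e$-avoiding mincuts, and the set $X^0$ can be dropped entirely.
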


\begin{proof}
On one hand, in a connected component of $G$, contracting a hyperedge will results in all of the vertices in the hyperedge being contracted to a single vertex, and the vertex will be in all of the hyperedges that share at least one vertex with the contracted hyperedge. Therefore, the component remains connected. On the other hand, if we uncontract a vertex in a connected component (which means the vertex is connected to all other vertices in the component) the hyperedge, the all of the newly introduced vertices will also be connected to all the other vertices through the new hyperedge resulted from the uncontraction. To conclude, contracting and uncontracting hyperedges will not result in changes in connectedness in the hypergraph, we only need to show that if $e$ is an non-essential hyperedge, then (We only need to show one direction since the lemma states that $\mincut_G(A,B) \leq c$)
\[\mincut_G(A, B) \leq c \Rightarrow \mincut_{G/e}(A/e, B/e) \leq c\]
Since $e$ is non-essential we just fix an $(A,B)$-mincut that does not contain $e$, this cut will remain have cut value at most $c$ after contracting $e$ thus $\mincut_{G/e}(A/e, B/e) \leq c$.
\end{proof}

\subsubsection{The \phisparsify Procedure}

Once we obtain the pruned auxiliary graph of a $\phi$-expander $G$, a simple greedy \phisparsify procedure (\Cref{alg:phi-sparsify}) can be applied for constructing a $(\T, c)$-sparsifier $H$ of $G$.

\begin{algorithm}[h]
\DontPrintSemicolon
\caption{\phisparsify$(G, \T, \phi, r, c)$}\label{alg:phi-sparsify}
\label{phi-sparsify}%
\SetAlgoLined
\KwIn{$\phi$-expander hypergraph $G = (V,E)$ with $\mathrm{rank}$ $r$, terminal set $\T$, threshold parameter $c$.}
\KwOut{A $(\T,c)$-sparsifier of $G$.}
Run $\enumcuts$ and construct the pruned auxiliary graph $\Gaux=(\Vaux,\Eaux)$, where $\Vaux=P_0\cup C_0\cup E_0$.\label{line:phi-sparsify-contruct-auxiliary-graph}\\
Let $E'\gets E\setminus E_0$.\\
\For{each $e \in E_0$ (in any order)}{\label{line:enum-edge}
    \tcc{Use \Cref{lem:essential-auxiliary} to check and identify non-essential edges.}
    Compute the set of partitions $P'_e := P_0\cap N(N(e))$ who has at least one mincut that contains the edge $e$.\label{line:phi-sparsify-find-all-partition}\\
    \If{$\forall$ $p \in P'_e$, $N(p)\not\subseteq N(e)$\label{line:phi-sparsify-check-partition}}{
        Remove $N(e)$ and all incident edges from $\Gaux$ and then Remove all independent vertices from $\Gaux$.
        \label{line:phi-sparsify-simulate-contraction}\\
        
        $E'\gets E'\cup \{e\}$. 
        \label{line:phi-sparsify-simulate-contraction-1}\tcp*{$e$ is non-essential.}
    }
}
\Return $G/E'$.\label{line:return-contracted-graph}
\end{algorithm}

To analyze \Cref{alg:phi-sparsify}, we first introduce \Cref{lem:contraction-invariant}, \Cref{cor:batch-contraction-invariant}, and \Cref{lem:essential-edges-remain-essential}. Finally, \Cref{lem:sparsifier-validity-and-time} summarizes \Cref{alg:phi-sparsify} based on the previous lemmas and corollary.

\Cref{lem:contraction-invariant} guarantees that after each update of the auxiliary graph, the hypergraph represented by the updated auxiliary graph is still viable for the \phisparsify algorithm. 

\begin{restatable}[]{lemma}{Lemmacontractioninvariant}
\label{lem:contraction-invariant}
Let $e$ be the first edge that triggers \Cref{line:phi-sparsify-simulate-contraction} of \Cref{alg:phi-sparsify}.
Suppose that
$\Gaux_1$ is the updated auxiliary graph with vertex set $V_{1}^{\aux} = P_1 \cup C_1 \cup E_1$, 
and let $G_1=G/e$. 
Then, $\Gaux_1$ is exactly the pruned auxiliary graph of $G_1$.
\end{restatable}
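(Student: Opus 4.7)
The plan is to establish a natural bijection between $\Gaux_1$ and the pruned auxiliary graph of $G_1 = G/e$, and then to verify that the three vertex classes and their adjacencies agree under this bijection. The bijection maps each connected cut $(X, V\setminus X)$ of $G$ of value at most $c$ with $e \notin \partial_G X$ to the cut $(X/e, V_{G_1}\setminus X/e)$ of $G_1$, each useful terminal partition $(A, B)$ of $\T$ to $(A/e, B/e)$ of $\T/e$, and each hyperedge $e' \neq e$ to itself. The trigger condition on \Cref{line:phi-sparsify-check-partition} together with \Cref{lem:essential-auxiliary} certifies that $e$ is non-essential, a fact that is used throughout.

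I would first verify that the bijection is well defined and respects cut values, connectedness, and usefulness. Since $e$ lies wholly on one side of each cut in the image, $G[X]$ is connected iff $G_1[X/e]$ is connected, and $|\partial_G X| = |\partial_{G_1}(X/e)|$. For the partition map, if $(A,B)$ is useful in $G$, then $e$ cannot contain terminals from both $A$ and $B$, because otherwise $e$ would lie in every $(A,B)$-cut, contradicting non-essentialness; hence $(A/e, B/e)$ is a well-defined non-trivial partition of $\T/e$, and its usefulness transfers in both directions via \Cref{lem:usefulness-contraction}. The corresponding mincut values agree, $\mincut_G(A,B) = \mincut_{G_1}(A/e, B/e)$, because any cut in $G_1$ lifts to a cut of equal value in $G$, while non-essentialness supplies an $(A,B)$-mincut in $G$ avoiding $e$ whose projection into $G_1$ witnesses the reverse inequality.

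With the bijection in hand, I would check the three vertex classes. For partitions: every useful $p$ in the pruned $P_0$ of $G$ admits, by non-essentialness, a $p$-mincut that avoids $e$; since $p$ is useful, this mincut is connected and hence lies in $C_0 \setminus N(e) = C_1$, so $p$ is not isolated in $\Gaux_1$. Therefore $P_1$ equals the pruned $P_0$ of $G$, which under the bijection is exactly the useful partitions of $G_1$. For cuts: $C_1 = C_0 \setminus N(e)$ consists of connected $\le c$-cuts of $G$ avoiding $e$ that are mincuts of useful partitions in $G$, and these correspond bijectively to the connected $\le c$-cuts of $G_1$ that are mincuts of useful partitions in $G_1$, using the mincut equality above. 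For edges: the surviving hyperedges in $E_1$ are exactly those still adjacent to some cut in $C_1$, which the cut bijection identifies with the pruned $E_0$ of $G_1$. Incidences in the auxiliary graph then follow directly from the definitions.

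The main obstacle is the second paragraph: ensuring that usefulness is robust under contraction, that partitions remain non-trivial, and that mincut values are unchanged. Each of these hinges on $e$ being non-essential, which is exactly what the trigger condition of the update rule guarantees. The rest of the proof is routine bookkeeping of which vertices, cuts, and hyperedges survive removal of $N(e)$ and the subsequent isolated-vertex cleanup.
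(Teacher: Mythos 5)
Your proposal is correct and takes essentially the same route as the paper: both arguments reduce to showing that the connected cuts of value at most $c$ in $G/e$ are exactly those of $G$ that avoid $e$ (contraction of a non-cut hyperedge preserves value and connectivity, kills cuts through $e$, and creates no new small cuts), and both delegate the invariance of usefulness to \Cref{lem:usefulness-contraction}. The paper compresses the remaining bookkeeping by noting the pruned auxiliary graph is uniquely determined by its cut set $C_0$, whereas you spell out the correspondence for all three vertex classes, but the substance is identical.
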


\begin{proof}
Let $\Gaux_*$ be the pruned auxiliary graph of $G_1$ with the vertex set $V_{*}^{\aux} = P_* \cup C_* \cup E_*$. To prove $\Gaux_* = \Gaux_1$, it suffices to prove $C_* = C_1$ as the pruned auxiliary graph is uniquely defined (via \Cref{def:auxiliary-graph} and \Cref{lem:usefulness-contraction}) by the connected cuts of value at most $c$.

Notice that given an arbitrary cut $(X, V\setminus X)$ in a hypergraph, contracting a \emph{non-cutting} hyperedge from $E \setminus E(X, V\setminus X)$ does not invalidate the cut nor change the cut value.
Therefore, the contraction of a hyperedge $e$ invalidates every cut that involves $e$.
Moreover, no new connected cut of value at most $c$ joins $C_*$ because contracting a hyperedge in a connected cut does not decrease the cut value.
Hence, $C_*$ contains all the cuts in $C_0$ except the cuts involving $e$. On the other hand, $C_1 = C_0 \setminus N(e)$, which exactly contains all the cuts in $C_0$ except the cuts that involve $e$ as well.
\end{proof}

\begin{corollary}
\label{cor:batch-contraction-invariant}
Right before returning $G/E'$ from \Cref{alg:phi-sparsify} at \Cref{line:return-contracted-graph}, the modified $\Gaux$ is the pruned auxiliary graph of $G/E'$.
\end{corollary}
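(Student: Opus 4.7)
The plan is to induct on the number of times \Cref{line:phi-sparsify-simulate-contraction} of \Cref{alg:phi-sparsify} is executed. List, in the order they are processed by the for-loop, the edges $e_1, e_2, \ldots, e_t$ for which the test at \Cref{line:phi-sparsify-check-partition} succeeds. Set $G_0 := G$ and $G_i := G_{i-1}/e_i$ for $1 \le i \le t$, and let $\Gaux_i$ denote the state of the auxiliary graph immediately after $e_i$ has been handled (with $\Gaux_0$ being the graph produced at \Cref{line:phi-sparsify-contruct-auxiliary-graph}). The induction hypothesis I will maintain is that $\Gaux_i$ equals the pruned auxiliary graph of $G_i$.

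The base case is immediate from \Cref{line:phi-sparsify-contruct-auxiliary-graph}. For the inductive step, assume $\Gaux_{i-1}$ is the pruned auxiliary graph of $G_{i-1}$. By \Cref{lem:essential-auxiliary} applied in $G_{i-1}$, the check at \Cref{line:phi-sparsify-check-partition} correctly certifies $e_i$ as non-essential in $G_{i-1}$. The update performed at \Cref{line:phi-sparsify-simulate-contraction}, namely deleting $N(e_i)$ together with all of its incident edges and then stripping isolated vertices, is exactly the update analyzed in \Cref{lem:contraction-invariant}. Hence I can invoke \Cref{lem:contraction-invariant} with $G_{i-1}$ as the input hypergraph and $\Gaux_{i-1}$ as its pruned auxiliary graph, treating $e_i$ as the ``first'' contracted edge in a restarted execution. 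The conclusion is that $\Gaux_i$ equals the pruned auxiliary graph of $G_{i-1}/e_i = G_i$, closing the induction. Taking $i = t$ and noting that $G/E' = G_t$ (the edges in the initial $E \setminus E_0$ lie in no cut of $C_0$, so contracting them alters neither the $c$-thresholded cut structure nor the auxiliary graph) yields the statement.

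The only genuine obstacle is justifying the reuse of \Cref{lem:contraction-invariant} at every step. Although it is phrased for ``the first edge that triggers a contraction,'' what the lemma actually requires is that the current auxiliary graph is the pruned auxiliary graph of the current hypergraph and that the next contracted edge has been identified as non-essential there. These are exactly the conditions the induction preserves. One notational point to make explicit in the write-up is that the symbols $P_0, C_0, E_0$ appearing inside the for-loop (for instance in \Cref{line:phi-sparsify-find-all-partition}) must be read as the current node sets of the evolving $\Gaux$ rather than their values at initialization; otherwise partitions or cuts that were pruned in earlier iterations could spuriously re-enter the test later on. Under this reading the induction goes through without further adjustment.
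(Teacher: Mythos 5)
Your proposal is correct and follows essentially the same route as the paper, whose proof is simply to apply \Cref{lem:contraction-invariant} iteratively each time an edge is added to $E'$ and $\Gaux$ is modified; you merely spell out the induction explicitly and justify why the lemma, though phrased for the ``first'' contracted edge, applies at every step. Your additional remarks (the initial $E\setminus E_0$ edges, and reading $P_0,C_0,E_0$ as the evolving node sets) are reasonable clarifications of points the paper leaves implicit, not deviations in approach.
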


\begin{proof}
The statement is true by applying \Cref{lem:contraction-invariant} iteratively whenever the algorithm adds an hyperedge to $E'$ and modifies $\Gaux$.
\end{proof}

\begin{lemma}
\label{lem:essential-edges-remain-essential}
Essential hyperedges remains essential under contractions of non-essential hyperedges.
\end{lemma}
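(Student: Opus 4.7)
My plan is to fix an essential hyperedge $e$ in $G$ and a non-essential hyperedge $e' \neq e$, and show that the same terminal partition that certifies the essentiality of $e$ in $G$ continues to do so in $G/e'$. Let $(A, B)$ be a partition of $\T$ with $\mincut_G(A, B) \leq c$ such that every $(A, B)$-mincut in $G$ contains $e$. The first step is to verify that $(A, B)$ descends to a well-defined terminal partition in $G/e'$. The only obstruction would be $A \cap e' \neq \emptyset$ and $B \cap e' \neq \emptyset$ simultaneously; but in that case every $(A, B)$-cut is forced to cross $e'$ (since one endpoint of $e'$ lies in $X$ and another in $V \setminus X$), which would make $e'$ essential via the partition $(A, B)$, contradicting non-essentiality of $e'$. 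Hence $(A/e', B/e')$ is a well-defined partition of terminals in $G/e'$.

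Next, I would establish a value-preserving correspondence between cuts in $G/e'$ and cuts in $G$ that do not cross $e'$ (i.e.\ that place $V(e')$ on one side). Under this correspondence, $(A/e', B/e')$-cuts in $G/e'$ correspond exactly to $(A, B)$-cuts in $G$ not crossing $e'$, so $\mincut_{G/e'}(A/e', B/e') \geq \mincut_G(A, B)$. For the reverse direction, non-essentiality of $e'$ guarantees the existence of an $(A, B)$-mincut in $G$ not containing $e'$, which projects back to an $(A/e', B/e')$-cut in $G/e'$ of the same value. Thus the two mincut values agree and are at most $c$.

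Finally, to transfer essentiality, take any $(A/e', B/e')$-mincut in $G/e'$ and lift it to an $(A, B)$-cut in $G$ of equal value that does not cross $e'$; this is therefore an $(A, B)$-mincut of $G$ with value $\leq c$, which by hypothesis contains $e$. Since $e \neq e'$ and moreover $e \not\subseteq e'$ (else $e$ crossing a cut would force $e'$ to cross it too, making $e'$ essential via $(A, B)$), the hyperedge $e$ survives in $G/e'$, and the fact that $e$ crosses the lifted cut is preserved under projection to $G/e'$. Hence every $(A/e', B/e')$-mincut in $G/e'$ contains $e$, proving $e$ is essential in $G/e'$. The main technical subtlety throughout is bookkeeping the interaction between $e$, $e'$, and terminals living inside $e'$; but each potential obstruction (a partition crossing $e'$ mandatorily, or $e$ degenerating after contraction) is neutralized by a short argument that would otherwise force $e'$ itself to be essential, contradicting the hypothesis.
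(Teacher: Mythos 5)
Your proof is correct and rests on the same underlying idea as the paper's one-line argument: contraction never creates new cuts, so every $(A/e',B/e')$-mincut of $G/e'$ lifts to an $(A,B)$-mincut of $G$, which by hypothesis contains $e$. Your write-up is in fact more careful than the paper's, since you explicitly verify the points the paper leaves implicit --- that the certifying partition remains well-defined after contraction, that the mincut value is preserved (via the non-essentiality of $e'$ supplying a surviving mincut), and that $e$ itself still crosses the lifted cut.
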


\begin{proof}
For an arbitrary essential hyperedge $e$, there is a terminal partition $p$ such that all the $p$-mincuts contain $e$. After contractions of some non-essential hyperedges, all the $p$-mincuts still contain $e$, because contractions never introduce new cuts to a hypergraph.
\end{proof}

With \Cref{lem:contraction-invariant}, \Cref{cor:batch-contraction-invariant}, and \Cref{lem:essential-edges-remain-essential}, we are able to prove the following lemma which summarizes \phisparsify.

\begin{restatable}[]{lemma}{Lemmasparsifiervalidityandtime}
\label{lem:sparsifier-validity-and-time}
Let $G$ be a $\phi$-expander with $n$ vertices and $\T$ be a terminal set.
Then, \Cref{alg:phi-sparsify} produces a $(\T, c)$-sparsifier with $O(|\T|c^3)$ hyperedges and runs in $O(p+ n(rc\phi^{-1})^{O(rc)})$ time, where $p=\sum_{e\in E}|e|$ is the total size of the hypergraph $G$.
\end{restatable}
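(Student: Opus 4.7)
The proof breaks into three claims: correctness (the output is a $(\T,c)$-sparsifier of $G$), the size bound of $O(|\T|c^3)$ hyperedges, and the stated running time. My plan is to handle them in that order, leaning heavily on the machinery already assembled in the section.

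For correctness, I plan to maintain the invariant that after processing any prefix of the for-loop in \Cref{line:enum-edge}, (i) the current hypergraph $G/E'$ is $(\T,c)$-equivalent to $G$, and (ii) the current $\Gaux$ equals the pruned auxiliary graph of $G/E'$. Invariant (ii) is exactly the content of \Cref{cor:batch-contraction-invariant}. Invariant (i) holds at initialization because every hyperedge in $E\setminus E_0$ fails to appear in any $p$-mincut of value at most $c$ for a useful partition $p$, and is therefore non-essential by \Cref{lem:essential-auxiliary}; contracting non-essential hyperedges preserves $(\T,c)$-equivalence. In the inductive step, whenever the algorithm adds $e$ to $E'$ at \Cref{line:phi-sparsify-simulate-contraction-1}, the failure of the test in \Cref{line:phi-sparsify-check-partition}, interpreted via (ii) and \Cref{lem:essential-auxiliary}, certifies that $e$ is non-essential in the current $G/E'$, so its contraction preserves equivalence.

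For the size bound, I would show that every non-contracted hyperedge of the final $G/E'$ is essential in $G/E'$. Indeed, any surviving $e \in E_0$ passed the test at \Cref{line:phi-sparsify-check-partition} when it was processed, so by \Cref{lem:essential-auxiliary} it was essential in the then-current contracted graph; by \Cref{lem:essential-edges-remain-essential}, it remains essential under all subsequent contractions of non-essential hyperedges, and hence is essential in the final $G/E'$. Applying \Cref{thm:existence-of-sparsifier} to $G/E'$ with terminal set $\pi(\T)$, we see that any $(\T,c)$-sparsifier of $G/E'$ contains every essential hyperedge, so the number of essential hyperedges is at most $O(|\pi(\T)|c^3) \le O(|\T|c^3)$, which yields the desired bound on $|E(G/E')|$.

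For the running time, \Cref{lem:construct-pruned-auxiliary-graph} builds the initial pruned auxiliary graph in $n(rc\phi^{-1})^{O(rc)}$ time. Initializing $E' \gets E \setminus E_0$ requires a single scan over the hyperedges of $G$, which costs $O(p)$. Each iteration of the for-loop examines $N(e)$, $N(N(e))$, and compares neighborhoods inside $\Gaux$; all of these are bounded by the total size of $\Gaux$, itself $n(rc\phi^{-1})^{O(rc)}$. Because $|E_0| \le n(rc\phi^{-1})^{rc}$ by \Cref{lem:enumerate-cuts}, the loop contributes only $n(rc\phi^{-1})^{O(rc)}$ in aggregate. Summing gives $O(p + n(rc\phi^{-1})^{O(rc)})$, as claimed.

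The main obstacle I anticipate is rigorously propagating the invariant that the modified $\Gaux$ tracks the pruned auxiliary graph of $G/E'$ through many iterations, particularly accounting for the removal of invalidated cuts, partitions that cease to be useful after contractions, and isolated vertices in \Cref{line:phi-sparsify-simulate-contraction}. Once this bookkeeping is pinned down via \Cref{lem:contraction-invariant} and \Cref{lem:usefulness-contraction}, the per-iteration correctness reduces cleanly to \Cref{lem:essential-auxiliary}, and the size and time bounds follow by the counting arguments above.
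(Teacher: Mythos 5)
Your proposal is correct and follows essentially the same route as the paper: correctness via the invariant that $\Gaux$ tracks the pruned auxiliary graph of $G/E'$ (\Cref{lem:contraction-invariant}, \Cref{cor:batch-contraction-invariant}) combined with \Cref{lem:essential-auxiliary}, the size bound via essentiality of all surviving hyperedges plus the existence theorem, and the runtime via \Cref{lem:construct-pruned-auxiliary-graph} and neighborhood-size counting. Your one refinement — applying \Cref{thm:existence-of-sparsifier} to the final $G/E'$ rather than to $G$ — is if anything slightly cleaner, since survivors are certified essential only in the contracted graph, not in $G$ itself.
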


\begin{proof}
We show correctness of \phisparsify, the size of the returned $(\T,c)$-Sparsifier, and the runtime of \phisparsify respectively. 

\subparagraph{Correctness.}
According to \Cref{lem:contraction-invariant} and \Cref{cor:batch-contraction-invariant}, the modified $\Gaux$ is always 
the pruned auxiliary graph of $G/E'$ at the beginning of each iteration of \Cref{line:enum-edge}.
By \Cref{lem:essential-auxiliary}, only non-essential hyperedges will be contracted.
Therefore, by \Cref{def:essential-edges}, the values of $p$-mincuts for all useful partition $p$ 
are not affected and thus the returned hypergraph is a $(\T,c)$-sparsifier.  

\subparagraph{Size of the $(\T,c)$-Sparsifier.} 
Since all essential hyperedges cannot be contracted as they will affect $(A,\T\setminus A)$-mincut value for some $A \subset \T$, any $(\T, c)$-sparsifier of $G$ must include all essential hyperedges in $G$. By the existence theorem (\Cref{thm:existence-of-sparsifier}), the number of essential hyperedges in $G$ is at most $O(|\T|c^3)$. By \Cref{lem:essential-edges-remain-essential}, after contracting all non-essential hyperedges detected by \phisparsify, all the remaining hyperedges are essential. Therefore, the number of hyperedges in the $(\T,c)$-sparsifier returned by \phisparsify is at most $O(|\T|c^3)$.

\subparagraph{Runtime of \phisparsify.}
\Cref{line:phi-sparsify-contruct-auxiliary-graph} runs in $n(rc\phi^{-1})^{O(rc)}$ time by \Cref{lem:construct-pruned-auxiliary-graph}.
Since $|P'_e|\le |N(e)|$ for any $e\in E_0$, we know that \Cref{line:phi-sparsify-find-all-partition}
contributes a total runtime of at most
\begin{align*}
\sum_{e\in E_0} |N(e)| 
\le c|C_0| = n(rc\phi^{-1})^{O(rc)}.
\end{align*}
To implement \Cref{line:phi-sparsify-check-partition}, we create a hash table marking all vertices in $N(e)$ and for each $x\in P'_e$ the algorithm  iterates through $N(x)$ until the first neighbor that is not in $N(e)$. The whole process takes $O(|N(e)|)$ time. Hence, the total runtime contributed by \Cref{line:phi-sparsify-check-partition} takes $n(rc\phi^{-1})^{O(rc)}$ as well.
Now, since each cut in $C_0$ will be removed at most once, so the total runtime contributed by \Cref{line:phi-sparsify-contruct-auxiliary-graph} is at most $c|C_0|=n(rc\phi^{-1})^{O(rc)}$.

Finally, contracting all hyperedges in $E\setminus E'$ on $G$ takes linear time in the total hypergraph size $O(p)$.
Hence, the total time for $\phisparsify$ is $O(p + n(rc\phi^{-1})^{O(rc)})$.
\end{proof}

\subsection{The Almost-linear-time Algorithm of $(\T, c)$-Sparsifiers}
\label{sec:Sparsification-algorithm}
Finally, we combine all the tools from previous subsections and formally prove \Cref{thm:main} (1).

To construct a $(\T, c)$-sparsifier for a hypergraph $G = (V,E)$ with $O(|\T|c^3)$ hyperedges under the divide-and-conquer scheme,
we first split the hypergrpah $G$ into $\phi$-expanders $\{G_i = (V_i,E_i)\}$ via \expdec.
All boundary hyperedges in $\cup_i \partial V_i$ will be separated, and we add two anchor vertices per separated hyperedge.

For each $\phi$-expander, we enumerate all the connected terminal cuts with value at most $c$ via \Cref{alg:enumerate-cuts}. Then, we construct the auxiliary graph and prune it. Next we contract non-essential hyperedges with the help of the pruned auxiliary graph. Lastly, we glue all the sparsifiers of the $\phi$-expanders together by replacing all anchored separated hyperedges with the original hyperedges. 

The above procedure described in the previous two paragraphs is one iteration of sparsifying. The following algorithm \Cref{alg:sparsify-fast} repeats the procedure until there has been $\log m$ iterations. Finally, \Cref{alg:sparsify-fast} returns a $(\T,c)$-sparsifier with $O(|\T|c^3)$ hyperedges.

\begin{algorithm}[htbp]
\DontPrintSemicolon
\KwIn{hypergraph $G = (E,V)$ with rank $r$, terminal set $\T$, threshold parameter $c$, constant $C'$.}
\KwOut{a $(\T,c)$-sparsifier $H$.}
$H \gets G$.\\
$iter \gets 0$. \tcc{Number of iterations of the following while-loop.}
\Do{$iter < \log{m}$.}{
    $G \gets H$ \\
    $\phi^{-1} \gets 4C'rc^4 \log^3{n}$. \\
    $\{V_i \}_{i=1}^t \gets \expdec(G,\phi)$. \label{line:expander-decomposition}\\
    $G' \gets G$ \tcc{Anchored sub-hypergraphs will be separated from $G'$ one by one.} 
    \For{each $i=1,2,\ldots,t$}{
        Apply the divide step in \Cref{alg:divide-and-conquer} to $G'$ with terminal $\T$ and bipartition $(V_i, \bigcup_{\ell = i+1}^{t} V_\ell) $, and get $G_i \gets \hat{G'}[V_i]$ and
        $G' \gets \hat{G'}[\bigcup_{\ell=i+1}^{t} V_\ell]$.\\
        (For each boundary hyperedge $e$ with anchor vertices $\anchor_{e, 3}$ and $\anchor_{e, 4}$ created on the $\bigcup_{\ell=i+1}^t V_\ell$ side, we assign
        both $\anchor_{e, 3}$ and $\anchor_{e, 4}$ to an arbitrary $V_j$ such that $j>i$ and $e\cap V_j\neq \emptyset$.)
    }
    $\{H_i\}_{i=1}^t \gets \{ \phisparsify(G_i, V_i \cap \T, \phi, r,c)\}_{i=1}^t$. \label{line:sparsify} \tcc{The conquer step.}
    $H \gets H_t$ \tcc{Each sparsifier $H_i$ will be merged with $H$ one by one.}
    \For{each $i=t-1,\ldots, 1$}{
        Apply the combine step in \Cref{alg:divide-and-conquer} to merge $H_i$ with $H$. That is, all anchor vertices introduced at the divide step are removed and all separated hyperedges are replaced by the boundary hyperedges before separating $V_i$ from $\bigcup_{\ell=i}^t V_\ell$.
        \label{line:combine-sparsifiers}
    }
    $iter = iter + 1$.
}
\Return $H$.
\caption{\algfast$(G, r, \T, c, C')$}
\label{alg:sparsify-fast}
\end{algorithm}

With \Cref{alg:sparsify-fast}, we are ready to prove \Cref{conclusion}.

\begin{proof}
We show the correctness of \Cref{alg:sparsify-fast}, the size of the sparsifier the algorithm returns, and the time complexity.

\subparagraph{Correctness.} We note that any $\phi$-expander graph $G$, adding new (anchor) vertices to any existing hyperedge on $G$, is still an $\phi$-expander.
Each iteration of \Cref{alg:sparsify-fast} adopts the divide-and-conquer scheme described by \Cref{alg:divide-and-conquer} is justified by \Cref{lem:combine-sparsifiers}. Similarly, sparsifying $\phi$-expanders using \phisparsify described by \Cref{alg:phi-sparsify} is justified by \Cref{lem:sparsifier-validity-and-time}.

\subparagraph{Size of the sparsifier.}
We first focus on how much \Cref{alg:sparsify-fast} can sparsify a hypergraph in one iteration. Since \expdec guarantees there are at most $O(m\phi \log^3{n})$ inter-cluster hyperedges, in one iteration, there are at most $O(rm\phi \log^3{n})$ separated hyperedges. Therefore, the total number of all terminals (including the originally given terminals and additional anchor terminals) is bounded by $|\T| + C'rm \phi \log^3{n}$ for some constant $C'$. In each iteration, let the algorithm set $\phi = 1/(4C'rc^4 \log^3{n})$, so $\phi (C'rc^4 \log^3{n} + C' \log^3{n}) \leq \frac{1}{2}$. Denote the total number of hyperedges after iteration $i$ as $m_i$, we have
\begin{align*}
  m_{i+1} & = (|\T| + C'm_ir\phi \log^3{n})c^3 + C'm_i\phi \log^3{n} \\
  & = |\T|c^3 + m_i\phi (C'rc^3 \log^3{n} + C' \log^3{n}) \\
  & \leq |\T|c^3 + \frac{m_i}{2}.
\end{align*}
Notice that $m_0 = m$. Therefore, after $\ell$ iterations, the number of hyperedges is less than $$|\T|c^3 \sum_{i = 0}^{\ell-1}(\frac{1}{2^i}) + \frac{m}{2^{\ell}}.$$

After $O(\log{m})$ iterations, the algorithm returns a $(\T,c)$-sparsifier with $O(|\T|c^3)$ hyperedges.

\subparagraph{Time complexity.} In each iteration, the time is dominated by \expdec and \phisparsify, which take $O(p^{1+o(1)})$ and  $O(p + n(rc\phi^{-1})^{O(rc)})$ time repectively by \Cref{lem:expander-decompose-for-hypergraphs} and \Cref{lem:sparsifier-validity-and-time}. Notice that there are $\log m$ iterations, but in each iteration the number of hyperedges are halved.
Therefore we obtain
the total running time $O(p^{1+o(1)} + n(r^2c^5\log^3{n})^{O(rc)} \log{m})$ $=$ $O(p^{1+o(1)} + n(rc \log{n})^{O(rc)}\log m)$.
\end{proof}

This conclude the proof of \Cref{thm:main} (1).
For Part (2) of \Cref{thm:main}, we observe that, by working with appropriate notions of expansion in hypergraphs, the construction of Liu \cite{Liu20} naturally extends to hypergraphs. We give the proof for completeness in  \Cref{sec:part-2-of-main}.

\section*{Acknowledgement}

We thank Sorrachai Yingchareonthawornchai, 
Yang Liu, Yunbum Kook, and Richard Peng for the discussion that inspires the notion of the pruned auxiliary graph in this paper. 
We also thank anonymous reviewers for their valuable comments.

\bibliographystyle{alpha}
\bibliography{main}

\newcommand{\etalchar}[1]{$^{#1}$}
\begin{thebibliography}{ADD{\etalchar{+}}93}

\bibitem[ADD{\etalchar{+}}93]{althofer1993sparse}
Ingo Alth{\"o}fer, Gautam Das, David Dobkin, Deborah Joseph, and Jos{\'e}
  Soares.
\newblock On sparse spanners of weighted graphs.
\newblock {\em Discrete \& Computational Geometry}, 9(1):81--100, 1993.

\bibitem[BK15]{benczur2015randomized}
Andr{\'a}s~A Bencz{\'u}r and David~R Karger.
\newblock Randomized approximation schemes for cuts and flows in capacitated
  graphs.
\newblock {\em SIAM Journal on Computing}, 44(2):290--319, 2015.

\bibitem[BST19]{bansal2019new}
Nikhil Bansal, Ola Svensson, and Luca Trevisan.
\newblock New notions and constructions of sparsification for graphs and
  hypergraphs.
\newblock In {\em 2019 IEEE 60th Annual Symposium on Foundations of Computer
  Science (FOCS)}, pages 910--928. IEEE, 2019.

\bibitem[CDK{\etalchar{+}}21]{chalermsook2021vertex}
Parinya Chalermsook, Syamantak Das, Yunbum Kook, Bundit Laekhanukit, Yang~P
  Liu, Richard Peng, Mark Sellke, and Daniel Vaz.
\newblock Vertex sparsification for edge connectivity.
\newblock In {\em Proceedings of the 2021 ACM-SIAM Symposium on Discrete
  Algorithms (SODA)}, pages 1206--1225. SIAM, 2021.

\bibitem[CKN20]{chen2020near}
Yu~Chen, Sanjeev Khanna, and Ansh Nagda.
\newblock Near-linear size hypergraph cut sparsifiers.
\newblock In {\em 2020 IEEE 61st Annual Symposium on Foundations of Computer
  Science (FOCS)}, pages 61--72. IEEE, 2020.

\bibitem[CX18]{chekuri2018minimum}
Chandra Chekuri and Chao Xu.
\newblock Minimum cuts and sparsification in hypergraphs.
\newblock {\em SIAM Journal on Computing}, 47(6):2118--2156, 2018.

\bibitem[EGIN97]{eppstein1997sparsification}
David Eppstein, Zvi Galil, Giuseppe~F Italiano, and Amnon Nissenzweig.
\newblock Sparsification—a technique for speeding up dynamic graph
  algorithms.
\newblock {\em Journal of the ACM (JACM)}, 44(5):669--696, 1997.

\bibitem[FHL08]{FeigeHL08}
Uriel Feige, MohammadTaghi Hajiaghayi, and James~R. Lee.
\newblock Improved approximation algorithms for minimum weight vertex
  separators.
\newblock {\em {SIAM} J. Comput.}, 38(2):629--657, 2008.

\bibitem[FNY{\etalchar{+}}20]{forster2020computing}
Sebastian Forster, Danupon Nanongkai, Liu Yang, Thatchaphol Saranurak, and
  Sorrachai Yingchareonthawornchai.
\newblock Computing and testing small connectivity in near-linear time and
  queries via fast local cut algorithms.
\newblock In {\em Proceedings of the Fourteenth Annual ACM-SIAM Symposium on
  Discrete Algorithms}, pages 2046--2065. SIAM, 2020.

\bibitem[HDLT01]{holm2001poly}
Jacob Holm, Kristian De~Lichtenberg, and Mikkel Thorup.
\newblock Poly-logarithmic deterministic fully-dynamic algorithms for
  connectivity, minimum spanning tree, 2-edge, and biconnectivity.
\newblock {\em Journal of the ACM (JACM)}, 48(4):723--760, 2001.

\bibitem[JS22]{jin2022fully}
Wenyu Jin and Xiaorui Sun.
\newblock Fully dynamic st edge connectivity in subpolynomial time.
\newblock In {\em 2021 IEEE 62nd Annual Symposium on Foundations of Computer
  Science (FOCS)}, pages 861--872. IEEE, 2022.

\bibitem[KK15]{kogan2015sketching}
Dmitry Kogan and Robert Krauthgamer.
\newblock Sketching cuts in graphs and hypergraphs.
\newblock In {\em Proceedings of the 2015 Conference on Innovations in
  Theoretical Computer Science}, pages 367--376, 2015.

\bibitem[KKTY21]{kapralov2021towards}
Michael Kapralov, Robert Krauthgamer, Jakab Tardos, and Yuichi Yoshida.
\newblock Towards tight bounds for spectral sparsification of hypergraphs.
\newblock In {\em Proceedings of the 53rd Annual ACM SIGACT Symposium on Theory
  of Computing}, pages 598--611, 2021.

\bibitem[KKTY22]{kapralov2022spectral}
Michael Kapralov, Robert Krauthgamer, Jakab Tardos, and Yuichi Yoshida.
\newblock Spectral hypergraph sparsifiers of nearly linear size.
\newblock In {\em 2021 IEEE 62nd Annual Symposium on Foundations of Computer
  Science (FOCS)}, pages 1159--1170. IEEE, 2022.

\bibitem[KR13]{krauthgamer2013mimicking}
Robert Krauthgamer and Inbal Rika.
\newblock Mimicking networks and succinct representations of terminal cuts.
\newblock In {\em Proceedings of the twenty-fourth annual ACM-SIAM symposium on
  Discrete algorithms}, pages 1789--1799. SIAM, 2013.

\bibitem[KR14]{khan2014mimicking}
Arindam Khan and Prasad Raghavendra.
\newblock On mimicking networks representing minimum terminal cuts.
\newblock {\em Information Processing Letters}, 114(7):365--371, 2014.

\bibitem[KW20]{KratschW20}
Stefan Kratsch and Magnus Wahlstr{\"{o}}m.
\newblock Representative sets and irrelevant vertices: New tools for
  kernelization.
\newblock {\em J. {ACM}}, 67(3):16:1--16:50, 2020.

\bibitem[Liu20]{Liu20}
Yang~P. Liu.
\newblock Vertex sparsification for edge connectivity in polynomial time.
\newblock {\em CoRR}, abs/2011.15101, 2020.

\bibitem[Lov77]{Lovasz77}
L\'aszl\'o Lov\'asz.
\newblock Flats in matroids and geometric graphs.
\newblock {\em Combinatorial Surveys, Proc. Sixth British Combinatorial Conf.,
  Royal Holloway Coll., Egham}, pages 45--86, 1977.

\bibitem[LS22]{LongS22}
Yaowei Long and Thatchaphol Saranurak.
\newblock Near-optimal deterministic vertex-failure connectivity oracles.
\newblock {\em CoRR}, abs/2205.03930, 2022.

\bibitem[LSS20]{LSS20}
Daniel Lokshtanov, Saket Saurabh, and Vaishali Surianarayanan.
\newblock A parameterized approximation scheme for min {$k$}-cut.
\newblock In Sandy Irani, editor, {\em 61st {IEEE} Annual Symposium on
  Foundations of Computer Science, {FOCS}}, 2020.

\bibitem[Mar09]{Marx09}
D{\'{a}}niel Marx.
\newblock A parameterized view on matroid optimization problems.
\newblock {\em Theor. Comput. Sci.}, 410(44):4471--4479, 2009.

\bibitem[NI92]{nagamochi1992linear}
Hiroshi Nagamochi and Toshihide Ibaraki.
\newblock A linear-time algorithm for finding a sparsek-connected spanning
  subgraph of ak-connected graph.
\newblock {\em Algorithmica}, 7(1):583--596, 1992.

\bibitem[PSS19]{peng2019optimal}
Richard Peng, Bryce Sandlund, and Daniel~D Sleator.
\newblock Optimal offline dynamic 2, 3-edge/vertex connectivity.
\newblock In {\em Workshop on Algorithms and Data Structures}, pages 553--565.
  Springer, 2019.

\bibitem[ST11]{spielman2011spectral}
Daniel~A Spielman and Shang-Hua Teng.
\newblock Spectral sparsification of graphs.
\newblock {\em SIAM Journal on Computing}, 40(4):981--1025, 2011.

\bibitem[SY19]{soma2019spectral}
Tasuku Soma and Yuichi Yoshida.
\newblock Spectral sparsification of hypergraphs.
\newblock In {\em Proceedings of the Thirtieth Annual ACM-SIAM Symposium on
  Discrete Algorithms}, pages 2570--2581. SIAM, 2019.

\end{thebibliography}

\appendix

\section{Omitted Content from \Cref{sec:structural-properties}}

\subsection{Proof of \Cref{lem:combine-sparsifiers} Part 2}\label{appendix:part-2-proof}

In this subsection we show that $\mincut_H^c(\pi(A), \pi(B))\ge \mincut_G^c(A, B)$.

We first assume that $(Y, \pi(V)\setminus Y)$ is a $(\pi(A), \pi(B))$-mincut of value at most $c$.

Let $S=E_G(V_1, V_2)$.
We know that $H$ is defined by $H_1\cup H_2 \cup S - \Sep(S, V_1, V_2)$, where $H_1$ is a $(\T_1, c)$-sparsifier of $\hat{G}[V_1]$ and $H_2$ is a $(\T_2, c)$-sparsifier of $\hat{G}[V_2]$.
Again, we define the cuts $Y_1$ and $Y_2$ respectively.
We first add $Y\cap \pi(V_1)$ to $Y_1$, then for each separated hyperedge $\pi(e)$ where $e\in S$ there are three cases for adding the terminal vertices into $Y_1$: we add $\{\pi(\anchor_{e, 1}), \pi(\anchor_{e, 2})\}$ to $Y_1$ if $\pi(e)\subseteq Y$; we add $\pi(\anchor_{e, 1})$ to $Y_1$ if $\pi(e)\not\subseteq Y$ and $\pi(e)\cap Y\neq \emptyset$; we add nothing if $\pi(e)\cap Y=\emptyset$. Similarly we define $Y_2$ by adding $Y\cap \pi(V_2)$ and corresponding anchor vertices.

Now we have two cuts $(Y_1, \pi(\hat{V}_1)\setminus Y_1)$ and $(Y_2, \pi(\hat{V}_2)\setminus Y_2)$ with values $y_1$ and $y_2$. Each of them has value at most the value to the cut $(Y, \pi(V)\setminus Y)$ so we know that both $y_1, y_2\le c$.
They are $(\pi(\Aextend|_{V_1\cup A_1}), \pi(\Bextend|_{V_1\cup A_1}))$-cut and $(\pi(\Aextend|_{V_2\cup A_2}), \pi(\Bextend|_{V_2\cup A_2}))$-cut respectively.

By definition of $(\T_1, c)$-sparsifier and $(\T_2, c)$-sparsifier,
there exists a $(\Aextend|_{V_1\cup A_1}, \Bextend|_{V_1\cup A_1})$-mincut $(X_1, \hat{V}_1\setminus X_1)$ and a $(\Aextend|_{V_2\cup A_2}, \Bextend|_{V_2\cup A_2})$-mincut $(X_2, \hat{V}_2\setminus X_2)$ whose cut values are $x_1\le y_1$ and $x_2\le y_2$.

Now, notice that $y_1+y_2 = |\partial Y| + |\partial Y\cap \pi(S)|$. Here we emphasize that $\partial Y\cap \pi(S)$ may be a multiset, as there could be many hyperedges being contracted into the same subset of vertices.
Again, let $X=X_1\cup X_2\cup S -\Sep(S, V_1, V_2)$, we know that $|\partial X|  = x_1+x_2-|(\partial X)\cap S|$. 
Since all edge $e\in S$ such that $\pi(e)\in \partial Y$ were not contracted (they must be essential as such edges appear in every $(\Aextend|_{V_i\cup A_i}, \Bextend|_{V_i\cup A_i})$-mincut for both $i\in \{1, 2\}$), we know that $|(\partial X)\cap S|\ge |\partial Y\cap \pi(S)|$. Hence, we have
\begin{align*}
    \mincut_G^c(A, B) &\le |\partial X|\\
    &= x_1+x_2-|(\partial X)\cap S| \\
    &\le y_1+y_2-|(\partial Y)\cap \pi(S)|\\
    &=|\partial Y| \\
    &= \mincut_H^c(\pi(A), \pi(B))
\end{align*}
as desired. \hfill $\square$

\subsection{Proof of \Cref{lem:gammoid}}\label{appendix:proof-of-gammoid}

\newcommand{\Gsplit}{G_{\mathsf{split}}}
\newcommand{\M}{\mathcal{M}}
\newcommand{\rank}{\ensuremath{\mathrm{rank}}}

\Lemmagammoid*

\begin{proof}[Proof to \Cref{lem:gammoid}]
Let $k=|\T|$.
It suffices to show that whenever $G$ has $6k c^2+1$ edges,
there exists one hyperedge $e$ so that $G/e$ is a $(\T, c)$-sparsifier of $G$ and that $\T$ is still $(5c, c)$-edge-unbreakable in $G/e$.

Intuitively,
to find such an hyperedge,
we start with
reducing the $c$-thresholded hyperedge connectivity into vertex $c$-connectivity
by carefully constructing the graph $\Gsplit$ from $G$.
Then, we apply the powerful technique discovered by Kratsch and Wahlstr{\"{o}}m~\cite{KratschW20} on $\Gsplit$,
which defines a representable matroid $\M_1\oplus\M_2\oplus\M_3$ based on $\Gsplit$ and its two variants $\Gsplit'$ and $\Gsplit''$. 
Followed by a carefully chosen collection of subsets $\mathcal{J}$, 
there exists a representative set $\mathcal{J}^*\subseteq \mathcal{J}$ of size at most $\rank(\M_1)\rank(\M_2)\rank(\M_3) \le 6k c^2$.
Moreover, the elements of $\mathcal{J}$ corresponds to vertices and hyperedges in $G$.
Since $G$ has at least $6k c^2+1$ hyperedges, at least one hyperedge $e$ on $G$ is not \emph{essential}, and hence contractable.
The above process for finding a contractable edge $e$  will be repeated until there are at most $6k c^2$ hyperedges left.

Let $G=(V, E)$ be a hypergraph and let $\T\subseteq V$ be a set terminal vertices.
We first construct the undirected \emph{incidence} bipartite graph $\Ginc=(\Vinc, \Einc)$ as the following:
\begin{align*}
    \Vinc &= V\cup E\\
    \Einc &= \{ \{v, e\}\ |\ v\in V, e\in E, \text{ and } v\in e\}.
\end{align*}
Now, we perform a reduction from hyperedge cuts on $G$ to $c$-thresholded vertex cuts on $\Ginc$ similar to \cite{Liu20} and construct a directed graph $\Gsplit$ based on the incidence graph $\Ginc$: for each vertex $v\in \Vinc\cap V$ we split $v$ into a $(c+1)$-clique on $\Ginc$, then we replace each undirected edge $\{v, e\}$ in $\Einc$ with two directed edges $(v, e)$ and $(e, v)$.
In \Cref{claim:gammoid:reduction-to-vertex-cuts} we show that all $c$-thresholded hyperedge mincuts on $G$ correspond to some $c$-thresholded vertex mincuts on $\Gsplit$.
The definition of $\Gsplit$ inherently defines a mapping $\pi: V\cup E \to \Vsplit$ where for each $v\in V$, $\pi(v)$ is assigned to be any of the $c+1$ vertices split from $v$.

\begin{claim}\label{claim:gammoid:reduction-to-vertex-cuts}
Let $A, B\subseteq V$ with $\mincut_G(A, B)\le c$.
Let $\Asplit=\pi(A)$ and $\Bsplit=\pi(B)$.
Then the minimum vertex cut $\Csplit\subseteq \Vsplit$ that separates $\Asplit$ and $\Bsplit$ on $\Gsplit$ has size $|\Csplit|=\mincut_G(A, B)$.
\end{claim}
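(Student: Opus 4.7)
Setting $k \coloneqq \mincut_G(A,B)$, the plan is to prove $|\Csplit| = k$ via two inequalities.

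\emph{Direction 1 ($|\Csplit| \le k$).} I would take a minimum $(A,B)$-hyperedge cut $F \subseteq E$ with $|F| = k$ and let $\hat{F} \subseteq \Vsplit$ be the set of hyperedge-vertices in $\Gsplit$ corresponding to $F$. The claim is that $\hat{F}$ separates $\Asplit$ from $\Bsplit$ in $\Gsplit$. Otherwise, a directed path from some $\pi(a) \in \Asplit$ to some $\pi(b) \in \Bsplit$ in $\Gsplit - \hat{F}$ would project to a walk in $G$ by collapsing each maximal sub-walk that lies entirely inside a single vertex-clique to the corresponding vertex of $V$ and reading off the alternating hyperedge-vertices; this yields a sequence $a = v_0, e_1, v_1, \ldots, e_\ell, v_\ell = b$ with every $e_i \in E \setminus F$ and $v_{i-1}, v_i \in e_i$, contradicting that $F$ is an $(A,B)$-cut in $G$. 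Hence $|\Csplit| \le |\hat{F}| = k$.

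\emph{Direction 2 ($|\Csplit| \ge k$).} By Direction 1 we have $|\Csplit| \le k \le c$, so the $c+1$ copies of any original vertex $v \in V$ cannot all lie in $\Csplit$; at least one copy of every vertex survives in $\Gsplit - \Csplit$. Define $F \coloneqq \Csplit \cap E$, the set of hyperedges whose corresponding hyperedge-vertex appears in $\Csplit$; clearly $|F| \le |\Csplit|$. I claim $F$ is an $(A,B)$-hyperedge cut in $G$. Otherwise a walk $v_0, e_1, v_1, \ldots, e_\ell, v_\ell$ from $v_0 \in A$ to $v_\ell \in B$ in $G$ avoids $F$, and I lift it to a directed path in $\Gsplit - \Csplit$ by starting at $\pi(v_0) \in \Asplit$, moving through the surviving hyperedge-vertex $e_1$ to any surviving copy of $v_1$, then through $e_2$ to a surviving copy of $v_2$, and so on, finally using intra-clique edges within the $v_\ell$-clique to reach $\pi(v_\ell) \in \Bsplit$. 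This contradicts $\Csplit$ being a vertex cut, so $k \le |F| \le |\Csplit|$.

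The main obstacle is a small bookkeeping point in Direction 2: the lifting requires that the designated copies $\pi(v_0)$ and $\pi(v_\ell)$ are themselves not deleted by $\Csplit$, and that one can freely switch among the $c+1$ copies inside any vertex-clique that still has at least one survivor. The first follows from the standard convention that a terminal vertex cut is disjoint from the terminals, i.e.\ $\Csplit \cap (\Asplit \cup \Bsplit) = \emptyset$; the second is immediate from the clique structure. Once these are in place the argument is routine, since the $(c+1)$-clique gadget is precisely the tool that prevents a small vertex cut from ``cheating'' by removing every copy of a vertex of $G$, forcing $\Csplit$ to behave essentially as a set of hyperedge-vertices and thereby yielding a hyperedge cut of no larger size in $G$.
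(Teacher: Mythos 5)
Your proof is correct and follows essentially the same route as the paper's: the forward inequality maps a minimum hyperedge cut $\partial X$ to the corresponding set of hyperedge-vertices, and the reverse inequality uses $|\Csplit|\le c$ to ensure every vertex-clique keeps a survivor, so that $\Csplit\cap E$ already pulls back to an $(A,B)$-hyperedge cut. Your version is in fact slightly more careful than the paper's (which simply asserts $\Csplit$ contains no split vertices), and the terminal-disjointness convention you flag is indeed the reading the paper implicitly relies on.
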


\begin{proof}[Proof of \Cref{claim:gammoid:reduction-to-vertex-cuts}]
First, it is straightforward to check that $|\Csplit| \le \mincut_G(A, B)$: let $(X, V\setminus X)$ be an $(A, B)$-mincut on $G$ with $|\partial X|=\mincut_G(A, B)$. Hence, $\pi(\partial X)$ is a vertex cut that separates $\pi(A)$ and $\pi(B)$ on $\Ginc$ of the same size.

Now we show that $|\Csplit| \ge \mincut_G(A, B)$. Notice that since $|\Csplit|\le \mincut_G(A, B)\le c$, $\Csplit$ does not contain split vertices from $V$. Hence, $\pi^{-1}(\Csplit)$ is a $(A, B)$-cut on $G$, and the result follows.
\end{proof}

Now, we define $\Gsplit'$ and $\Gsplit''$ from $\Gsplit$.
Both $\Gsplit'$ and $\Gsplit''$ are initialized as $\Gsplit$.
For each vertex $v\in \Gsplit$, we create an sink-copy $v'$ in $\Gsplit'$ and 
for each edge $(u, v)\in \Gsplit$ we add an edge $(u, v')$ to $\Gsplit'$.
Similarly, for each vertex $v\in \Gsplit$, we create an source-copy $v''$ in $\Gsplit''$ and for each edge $(v, w)\in \Gsplit$ we add an edge $(v'', w)$ to $\Gsplit''$.
In order to define gammoid we specify the terminal set $\Tsplit = \pi(\T)$ on both $\Gsplit'$ and $\Gsplit''$.

\subparagraph{Matroid Construction.}
We define a representable matroid $\M = \M_1\oplus \M_2\oplus \M_3$ as follows.
\begin{itemize}[itemsep=0pt]
\item Let $\M_1=(\Vsplit, {\Vsplit \choose {\le c}})$ be the uniform matroid of rank $c$ over the vertices on $\Vsplit$.
\item Let $\M_2$ be the gammoid over $\Gsplit'$ with respect to the terminal set $\Tsplit$.
\item Let $\M_3$ be the gammoid over $\Gsplit''$ with respect to the terminal set $\Tsplit$ but restricting the rank to $6c$ (getting rid of all independent sets larger than $6c$).
\end{itemize}

$\M_1$ has rank $c$, $\M_2$ has rank at most $k$, and $\M_3$ has rank at most $6c$.
Let $\J = \{ (v, v', v'') \ |\ v\in \Vsplit \cap E\}$ be a collection of subsets in $\M$.
According to the theorem of representative set (\Cref{lem:matroid-representative-set}),
there exists a representative set $\J^*$ of size at most $\rank(\M_1)\rank(\M_2)\rank(\M_3) \le 6k c^2$.

\begin{lemma}[\cite{Liu20,Marx09,Lovasz77}]\label{lem:matroid-representative-set}
Let $\M_i=(S_i, \mathcal{I}_i)$ be representable matroids over a field $\mathbb{F}$ for $1\le i\le p$,
and let $\M=\M_1\oplus \M_2\oplus \cdots \oplus \M_p$ be their direct sum. Define
\[
S_1\times S_2\times \cdots \times S_p = \{\{x_1, x_2, \ldots, x_p\}\ :\ x_i\in S_i\}.
\]
That is, the collection of subsets of $S$ that have exactly one element from each $S_i$.
Then every subset $\mathcal{J}\subseteq (S_1\times S_2\times \cdots \times S_p)$
has a representative set $\mathcal{J}^*$ of size at most 
$\prod_{i=1}^p \rank(\M_i)$, and $\mathcal{J}^*$ is computable in time $|\mathcal{J}|^{O(1)}$.
\end{lemma}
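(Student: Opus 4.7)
The plan is to use the classical tensor product construction going back to Lovász~\cite{Lovasz77} and refined by Marx~\cite{Marx09}. First, fix a matrix representation for each matroid: assign to every $x \in S_i$ a column vector $v_x^{(i)} \in \mathbb{F}^{r_i}$, where $r_i = \rank(\M_i)$, such that a subset $Z \subseteq S_i$ is independent in $\M_i$ iff the corresponding vectors are linearly independent. For each tuple $X = \{x_1,\ldots,x_p\} \in \J$ form the tensor
\[
T_X := v_{x_1}^{(1)} \otimes v_{x_2}^{(2)} \otimes \cdots \otimes v_{x_p}^{(p)} \in \mathbb{F}^{r_1} \otimes \cdots \otimes \mathbb{F}^{r_p},
\]
a vector in a space of dimension $\prod_{i=1}^p r_i$. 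Then build $\J^*$ greedily: iterate over $X \in \J$ and add $X$ to $\J^*$ whenever $T_X$ is not in the linear span of the previously selected tensors. The ambient dimension immediately forces $|\J^*| \le \prod_i \rank(\M_i)$, and a single sweep with incremental Gaussian elimination executes in $|\J|^{O(1)}$ field operations, handling the running-time claim.

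The substantive step is to verify that $\J^*$ is representative: for any extension set $Y$ disjoint from some $X \in \J$ with $X \cup Y$ independent in $\M$, we must exhibit $X^* \in \J^*$ with $X^* \cup Y$ independent. Decompose $Y$ along the direct sum as $Y = Y_1 \sqcup \cdots \sqcup Y_p$ with $Y_i \subseteq S_i$; independence in $\M$ is equivalent to $\{x_i\} \cup Y_i$ being independent in $\M_i$ for every $i$. Using the matrix representation, each such test can be encoded as the non-vanishing of a linear functional $\ell_i$ on $\mathbb{F}^{r_i}$ depending only on $Y_i$. Thus $X \cup Y$ is independent iff $(\ell_1 \otimes \cdots \otimes \ell_p)(T_X) \ne 0$, which is a \emph{single} linear condition on the tensor space. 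Since $T_X$ lies in the span of $\{T_{X^*} : X^* \in \J^*\}$ by maximality of the greedy selection, at least one $X^* \in \J^*$ must satisfy the same non-vanishing condition, yielding $X^* \cup Y$ independent in $\M$ as required.

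The main obstacle is constructing the correct functionals $\ell_i$: when $|Y_i| < r_i - 1$ the independence test is not directly a single determinant but a union of determinantal conditions over all extensions of $Y_i$ to a basis. The standard resolution is to pass to a sufficiently large algebraic extension $\mathbb{F}' \supseteq \mathbb{F}$, pad each $Y_i$ with fresh generic vectors so that the combined determinant becomes a single polynomial equation, and argue via a Schwartz--Zippel-style non-vanishing that this preserves the desired independence property over $\mathbb{F}'$; the representative set computed for the extended instance still lies in $\J$ and represents over $\mathbb{F}$ as well. Once each $\ell_i$ is in hand, the tensor product $\ell_1 \otimes \cdots \otimes \ell_p$ is a rank-one functional on $\mathbb{F}^{r_1} \otimes \cdots \otimes \mathbb{F}^{r_p}$ and the linear-algebra argument above closes the proof; polynomial running time follows since all operations take $\poly(\sum_i r_i, |\J|, \log |\mathbb{F}|)$ time under the standard assumption that the matroids come with explicit representations.
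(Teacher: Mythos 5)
The paper does not actually prove this lemma---it is imported as a black box from the cited sources (Lov\'asz, Marx, and its use in Liu / Kratsch--Wahlstr\"om)---so there is no in-paper argument to compare against; judged on its own, your proof is the standard one and is essentially sound. The tensor construction, the greedy selection of a spanning subfamily giving the $\prod_i \mathrm{rank}(\mathcal{M}_i)$ bound, and the reduction of representativity to a non-vanishing condition on $T_X$ that is linear in $T_X$ are all correct. One remark: the detour through generic padding, a field extension, and Schwartz--Zippel is avoidable, because you do not need the independence test for $\{x_i\}\cup Y_i$ to be a single linear \emph{functional}---it suffices that it be the non-vanishing of a \emph{linear map}. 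Taking $L_i(v) = v\wedge y_1\wedge\cdots\wedge y_{d_i}$ into $\Lambda^{d_i+1}(\mathbb{F}^{r_i})$, one has $L_i(v)\neq 0$ iff $\{v\}\cup Y_i$ is linearly independent, and $(L_1\otimes\cdots\otimes L_p)(T_X) = L_1(v_{x_1})\otimes\cdots\otimes L_p(v_{x_p})$ is a tensor product of nonzero vectors---hence nonzero---exactly when $X\cup Y$ is independent in the direct sum; linearity then transfers non-vanishing from $T_X$ to some $T_{X^*}$ in whose span it lies. Your padding argument does work, but only as an existence argument in the analysis (the $\ell_i$ never enter the algorithm), and as written it reads as though the computed representative set depends on the extension field and the padding, which it does not; it would be worth stating explicitly that $\mathcal{J}^*$ is fixed once and for all by the Gaussian elimination over $\mathbb{F}$, and that the functionals are constructed per $(X,Y)$ purely for the correctness proof.
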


\subparagraph{Essential Vertices.} We say that a vertex $v$ is \emph{essential} if there exists a bipartition $(\Asplit, \Bsplit)$ of $\Tsplit$ with the minimum vertex cut between $\Asplit$ and $\Bsplit$ at most $c$ and that $v$ appears in every minimum vertex cut between $\Asplit$ and $\Bsplit$.

Now the key observation is that for each essential vertex $v$, the corresponding triple $(v, v', v'')$ 
must appear in any representative set $\J^*$.
Let $v$ be an essential vertex with respect to the bipartition $(\Asplit, \Bsplit)$.
By the definition of $(5c, c)$-edge-unbreakable we may assume that $|\Bsplit| < 5c$.
Fix any minimum $(\Asplit, \Bsplit)$-cut $\Csplit$, and consider the following independent set $I=(\Csplit\setminus \{v\}, \Asplit\cup \Csplit, \Bsplit\cup \Csplit)$ in $\M$.
Now, if there is another $(u, u', u'')\in \J^*$ with $u\notin \Csplit$ then we must have $(\Csplit\setminus \{v\}\cup \{u\}, \Asplit \cup \Csplit\cup \{u'\}, \Bsplit\cup \Csplit\cup \{u''\}) \notin \mathcal{M}$ because of the following reason: (1) the first component disallows $u\in \Csplit$, (2) since $u\notin \Csplit$, $u$ is not essential, and it is impossible to have vertex disjoint paths going from $\Tsplit$ to $\Asplit\cup \Csplit\cup \{u'\}$ and vertex disjoint paths going from $\Tsplit$ to $\Bsplit \cup \Csplit\cup \{u''\}$ at the same time (otherwise $\Csplit$ does not separate $\Asplit$ and $\Bsplit$).
In addition, since $v$ is essential, there no other mincut closer to $\Asplit$ or closer to $\Bsplit$ that does not include $v$. Hence, adding $(v, v', v'')$ to $I$ will result in independent set.
Therefore, $(v, v', v'')$ is required to appear in any representatitve set. By \Cref{lem:matroid-representative-set}, the number of essential hyperedges is at most $6kc^2=O(kc^2)$ as desired.
\end{proof}

We can extend the proof to \Cref{lem:gammoid} when we allow $(d, c)$-edge-unbreakable terminal sets with an arbitrary value $d$. The same proof holds except that we restrict $\M_3$ to have rank at most $(d+c)$. So the total number of essential edges are at most $O(|\T|c(d+c))$ and they can be constructed in polynomial time.

\begin{corollary}\label{cor:dcedgeunbreakable}
Let $G=(V, E)$ be a hypergraph and let $\T\subseteq V$ be a set of degree-1 terminals.
If $\T$ is $(d, c)$-edge-unbreakable on $G$, then there exists a subset $E'\subseteq E$ with $O(|\T|c(d+c))$ hyperedges, such that $G/(E-E')$ is a $(\T, c)$-sparsifier of $G$.
Moreover, the set $E'$ can be computed in $\text{poly}(m, n)$ time.
\end{corollary}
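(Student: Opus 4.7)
The plan is to retrace the argument of \Cref{lem:gammoid} essentially verbatim, and to identify the single quantitative place where the constant $5c$ enters so that it can be replaced by the general bound $d$. All of the graph-theoretic setup---the incidence graph $\Ginc$, the $(c+1)$-clique vertex-splitting construction $\Gsplit$, the reduction (\Cref{claim:gammoid:reduction-to-vertex-cuts}) from $c$-thresholded hyperedge cuts in $G$ to vertex cuts in $\Gsplit$ through the projection $\pi$, and the auxiliary directed graphs $\Gsplit'$, $\Gsplit''$ with source- and sink-copies---is independent of the constant $5c$ and carries over without change.

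Next I would set up the direct sum matroid $\M=\M_1\oplus \M_2\oplus \M_3$ over $\Vsplit$ exactly as before, with $\M_1$ the uniform matroid of rank $c$, $\M_2$ the full gammoid on $\Gsplit'$ with terminals $\Tsplit$ (rank at most $k=|\T|$), and $\M_3$ the gammoid on $\Gsplit''$ with terminals $\Tsplit$, but now truncated to rank at most $d+c$ rather than $6c$. The collection $\J=\{(v,v',v'')\mid v\in \Vsplit\cap E\}$ is defined as before, and \Cref{lem:matroid-representative-set} produces a representative set $\J^*$ of size at most $\rank(\M_1)\cdot\rank(\M_2)\cdot\rank(\M_3)\le c\cdot k\cdot(d+c)=O(|\T|c(d+c))$, computable in time polynomial in $|\J|$ and hence in $\poly(m,n)$.

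The one point that requires care is the essentiality argument. Suppose $v$ is essential, witnessed by a terminal bipartition $(\Asplit,\Bsplit)$ of $\Tsplit$ whose minimum vertex separator has size at most $c$ and every such separator contains $v$. By the $(d,c)$-edge-unbreakable hypothesis applied to the corresponding hypergraph cut of size at most $c$, we may assume without loss of generality that $|\Bsplit|<d$; the set $\Bsplit\cup \Csplit$ that we feed into the $\M_3$ component therefore has size strictly less than $d+c$ and is independent in the truncated gammoid $\M_3$. The rest of the argument---showing that the triple $(v,v',v'')$ must appear in every representative set, because replacing it with any $(u,u',u'')$ with $u\notin\Csplit$ destroys independence either in $\M_1$ (if $u\in\Csplit$) or in the combination of $\M_2$ and $\M_3$ (by violating Menger's condition on vertex-disjoint paths)---is unchanged.

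The main (minor) obstacle I anticipate is simply verifying that this single change of the rank bound on $\M_3$ from $6c$ to $d+c$ is the only modification needed: in particular, one should check that the vertex-disjoint-paths obstruction argument used to rule out swapping in a non-essential triple does not implicitly rely on the specific value $5c$, only on $|\Bsplit\cup\Csplit|$ being within the rank of $\M_3$, which is exactly what the truncation to $d+c$ guarantees. Given that, the bound $|E'|=O(|\T|c(d+c))$ follows, and iterating the contraction (as in the proof of \Cref{lem:gammoid}, where one hyperedge outside the essential set is contracted at a time until only essential hyperedges remain) together with the polynomial-time bound on computing a representative set yields the claimed $\poly(m,n)$ construction.
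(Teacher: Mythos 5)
Your proposal is correct and follows exactly the route the paper takes: the paper derives this corollary by observing that the proof of \Cref{lem:gammoid} goes through verbatim once $\M_3$ is truncated to rank $d+c$ instead of $6c$, which is precisely the single quantitative change you identified (the unbreakability hypothesis now gives $|\Bsplit|<d$, so $|\Bsplit\cup\Csplit|<d+c$). The resulting bound $\rank(\M_1)\rank(\M_2)\rank(\M_3)\le c\cdot|\T|\cdot(d+c)$ and the polynomial-time computation of the representative set match the paper's argument.
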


\section{Existence and Uniqueness of $A$-minimal $(A,B)$-mincut} \label{sec:uniqueness-of-A-minimal-cut}

In this section, we show the proof of \Cref{lem:uniqueness-of-A-minimal-mincut}

\Lemmauniqueminimalcut*

\begin{proof}
Let $f_G: 2^V \to \mathbb{R}$ be a function and $f_G(X) = |\partial X|$. We write $f_G$ as $f$ for simplicity and claim that $f$ is a submodular function: For arbitrary $X, Y \subseteq V$, we separate the hypergraph into disjoint four parts $Y\setminus X$, $X\setminus Y$, $X \cap Y$, and $V\setminus(X\cup Y)$. By identifying $\partial X$ and $\partial Y$ as disjoint unions of crossing hyperedges between the four parts, we have  
\begin{align*}
f(X) + f(Y) & = |\partial X| + |\partial Y| \\
& = (|E(X\setminus Y, V \setminus (Y \cup X))| + |E(X\setminus Y, Y\setminus X)| \\
& \quad + |E(X \cap Y, V \setminus (Y \cup X))| + |E(X \cap Y, Y\setminus X)|) \\
& \quad + (|E(Y\setminus X,V\setminus(X \cup Y))| + |E(Y\setminus X,X\setminus Y)| \\
& \quad + |E(Y\cap X, V\setminus(X \cup Y))| + |E(Y \cap X, X \setminus Y)|) 
\end{align*}

On the other hand, similarly we have
\begin{align*}
f(X\cup Y) + f(X \cap Y) & = |\partial{(X \cup Y)}| + |\partial{(X \cap Y)}|  \\
& = (|E(X\setminus Y,V\setminus (X\cup Y))| + |E(X\cap Y, V\setminus (X\cup Y))| \\
& \quad + |E(Y\setminus X, V\setminus (X\cup Y))|) + (|E(X\cap Y, Y\setminus X)| \\ 
& \quad + |E(X\cap Y, X\setminus Y)| + |E(Y\cap X, V\setminus (X\cup Y))|) 
\end{align*}

Therefore, $f(X) + f(Y) - f(X\cup Y) - f(X\cup Y) = 2|E(X\setminus Y, Y\setminus X)| \geq 0$ and finally get 
$$f(X) + f(Y) \geq f(X\cup Y) + f(X\cap Y)$$
$f$ is submodular function.

\newcommand{\ampar}{$A$-minimal partition}

If $(X,V\setminus X)$ and $(Y,V\setminus Y)$ are $(A,B)$-mincuts, by the submodular property, $f(X\cup Y), f(X\cap Y)$ are also equal to the $(A,B)$-mincut size, which indicates $(X\cup Y, V\setminus (X\cup Y))$ and $(X\cap Y, V\setminus (X\cap Y))$ are also $(A,B)$-mincuts. Giving that $(X\cap Y, V\setminus (X\cap Y))$ is a $(A,B)$-mincut, the $X^*$ of the \ampar $(X^*,V\setminus X^*)$ is simply the intersection of all $X \subset V$ such that $(X,V \setminus X)$ is a $(A,B)$-mincut. The existence and uniqueness of \ampar \ are verified by construction.
\end{proof}

\section{Use {\normalfont{\locmincut}} to find \amcut} \label{sec:local-min-cut}

In this section we describe how to apply~\cite{forster2020computing} to prove~\Cref{thm:local-min-cut}.

\Cref{lem:use-amcut-to-identify-useful-partition} reduces the problem of testing whether a terminal partition $(A,\T\setminus A)$ is useful to checking whether the \tmcut is connected. In this subsection, we show how to further reduce the problem from a hypergraph setting to a normal directed graph setting, so that results in \cite{forster2020computing} can directly solve it.

We first convert the hypergraph into an \emph{incidence graph},

\begin{definition}[Incidence Graph of a Hypergraph]\label{def:incidence-graph}
We say a normal graph $G = (V',E')$ is the incidence graph of a hypergraph $H = (V,E)$ if $V' = V \cup E$ and $E' = \{(v,e)| v \in V, e\in E, v \in e \ in \ H\}$.
\end{definition}

We note that $G$ is a bipartite graph.
The incidence graph is a normal graph and also inherits the connectivity of the hypergraph. 

Then we replace each vertex by an in-vertex and an out-vertex in $G$. We connect a pair of in-vertex and out-vertex with 1 (c + 1) edge if the vertex represents a hyperedge (vertex) in $H$. Lastly we replace each edge by two directed edge of reverse directions. The above reduction is a common way to reduce edge to vertex connectivity and undirected to directed setting, except that we add $(c+1)$ multi-edges between certain in-vertices and out-vertices to prevent all $v \in V$ from being cut vertex in $G$. 

From the above manipulation to $G$, we get a normal directed graph $G'$. An important observation is that the \amcut in $G$ exactly corresponds to the \amcut in $H$. Therefore, the problem of finding the \amcut in $H$ is reduced to finding the \amcut in $G'$, which can be directly solved by the following result in \cite{forster2020computing}.

\begin{lemma}[\cite{forster2020computing}]
\label{lem:local-min-cut}
Let $G'$ be a normal graph.
Suppose that the $A$-minimal $(A,B)$-mincut $S$ has size at most $c$ and volume $\vol(S)=\sum_{v \in S}\deg_{G'}(v)\le \nu$ in a normal directed graph. 
Then, there is an algorithm that finds $S$ in $O(\nu c^2 \log n)$ time with high probability. 
\end{lemma}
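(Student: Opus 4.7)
\noindent\textbf{Proof proposal for \Cref{lem:local-min-cut}.}

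The plan is to instantiate the \emph{local edge-connectivity} framework: compute an $s$-$t$ maximum flow using augmenting paths, but restrict each path search to a neighborhood of bounded volume so the algorithm never inspects vertices far from $A$. Because the $A$-minimal cut $S$ has volume at most $\nu$ and size at most $c$, our exploration should be able to certify the cut using only $\tilde{O}(\nu c)$ edge probes per augmentation. The $A$-minimal property is then recovered through the standard residual-reachability characterization: after a saturating flow is pushed, $S$ is exactly the set of vertices reachable from a super-source $s$ in the residual graph.

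Concretely, first I would augment $G'$ with a super-source $s$ connected to every vertex in $A$ by edges of capacity $c+1$, and similarly a super-sink $t$ collecting $B$. I would then push unit-capacity flow from $s$ to $t$, one augmenting path at a time, where each path is found by a DFS from $s$ that aborts after exploring $O(\nu)$ residual edges. After at most $c$ successful augmentations, one more DFS attempt will fail; at that moment the residual-reachable vertices from $s$ form $S$. Each augmentation costs $O(\nu c)$ time (the extra $c$ factor comes from having to re-traverse up to $c$ saturated edges while backtracking in the DFS), for a base cost of $O(\nu c^2)$. The $\log n$ overhead arises from one of two standard devices: randomizing the order in which residual edges are scanned inside the DFS to avoid adversarial orderings, or doing a doubling guess on $\nu$ when the true volume is not known a priori. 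Both devices reduce to a black-box routine that in expectation succeeds in the stated budget with high probability after $O(\log n)$ repetitions.

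The main obstacle, I expect, is proving that the \emph{$A$-minimal} mincut is the one that the residual-reachability test produces, rather than some other $(A,B)$-mincut. This would follow by combining two facts: (i) the $A$-minimal mincut exists and is unique by submodularity of the hypergraph cut function (already recorded as \Cref{lem:uniqueness-of-A-minimal-mincut}), so there is a well-defined target; and (ii) for any saturating $s$-$t$ flow, the set of vertices reachable from $s$ in the residual graph is precisely the source-side of this unique $s$-minimal mincut. Establishing (ii) requires a short flow-theoretic argument---any vertex in $S$ is residual-reachable because otherwise one could contract it with the sink and lower the cut, and conversely no vertex outside $S$ is residual-reachable because that would yield an augmenting path through the saturated cut. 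Once (i) and (ii) are in place, the locality argument shows that the bounded DFS terminates without prematurely aborting as long as its budget exceeds $\mathrm{vol}(S) \le \nu$, which completes the proof.
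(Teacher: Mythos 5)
First, a point of comparison: the paper does not prove \Cref{lem:local-min-cut} at all --- it is imported verbatim from \cite{forster2020computing} and used as a black box (the paper only proves the hypergraph-to-digraph reduction in \Cref{thm:local-min-cut}). So there is no in-paper argument to match; what follows evaluates your sketch as a reconstruction of the cited result.

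Your part (ii) is fine and standard: after a maximum $s$-$t$ flow, the set of vertices residual-reachable from $s$ is contained in every mincut's source side and is itself a mincut, hence equals the unique $A$-minimal mincut. The genuine gap is in the algorithmic core. You find each augmenting path by a DFS from $s$ that ``aborts after exploring $O(\nu)$ residual edges,'' but an augmenting path must reach $t$ (attached to $B$), and until the flow is maximum the residual-reachable set from $s$ is \emph{not} contained in $S$ --- it necessarily extends across the cut toward $B$. Nothing relates the number of edges such a search must touch to $\vol(S)\le\nu$; the path may have to traverse essentially the whole graph. Indeed the stated bound $O(\nu c^2\log n)$ is sublinear in the graph size, so no algorithm that routes flow all the way to $t$ can meet it, and consistently the interface $\locmincut(A,\nu,c)$ in \Cref{thm:local-min-cut} does not even receive $B$. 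If your DFS hits its $O(\nu)$ budget while an augmenting path still exists, you can neither certify a cut nor continue within budget. The missing idea from \cite{forster2020computing} is precisely how to avoid ever reaching $t$: each of the $c$ iterations runs a DFS with budget $\Theta(\nu c)$, and on exceeding the budget the algorithm reverses the DFS path to a \emph{uniformly random explored edge}; since $\vol(S)\le\nu$ is a small fraction of the $\Theta(\nu c)$ explored edges, this ``fake sink'' lies outside $S$ with good probability, so each iteration saturates one boundary edge of $S$, and after $c$ iterations the DFS must get stuck inside $S$, at which point the visited set is the desired minimal side. The $\log n$ factor comes from $O(\log n)$ independent repetitions to boost the constant success probability to high probability --- the randomization is essential to correctness, not merely a device against adversarial scan orders or an artifact of doubling $\nu$. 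Without this random-sink mechanism your argument cannot be completed within the claimed running time.
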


\begin{proof}[Proof of \Cref{thm:local-min-cut}.]
Notice that each vertex in $G'$ may have $c$ more incident edges than $G$. Hence, the volume of the same \tmcut in $G'$ is boosted up by at most a factor of $c$.
\end{proof}

\section{Proof of Part (2) of \Cref{thm:main}}
\label{sec:part-2-of-main}

In this section we extend Liu's result~\cite{Liu20} and prove the following theorem.

\begin{theorem}\label{thm:liu-polytime-sparsifier}
Let $G$ be a hypergraph with $n$ vertices, $m$ hyperedges, and a set $\T\subseteq V$ of degree-1 terminals with $|\T|\le kc$.
There exists an algorithm that constructs a $(\T, c)$-sparsifier of $G$ with $O(kc^3\log^{1.5}(kc))$ hyperedges in $\mathrm{poly}(m, n)$ time.
\end{theorem}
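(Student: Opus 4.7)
The plan is to adapt Liu's~\cite{Liu20} polynomial-time construction from normal graphs to hypergraphs. The key observation is that all the building blocks used by Liu have natural hypergraph analogues already established in this paper: (i) the hypergraph expander decomposition of Long--Saranurak (\Cref{lem:expander-decompose-for-hypergraphs}), (ii) the gammoid-based sparsifier for $(d, c)$-edge-unbreakable degree-$1$ terminal sets (\Cref{cor:dcedgeunbreakable}), and (iii) the divide-and-conquer framework for combining cluster sparsifiers (\Cref{lem:combine-sparsifiers}).

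The algorithm iterates a ``decompose-and-sparsify'' round in the spirit of \Cref{alg:sparsify-fast}, but with the \phisparsify step replaced by an invocation of \Cref{cor:dcedgeunbreakable}. Replacing \phisparsify is crucial because \Cref{cor:dcedgeunbreakable} runs in $\poly(m,n)$ time without any $(rc)^{O(rc)}$ blow-up. Concretely, at the start of each round we have a hypergraph $G_i$ with $m_i$ hyperedges and a set $\T$ of degree-$1$ terminals of size at most $kc$ (after the duplication reduction used in the proof of \Cref{thm:existence-of-sparsifier}); a round consists of:
\begin{enumerate}
\item Run \expdec on $G_i$ with conductance parameter $\phi_i$ to obtain clusters $\{V_j\}$ with at most $O(\phi_i m_i \polylog(n))$ inter-cluster hyperedges.
\item For each cluster, form the anchored induced sub-hypergraph $\hat G_i[V_j]$ and its enlarged degree-$1$ terminal set $\T_j = \T|_{V_j} \cup A_j$. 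Since degree-$1$ terminals can reside only on the small side of any cut of value at most $c$, \Cref{eq:hyperedge-upper-bound} implies that $\T_j$ is $(c/\phi_i, c)$-edge-unbreakable in $\hat G_i[V_j]$.
\item Apply \Cref{cor:dcedgeunbreakable} with $d = c/\phi_i$ to replace each $\hat G_i[V_j]$ by a $(\T_j,c)$-sparsifier with $O(|\T_j|\, c(c/\phi_i + c))$ hyperedges in polynomial time.
\item Reassemble the cluster sparsifiers via \Cref{lem:combine-sparsifiers}.
\end{enumerate}

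Summing over clusters and using $\sum_j |\T_j| \le |\T| + O(\phi_i m_i \polylog(n))$, a single round yields
\[
m_{i+1} \;\le\; O\!\bigl((|\T| + \phi_i m_i \polylog(n))\,c^2/\phi_i\bigr) \;+\; O(\phi_i m_i \polylog(n)).
\]
Choosing $\phi_i$ to balance the cluster-sparsifier contribution against the boundary-hyperedge contribution (following Liu's two-level optimization, in which each balancing level contributes a $\sqrt{\log}$ factor) gives a recurrence of the form $m_{i+1} \le m_i/2 + O(kc^3\log^{1.5} m_i)$. Iterating for $O(\log m)$ rounds drives $m_i$ down to the fixed point of this recurrence, which solves to $m^\ast = O(kc^3\log^{1.5}(kc))$ since at the fixed point $\log m^\ast = \Theta(\log(kc))$; this in particular removes the final $n$-dependence.

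The main obstacle is extracting the precise $\log^{1.5}$ exponent rather than a generic $\polylog(kc)$: naively balancing the two terms above only produces $\log^{O(1)}$. The tight exponent requires Liu's two-level balancing analysis, and the nontrivial work is to verify that this analysis still applies in the hypergraph setting. Fortunately, (a) the boundary bound from \Cref{lem:expander-decompose-for-hypergraphs} matches the normal-graph bound up to polylog factors, (b) \Cref{cor:dcedgeunbreakable} matches the normal-graph gammoid bound up to constants, and (c) the anchoring step introduces only two anchor vertices per separated hyperedge (not $r$), so the number of newly created terminals is still $O(|\partial V_j|)$ per cluster. Hence Liu's accounting survives essentially verbatim, and the final sparsifier has $O(kc^3\log^{1.5}(kc))$ hyperedges as claimed.
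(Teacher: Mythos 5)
Your high-level plan---swap \phisparsify for the polynomial-time gammoid routine inside an iterated expander-decomposition loop---has a gap that I do not think can be patched: the size recurrence does not decrease. The reason \Cref{alg:sparsify-fast} makes progress is that \phisparsify outputs only $O(|\T_j|c^3)$ hyperedges per cluster, with \emph{no dependence on $\phi$}; the parameter $\phi$ in the anchor-terminal count $\Theta(\phi m_i \polylog(n))$ is then free to be chosen small enough that this count times $c^3$ is at most $m_i/4$. In your round, each cluster is instead sparsified via \Cref{cor:dcedgeunbreakable} with $d=\Theta(c/\phi_i)$ (in fact $d=\Theta(rc/\phi_i)$: a single hyperedge of $E(X)$ can contain up to $r$ degree-one terminals, so \Cref{eq:hyperedge-upper-bound} only yields $(rc\phi_i^{-1},c)$-edge-unbreakability, which also threatens to reintroduce an $r$ factor), so the per-terminal cost is $\Theta(c^2/\phi_i)$. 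Multiplying by the $\Theta(\phi_i m_i\polylog(n))$ anchor terminals created in that same round, the $\phi_i$ cancels: the middle term of your displayed recurrence is $\Theta(m_i c^2\polylog(n))\ge m_i$ for \emph{every} choice of $\phi_i$. There is no balancing point, the claimed $m_{i+1}\le m_i/2+O(kc^3\log^{1.5}m_i)$ does not follow, and the iteration never converges. Nor can you rescue this by arguing that only essential hyperedges survive each round: \Cref{cor:dcedgeunbreakable} returns a representative set of size $O(|\T|c(d+c))$, not the set of essential hyperedges, and certifying essentiality in polynomial time is exactly what is missing.

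The paper takes a different route that avoids expander decomposition entirely for Part (2). It runs the divide-and-conquer of \Cref{alg:sparsify-slow} directly, replacing the intractable exact unbreakability test by an approximate non-uniform sparsest vertex cut on the incidence graph (\Cref{thm:fhl08}, ratio $\beta=O(\sqrt{\log(kc)})$). Either the solver certifies that the current terminal set is $(O(c\beta\log(kc)),c)=(O(c\log^{1.5}(kc)),c)$-edge-unbreakable---a base case handled by a single invocation of \Cref{cor:dcedgeunbreakable}, which is where the $\log^{1.5}$ factor actually originates ($\sqrt{\log}$ from the approximation ratio times $\log$ from the sparsity threshold), not from a two-level balancing of a recurrence---or it returns a cut crossed by at most $\frac{1}{2\log(kc)}|\T'\cap V_i|$ hyperedges, so the terminal count inflates by only a $(1+1/\log(kc))$ factor per recursion level and stays $O(|\T|)$ overall. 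Because the gammoid routine is invoked only at the leaves of this recursion rather than in every round on every cluster, the $c^2 d$ cost is paid once per terminal and the $O(kc^3\log^{1.5}(kc))$ bound follows.
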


\newcommand{\polytimesparsifier}{\textsc{PolyTimeSparsifier}}
\newcommand{\scn}{\beta}

The main runtime bottleneck to \algslow (\Cref{alg:sparsify-slow}) is in \Cref{line:split-condition}.
That is, 
it is not known if there is an polynomial time algorithm that tests whether the current terminal set $\T'$ is $(5c, c)$-edge-unbreakable in the subproblem $G'$.
Fortunately, this problem can be modeled as a sparsest cut problem with a non-uniform demand.
If we allow a $\text{polylog}(k)$ factor on the number of hyperedges in the final $(\T, c)$-sparsifer,
we don't have to apply divide and conquer framework until we have a $(5c, c)$-edge-unbreakable terminal set.
For example, a $(c\log^{1.5}(kc), c)$-edge-unbreakable terminal set suffices to obtain a sparsifier of $O(kc^2\log^{1.5}(kc))$ hyperedges by \Cref{cor:dcedgeunbreakable}.

Let us now describe the sparsest cut problem (with non-uniform demand) we are reducing to.
We first define \emph{terminal expansion}:

\begin{definition}[terminal expansion]
A hypergraph $G=(V, E)$ with terminal set $\T\subseteq V$ is a $\phi$-terminal expander if
\[
\frac{|\partial X|}{\min\{ |X\cap \T|, |(V\setminus X)\cap \T| \}}\ge \phi \ \ \text{for any subset of vertices $X\subseteq V$.}
\]
\end{definition}

We remark that whenever $\phi=O(\log^{-1.5}(kc))$, if the graph $G$ with a terminal set $\T$ is a $\phi$-terminal expander, then $\T$ is $(O(c\log^{1.5}(kc)), c)$-edge-unbreakable.
Therefore, the base cases are solved.
All we have to do now is to find a proper way to check if the hypergraph in a subproblem is a $\phi$-terminal expander. Furthermore, if it is not the case, we should be able to identify a good-enough sparse cut $(V_1, V_2)$ that enables the divide and conquer approach.

Let us now translate the edge cuts in this hypergraph $G$ to vertex cuts in the incidence graph representation (see \Cref{def:incidence-graph}).
On this incidence graph $\Ginc=(\Vinc=V\cup E, \Einc)$ we define two functions over $\Vinc$:
\[
\begin{array}{ll}
\pi_1(x)=\begin{cases}
1 & \text{if $x\in E$,}\\
\infty & \text{otherwise, $x\in V$.}
\end{cases}&
\pi_2(x)=
\begin{cases}
1 & \text{if $x\in \T$,}\\
0 & \text{otherwise.}
\end{cases}
\end{array}
\]
It is straightforward to check that among all vertex partition $(A, B, S\subseteq E)$,
\[
\mathsf{OPT} := \min_{A',B',S'}
\frac{\pi_1(S')}{\pi_2(A'\cup S')\pi_2(B'\cup S')}
\] is 2-approximating the terminal expansion.
The following theorem gives a sparsest vertex cut solver that returns an $O(\sqrt{\log (kc)})$-approximated\footnote{The approximate ratio is actually $\sqrt{\log |\text{supp}(\pi_2)|}$. Notice that the support of $\pi_2$ is $|\T|\le kc$.} sparse cut in polynomial time.
\begin{theorem}[\cite{FeigeHL08} Theorem 3.12]\label{thm:fhl08}
Given a graph $G=(V, E)$ and vertex weights $\pi_1, \pi_2: V\to \mathbb{R}_+$, there exists a polynomial-time algorithm which computes a vertex separator $(A, B, S)$ for which \[
\frac{\pi_1(S)}{\pi_2(A\cup S)\pi_2(B\cup S)} \le \scn \cdot \mathsf{OPT},
\]where $\scn=O(\sqrt{\log (kc)})$ is the approximation ratio.
\end{theorem}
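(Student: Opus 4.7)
The plan is to follow the Feige--Hajiaghayi--Lee framework, which combines a semidefinite relaxation with $\ell_2^2$-triangle inequalities and an Arora--Rao--Vazirani-style rounding via well-separated sets, adapted to non-uniform vertex separation. The approximation guarantee $\scn = O(\sqrt{\log k})$ with $k = |\mathrm{supp}(\pi_2)| \le kc$ will come entirely from the rounding step; the SDP itself is a relaxation.

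First, I would write an SDP relaxation. Assign to each vertex $v$ a vector $x_v \in \mathbb{R}^d$ together with a scalar $\zeta_v \in [0,1]$ representing the fractional extent to which $v$ lies in the separator $S$. The objective is $\min \sum_v \pi_1(v)\zeta_v$ subject to: (i) the $\ell_2^2$ triangle inequality $\|x_u - x_w\|^2 \le \|x_u - x_v\|^2 + \|x_v - x_w\|^2$ for every triple $u,v,w$; (ii) an edge constraint $\|x_u - x_v\|^2 \le \zeta_u + \zeta_v$ for every $(u,v) \in E$, which encodes that the metric can only jump across an edge if one endpoint is paid for in $S$; and (iii) a normalization $\sum_{u,v} \pi_2(u)\pi_2(v)\|x_u - x_v\|^2 = 1$, which fixes the denominator of the target ratio. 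Any integral separator $(A,B,S)$ yields a feasible SDP solution (up to scaling) by choosing $x_v = +e$ on $A$, $x_v = -e$ on $B$, and $x_v = 0$ on $S$ for some unit vector $e$, so $\mathsf{SDP} \le \mathsf{OPT}$.

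Next, I would round the SDP. The heart of the argument is the ARV/FHL structure theorem, applied to the $\ell_2^2$ semimetric $d(u,v) := \|x_u - x_v\|^2$ restricted to $\mathrm{supp}(\pi_2)$: there exist subsets $L, R \subseteq \mathrm{supp}(\pi_2)$ of $\pi_2$-weight $\Omega(1)$ whose pairwise distance is at least $\Delta = \Omega(1/\sqrt{\log k})$. I then extend $d$ to all of $V$ and sweep a ball of radius $r \in (0, \Delta/2)$ around $L$, examining the vertex boundary $S_r := \{v : d(v,L) \le r \text{ and some neighbor } u \text{ satisfies } d(u,L) > r\}$, and pick the radius minimizing $\pi_1(S_r)$. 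The edge constraint (ii) forces $S_r$ to be a valid vertex separator between the ball $\{v : d(v,L) \le r\}$ (which contains $L$) and its complement (which contains $R$), and an averaging argument over $r$ shows $\pi_1(S_r) \le O(\sqrt{\log k})$ times the SDP value, which, after normalization, gives the ratio bound $\scn \cdot \mathsf{OPT}$.

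The main obstacle will be proving the measure-weighted ARV structure theorem under the non-uniform measure $\pi_2$. The original ARV argument extracts $\Omega(n)$-size well-separated sets via random $\pm 1$ projections combined with a matching/chaining argument; extending this to the $\pi_2$-weighted setting requires carrying weighted masses throughout and ensuring that the ``chunky'' sets produced by the random projection have $\pi_2$-weight $\Omega(1)$ rather than merely cardinality $\Omega(k)$. A secondary, though less delicate, point is verifying that the level-set sweep produces a genuine vertex (not edge) separator; here the shape of constraint (ii) does the essential work, since any edge crossing a level set of $d(\cdot, L)$ must charge $\Omega(1)$ of its $\|x_u - x_v\|^2$ length to $\zeta_u + \zeta_v$, which is then charged to $\pi_1(S_r)$ for some $r$ via the averaging step.
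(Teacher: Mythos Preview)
The paper does not prove this statement at all: \Cref{thm:fhl08} is quoted as a black-box result from Feige, Hajiaghayi, and Lee~\cite{FeigeHL08}, and the present paper only \emph{applies} it inside the proof of \Cref{thm:liu-polytime-sparsifier}. There is therefore nothing in the paper to compare your proposal against.

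Your sketch is a reasonable high-level outline of the actual Feige--Hajiaghayi--Lee argument (SDP with $\ell_2^2$ triangle inequalities, per-vertex ``separator mass'' variables, and ARV-style rounding via well-separated sets), so in spirit you are reproducing the cited proof rather than offering an alternative. If you intend to include a self-contained proof, be aware that the two points you flag as obstacles---the weighted ARV structure theorem under a non-uniform measure $\pi_2$, and the passage from the level-set sweep to a genuine \emph{vertex} separator with the correct $\pi_1$-cost bound---are exactly where the real work in \cite{FeigeHL08} lies, and your averaging paragraph is currently too informal to stand on its own. For the purposes of the present paper, however, simply citing \cite{FeigeHL08} (as the authors do) is the intended and sufficient course.
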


Now, we can use an approximate sparsest cut solver that either (1) certifies that the $\T'$ is $(c \log^{1.5}(kc), c)$-edge-unbreakable, or (2) finds a violating partition $(V_1, V_2)$ such that
the number of crossing hyperedges
is at most $\frac{1}{2 \log (kc)}|\T'\cap V_i|$ for both $i\in \{1, 2\}$.
Notice that it implies that in the subproblems the amount of terminal vertices are increased by only a $1/\log (kc)$ factor since we add 2 anchor vertices to each separated hyperedge.
We summarize the algorithm in \Cref{alg:sparsify-polytime}.

\begin{algorithm}[h]
\DontPrintSemicolon
\nonl\polytimesparsifier$(G, \T, c)$\\
\KwIn{hypergraph $G = (V, E)$, terminal set $\T$, constant $c$. }
\KwOut{a $(\T,c)$-sparsifier $H$.}
$\phi \gets 1/(4 \scn \log (kc))$.\\
Run a sparse cut solver and obtain a bipartition $(V_1, V\setminus V_2)$.\\
Compute terminal expansion $\hat{\phi} := \frac{|\partial V_1|}{\min\{|V_1\cap \T|, |V_2\cap \T|\}}$.\\ 
\eIf{$\hat{\phi} \ge 2\scn \phi$}{
\tcp{guaranteed $\phi$-terminal expander.}
\Return a $(\T, c)$-sparsifier using \Cref{cor:dcedgeunbreakable}.
}{
\tcp{found a sparse cut.}
Apply divide and conquer framework (\Cref{alg:divide-and-conquer}) on $(V_1, V_2)$.
}
\caption{\textsc{PolyTimeSparsifier}}
\label{alg:sparsify-polytime}
\end{algorithm}

\begin{proof}[Proof of \Cref{thm:liu-polytime-sparsifier}]
The correctness is entirely based on the divide and conquer framework (\Cref{lem:combine-sparsifiers}). 
The runtime is $\text{poly}(mn)$ because of \Cref{cor:dcedgeunbreakable} and \Cref{thm:fhl08}.
Now, in order to give an upper bound of hyperedges from the returned $(\T, c)$-sparsifier, it suffices to bound the total number of terminal vertices over all base cases.

Let $f(x)$ be an upper bound of total number of terminal vertices over all base cases when started with a terminal set of $x$ vertices.
We first assume that $f$ is a non-decreasing function and $f$ is concave. So, according to the divide and conquer criteria, we know that if a terminal set of $x$ vertices is split into $y$ terminal vertices and $x-y$ terminal vertices (without loss of generality $y\le x-y$), 
the number of hyperedges crossing the cut is at most \[
\hat{\phi}y \le 2\scn\phi y = \frac{1}{2\log (kc)} y.
\]
Hence, by adding 2 anchor vertices to each separated hyperedge we have 
\begin{align*}
f(x) &\le f\left(y+2\textstyle\frac{1}{2\log (kc)}y\right) + f\left(x-y + 2\textstyle\frac{1}{2\log (kc)}y\right)\\
&\le 2f\left(\left(1+\textstyle\frac{1}{\log (kc)}\right) \frac{x}{2}\right)\tag{by concavity and non-decreasing}\\
&\le 2^{\log x} f\left(\textstyle\left(1+\frac{1}{\log (kc)}\right)^{\log x}\right) \tag{induction on $x$}\\
&\le 2^{\log x} f(\exp(1)) \tag{as long as $x \le kc$}\\
&= \Theta(x). 
\end{align*}

So, choosing $f:[1,kc]\to \mathbb{R}$ to be a linear function serves as an upper bound. By \Cref{cor:dcedgeunbreakable} the total number of hyperedges in the returned $(\T, c)$-sparsifier is then $O(|\T|c^2\log^{1.5}(kc)) = O(kc^3\log^{1.5}(kc))$.
\end{proof}

\end{document}